\definecolor{lightgrey}{rgb}{.9,.9,.9}
\newcommand{\ottdrule}[4][]{{\displaystyle\frac{\begin{array}{l}#2\end{array}}{#3}\quad\ottdrulename{#4}}}
\newcommand{\ottusedrule}[1]{\[#1\]}
\newcommand{\ottpremise}[1]{ #1 \\}
\newenvironment{ottdefnblock}[3][]{ \framebox{\mbox{#2}} \quad #3 \\[0pt]}{}
\newcommand{\ottnt}[1]{\mathit{#1}}
\newcommand{\ottmv}[1]{\mathit{#1}}
\newcommand{\ottkw}[1]{\mathbf{#1}}
\newcommand{\ottsym}[1]{#1}
\newcommand{\ottcom}[1]{\text{#1}}
\newcommand{\ottdrulename}[1]{\textsc{#1}}
\newcommand{\ottrulehead}[3]{$#1$ & & $#2$ & & & \multicolumn{2}{l}{#3}}
\newcommand{\ottprodline}[6]{& & $#1$ & $#2$ & $#3 #4$ & $#5$ & $#6$}
\newcommand{\ottfirstprodline}[6]{\ottprodline{#1}{#2}{#3}{#4}{#5}{#6}}
\newcommand{\ottprodnewline}{\\}
\newcommand{\ottinterrule}{\\[5.0mm]}
\newcommand{\otti}{
\ottrulehead{\ottnt{i}}{::=}{\ottcom{instructions}}\ottprodnewline
\ottfirstprodline{|}{ \textsfb{var} ~ \mathit{x} \nobreak\hspace{0pt}=\nobreak\hspace{0pt} \ottnt{e} }{}{}{}{\ottcom{variable declaration}}\ottprodnewline
\ottprodline{|}{\textsfb{drop} \, \mathit{x}}{}{}{}{\ottcom{drop a variable from scope}}\ottprodnewline
\ottprodline{|}{ \mathit{x} \nobreak\hspace{0pt} \leftarrow \nobreak\hspace{0pt} \ottnt{e} }{}{}{}{\ottcom{assignment}}\ottprodnewline
\ottprodline{|}{ \textsfb{array} ~ \mathit{x} [{ \ottnt{e}  \kern 0.06em}] }{}{}{}{\ottcom{array allocation}}\ottprodnewline
\ottprodline{|}{ \textsfb{array} ~ \mathit{x} \nobreak\hspace{0pt}=\nobreak\hspace{0pt}[{ e^*  \kern 0.08em}] }{}{}{}{\ottcom{array creation}}\ottprodnewline
\ottprodline{|}{ \mathit{x} [{ \ottnt{e_{{\mathrm{1}}}}  \kern 0.05em}]\nobreak\hspace{0pt} \leftarrow \nobreak\hspace{0pt} \ottnt{e_{{\mathrm{2}}}} }{}{}{}{\ottcom{array assignment}}\ottprodnewline
\ottprodline{|}{ \textsfb{branch} ~ \ottnt{e} ~ L_{{\mathrm{1}}} ~ L_{{\mathrm{2}}} }{}{}{}{\ottcom{conditional branch}}\ottprodnewline
\ottprodline{|}{\textsfb{goto} \, L}{}{}{}{\ottcom{unconditional branch}}\ottprodnewline
\ottprodline{|}{\textsfb{print} \, \ottnt{e}}{}{}{}{\ottcom{print}}\ottprodnewline
\ottprodline{|}{\textsfb{read} \, \mathit{x}}{}{}{}{\ottcom{read}}\ottprodnewline
\ottprodline{|}{ \textsfb{call} ~ \mathit{x} \nobreak\hspace{0pt}=\nobreak\hspace{0pt} \ottnt{e} ( e^* ) }{}{}{}{\ottcom{function call}}\ottprodnewline
\ottprodline{|}{\textsfb{return} \, \ottnt{e}}{}{}{}{\ottcom{return}}\ottprodnewline
\ottprodline{|}{ \textsfb{assume} ~ e^* ~ \ottkw{else} ~ \xi ~ \tilde{\xi}^* }{}{}{}{\ottcom{assume instruction}}\ottprodnewline
\ottprodline{|}{\textsfb{stop}}{}{}{}{\ottcom{terminate execution}}}
\newcommand{\otte}{
\ottrulehead{\ottnt{e}}{::=}{\ottcom{expression}}\ottprodnewline
\ottfirstprodline{|}{\ottnt{se}}{}{}{}{\ottcom{simple expression}}\ottprodnewline
\ottprodline{|}{ \mathit{x} [{ \ottnt{e} }] }{}{}{}{\ottcom{array access}}\ottprodnewline
\ottprodline{|}{ \textsf{length} (  \ottnt{se}  \kern 0.04em ) }{}{}{}{\ottcom{array length}}\ottprodnewline
\ottprodline{|}{ \texttildelow   \ottnt{e} }{}{}{}{\ottcom{unary operation}}\ottprodnewline
\ottprodline{|}{ \ottnt{se_{{\mathrm{1}}}}   \star   \ottnt{se_{{\mathrm{2}}}} }{}{}{}{\ottcom{binary operation}}\ottprodnewline
\ottprodline{|}{ primop  (  se^*  ) }{}{}{}{\ottcom{arbitrary prmitive}}\ottprodnewline
\ottprodline{|}{\ottkw{box} \, \ottsym{(}  se^*  \ottsym{)}}{}{}{}{}}
\newcommand{\otteprint}{
\ottrulehead{\mathit{e}}{::=}{\ottcom{expression}}\ottprodnewline
\ottfirstprodline{|}{\ottnt{se}}{}{}{}{\ottcom{simple expression}}\ottprodnewline
\ottprodline{|}{ \mathit{x} [  \ottnt{se}  \kern 0.075em ] }{}{}{}{\ottcom{array access}}\ottprodnewline
\ottprodline{|}{ \textsf{length}  (  \ottnt{se}  ) }{}{}{}{\ottcom{array length}}\ottprodnewline
\ottprodline{|}{primop \, \ottsym{(}  se^*  \ottsym{)}}{}{}{}{\ottcom{primitive operation}}}
\newcommand{\ottse}{
\ottrulehead{\ottnt{se}}{::=}{\ottcom{simple expressions}}\ottprodnewline
\ottfirstprodline{|}{\ottnt{lit}}{}{}{}{\ottcom{literals}}\ottprodnewline
\ottprodline{|}{ { F } }{}{}{}{\ottcom{function reference}}\ottprodnewline
\ottprodline{|}{\mathit{x}}{}{}{}{\ottcom{variables}}}
\newcommand{\otttargetXXtriple}{
\ottrulehead{\ottnt{target\_triple}}{::=}{}\ottprodnewline
\ottfirstprodline{|}{ F .\kern -0.5pt V .\kern -0.5pt L }{}{}{}{}}
\newcommand{\otttopXXframeXXmap}{
\ottrulehead{\mathit{VA}}{::=}{}\ottprodnewline
\ottfirstprodline{|}{ [{ \ottnt{top\_frame\_map\_cont}  \kern 0.033em}] }{}{}{}{}\ottprodnewline
\ottprodline{|}{ \mathit{VA}'  \circ  \mathit{VA} }{}{}{}{}\ottprodnewline
\ottprodline{|}{ \mathsf{Id} }{}{}{}{}}
\newcommand{\ottlit}{
\ottrulehead{\ottnt{lit}}{::=}{\ottcom{literals}}\ottprodnewline
\ottfirstprodline{|}{\dots , -1 , 0 , 1 , \dots}{}{}{}{\ottcom{numbers}}\ottprodnewline
\ottprodline{|}{\textsf{\textbf{nil} }~|~\textsf{\textbf{true} }~|~\textsf{\textbf{false} }}{}{}{}{\ottcom{others}}}
\newcommand{\ottv}{
\ottrulehead{\ottnt{v}}{::=}{\ottcom{values}}\ottprodnewline
\ottfirstprodline{|}{\ottnt{lit}}{}{}{}{}\ottprodnewline
\ottprodline{|}{F}{}{}{}{}\ottprodnewline
\ottprodline{|}{ \mathit{a} }{}{}{}{}}
\newcommand{\ottfunref}{
\ottrulehead{F}{::=}{}\ottprodnewline
\ottfirstprodline{|}{ \mathsf{F} }{}{}{}{}\ottprodnewline
\ottprodline{|}{ \mathsf{F\kern -0.33pt \scalebox{0.83}{$\mathsf{ \ottmv{num} }$} } }{}{}{}{}\ottprodnewline
\ottprodline{|}{\ottmv{F}}{}{}{}{}\ottprodnewline
\ottprodline{|}{ \scalebox{0.97}{$\mathsf{ \ottmv{alphanum} }$} }{}{}{}{}\ottprodnewline
\ottprodline{|}{ fun(  \ottnt{v}  ) }{}{}{}{}\ottprodnewline
\ottprodline{|}{ \mathsf{main} }{}{}{}{}}
\newcommand{\ottversref}{
\ottrulehead{V}{::=}{}\ottprodnewline
\ottfirstprodline{|}{ \mathsf{V} }{}{}{}{}\ottprodnewline
\ottprodline{|}{ \mathsf{V\kern -0.5pt \scalebox{0.83}{$\mathsf{ \ottmv{num} }$} } }{}{}{}{}\ottprodnewline
\ottprodline{|}{ \mathsf{V\kern -0.25pt \scalebox{0.87}{$\mathsf{ \ottmv{alphanum} }$} } }{}{}{}{}\ottprodnewline
\ottprodline{|}{\ottmv{V}}{}{}{}{}\ottprodnewline
\ottprodline{|}{ \mathsf{V\kern -0.25pt \scalebox{0.87}{$\mathsf{ \ottmv{alphanum} }$} } }{}{}{}{}\ottprodnewline
\ottprodline{|}{ \mathsf{active} }{}{}{}{}}
\newcommand{\ottlabel}{
\ottrulehead{L}{::=}{}\ottprodnewline
\ottfirstprodline{|}{  }{}{}{}{}\ottprodnewline
\ottprodline{|}{ \phantom{ L } }{}{}{}{}\ottprodnewline
\ottprodline{|}{ \mathsf{L} }{}{}{}{}\ottprodnewline
\ottprodline{|}{ \mathsf{L\kern -0.25pt \scalebox{0.83}{$\mathsf{ \ottmv{num} }$} } }{}{}{}{}\ottprodnewline
\ottprodline{|}{ \mathsf{L\kern -0.25pt \scalebox{0.87}{$\mathsf{ \ottmv{alphanum} }$} } }{}{}{}{}\ottprodnewline
\ottprodline{|}{ L_{\dots} }{}{}{}{}\ottprodnewline
\ottprodline{|}{\ottmv{L}}{}{}{}{}\ottprodnewline
\ottprodline{|}{ \mathsf{L\kern -0.25pt \scalebox{0.87}{$\mathsf{ \ottmv{alphanum} }$} } }{}{}{}{}\ottprodnewline
\ottprodline{|}{ ( L \kern-1pt+\kern-2pt1) }{}{}{}{}\ottprodnewline
\ottprodline{|}{ \mathsf{start}( \ottnt{I} ) }{}{}{}{}}
\newcommand{\ottAct}{
\ottrulehead{\mathit{A}}{::=}{\ottcom{I/O action}}\ottprodnewline
\ottfirstprodline{|}{ \textsf{print}~ \ottnt{lit} }{}{}{}{}\ottprodnewline
\ottprodline{|}{ \textsf{read}~ \ottnt{lit} }{}{}{}{}\ottprodnewline
\ottprodline{|}{ \textsf{stop} }{}{}{}{}}
\newcommand{\ottActopt}{
\ottrulehead{\mathit{A}_\tau}{::=}{}\ottprodnewline
\ottfirstprodline{|}{\mathit{A}}{}{}{}{}\ottprodnewline
\ottprodline{|}{ \tau }{}{}{}{\ottcom{silent label}}}
\newcommand{\ottT}{
\ottrulehead{\mathit{T}}{::=}{\ottcom{action trace}}\ottprodnewline
\ottfirstprodline{|}{}{}{}{}{\ottcom{(empty trace)}}\ottprodnewline
\ottprodline{|}{ \mathit{A} }{}{}{}{}\ottprodnewline
\ottprodline{|}{ \mathit{A}_\tau }{}{}{}{}\ottprodnewline
\ottprodline{|}{ \mathit{T} ~ \mathit{A} }{}{}{}{}\ottprodnewline
\ottprodline{|}{ \mathit{T} ~ \mathit{A}_\tau }{}{}{}{}}
\newcommand{\ottalignedXXfundecl}{
\ottrulehead{\ottnt{aligned\_fundecl}}{::=}{}\ottprodnewline
\ottfirstprodline{|}{ F  (  x^*  \kern 0.04em ) \\
                                                                   \begin{array}{llll}
                                                                     \kern 1.5pt  \ottnt{aligned\_F} 
                                                                   \end{array} \\ }{}{}{}{}\ottprodnewline
\ottprodline{|}{ F  \dots }{}{}{}{}}
\newcommand{\ottdruleLiteral}[1]{\ottdrule[#1]{%
}{
 \ottnt{E} ~ \ottnt{lit}   \rightharpoonup   \ottnt{lit} }{%
{\ottdrulename{Literal}}{}%
}}
\newcommand{\ottdruleFunref}[1]{\ottdrule[#1]{%
}{
 \ottnt{E} ~  { F }    \rightharpoonup   F }{%
{\ottdrulename{Funref}}{}%
}}
\newcommand{\ottdruleLookup}[1]{\ottdrule[#1]{%
}{
 \ottnt{E} ~ \mathit{x}   \rightharpoonup    \ottnt{E}  (  \mathit{x}  )  }{%
{\ottdrulename{Lookup}}{}%
}}
\newcommand{\ottdruleSimpleExp}[1]{\ottdrule[#1]{%
\ottpremise{ \ottnt{E} ~ \ottnt{se}   \rightharpoonup   \ottnt{v} }%
}{
 \ottnt{M} ~ \ottnt{E} ~ \ottnt{se}   \rightarrow   \ottnt{v} }{%
{\ottdrulename{SimpleExp}}{}%
}}
\newcommand{\ottdruleVecLen}[1]{\ottdrule[#1]{%
\ottpremise{ \ottnt{E} ~ \ottnt{se}   \rightharpoonup    \mathit{a}   \quad  \ottnt{M}  (  \mathit{a}  )   \ottsym{=}  \ottsym{[}  \ottnt{v_{{\mathrm{1}}}}  \ottsym{,} \, .. \, \ottsym{,}  \ottnt{v_{\ottmv{n}}}  \ottsym{]}}%
}{
 \ottnt{M} ~ \ottnt{E} ~  \textsf{length} (  \ottnt{se}  \kern 0.04em )    \rightarrow    \ottmv{n}  }{%
{\ottdrulename{VecLen}}{}%
}}
\newcommand{\ottdruleVecAccess}[1]{\ottdrule[#1]{%
\ottpremise{ \mathit{a}   \mathrel{\stackrel{\mathsf{def} }{=} }   \ottnt{E}  (  \mathit{x}  )  \quad  \ottnt{M}  (  \mathit{a}  )   \ottsym{=}  \ottsym{[}  \ottnt{v_{{\mathrm{0}}}}  \ottsym{,} \, .. \, \ottsym{,}  \ottnt{v_{\ottmv{m}}}  \ottsym{]}}%
\ottpremise{ \ottnt{E} ~ \ottnt{se}   \rightharpoonup    \ottmv{n}   \quad  0 \leq  \ottmv{n}  \leq  \ottmv{m} }%
}{
 \ottnt{M} ~ \ottnt{E} ~  \mathit{x} [{ \ottnt{se} }]    \rightarrow   \ottnt{v_{\ottmv{n}}} }{%
{\ottdrulename{VecAccess}}{}%
}}
\newcommand{\ottdrulePrimop}[1]{\ottdrule[#1]{%
\ottpremise{ \ottnt{E} ~ \ottnt{se_{{\mathrm{1}}}}   \rightharpoonup   \ottnt{v_{{\mathrm{1}}}}  \quad .. \quad  \ottnt{E} ~ \ottnt{se_{\ottmv{n}}}   \rightharpoonup   \ottnt{v_{\ottmv{n}}} }%
}{
 \ottnt{M} ~ \ottnt{E} ~  primop  (  \ottnt{se_{{\mathrm{1}}}}  \ottsym{,} \, .. \, \ottsym{,}  \ottnt{se_{\ottmv{n}}}  )    \rightarrow   [\![ primop ]\!]  \ottsym{(}  \ottnt{v_{{\mathrm{1}}}}  \ottsym{,} \, .. \, \ottsym{,}  \ottnt{v_{\ottmv{n}}}  \ottsym{)} }{%
{\ottdrulename{Primop}}{}%
}}
\newcommand{\ottdruleDecl}[1]{\ottdrule[#1]{%
\ottpremise{  \ottnt{I} ( L )  =~   \textsfb{var} ~ \mathit{x} \nobreak\hspace{0pt}=\nobreak\hspace{0pt} \ottnt{e}    \quad  \ottnt{M} ~ \ottnt{E} ~ \ottnt{e}   \rightarrow   \ottnt{v} }%
}{
  \langle  \ottnt{P} \, \ottnt{I} \, L \, K^* \; \ottnt{M} \, \ottnt{E}  \rangle   \nto[   \tau   ]   \langle  \ottnt{P} \, \ottnt{I} \,  ( L \kern-1pt+\kern-2pt1)  \, K^* \; \ottnt{M} \,  \ottnt{E}  [  \mathit{x}  \leftarrow  \ottnt{v}  ]   \rangle  }{%
{\ottdrulename{Decl}}{}%
}}
\newcommand{\ottdruleUpdate}[1]{\ottdrule[#1]{%
\ottpremise{  \ottnt{I} ( L )  =~   \mathit{x} \nobreak\hspace{0pt} \leftarrow \nobreak\hspace{0pt} \ottnt{e}    \quad   \mathsf{ x }   \in dom( \ottnt{E} )  \quad  \ottnt{M} ~ \ottnt{E} ~ \ottnt{e}   \rightarrow   \ottnt{v} }%
}{
  \langle  \ottnt{P} \, \ottnt{I} \, L \, K^* \; \ottnt{M} \, \ottnt{E}  \rangle   \nto[   \tau   ]   \langle  \ottnt{P} \, \ottnt{I} \,  ( L \kern-1pt+\kern-2pt1)  \, K^* \; \ottnt{M} \,  \ottnt{E}  [  \mathit{x}  \leftarrow  \ottnt{v}  ]   \rangle  }{%
{\ottdrulename{Update}}{}%
}}
\newcommand{\ottdruleDrop}[1]{\ottdrule[#1]{%
\ottpremise{  \ottnt{I} ( L )  =~  \textsfb{drop} \, \mathit{x}  }%
}{
  \langle  \ottnt{P} \, \ottnt{I} \, L \, K^* \; \ottnt{M} \, \ottnt{E}  \rangle   \nto[   \tau   ]   \langle  \ottnt{P} \, \ottnt{I} \,  ( L \kern-1pt+\kern-2pt1)  \, K^* \; \ottnt{M} \,  \ottnt{E} \mathord{\setminus}\{  \mathit{x}  \}   \rangle  }{%
{\ottdrulename{Drop}}{}%
}}
\newcommand{\ottdruleArrayDef}[1]{\ottdrule[#1]{%
\ottpremise{  \ottnt{I} ( L )  =~   \textsfb{array} ~ \mathit{x} \nobreak\hspace{0pt}=\nobreak\hspace{0pt}[{ \ottnt{e_{{\mathrm{1}}}}  \ottsym{,} \, .. \, \ottsym{,}  \ottnt{e_{\ottmv{n}}}  \kern 0.08em}]    \quad  \ottnt{M} ~ \ottnt{E} ~ \ottnt{e_{{\mathrm{1}}}}   \rightarrow   \ottnt{v_{{\mathrm{1}}}}  \quad .. \quad  \ottnt{M} ~ \ottnt{E} ~ \ottnt{e_{\ottmv{n}}}   \rightarrow   \ottnt{v_{\ottmv{n}}} }%
\ottpremise{ \mathit{a} ~\text{fresh}  \quad \ottnt{M'}  \mathrel{\stackrel{\mathsf{def} }{=} }   \ottnt{M}  [  \mathit{a}  \leftarrow  \ottsym{[}  \ottnt{v_{{\mathrm{1}}}}  \ottsym{,} \, .. \, \ottsym{,}  \ottnt{v_{\ottmv{n}}}  \ottsym{]}  ] }%
}{
  \langle  \ottnt{P} \, \ottnt{I} \, L \, K^* \; \ottnt{M} \, \ottnt{E}  \rangle   \nto[   \tau   ]   \langle  \ottnt{P} \, \ottnt{I} \,  ( L \kern-1pt+\kern-2pt1)  \, K^* \; \ottnt{M'} \,  \ottnt{E}  [  \mathit{x}  \leftarrow   \scalebox{0.97}{$\mathsf{ a }$}   ]   \rangle  }{%
{\ottdrulename{ArrayDef}}{}%
}}
\newcommand{\ottdruleArrayDecl}[1]{\ottdrule[#1]{%
\ottpremise{  \ottnt{I} ( L )  =~   \textsfb{array} ~ \mathit{x} [{ \ottnt{e}  \kern 0.06em}]    \quad  \ottnt{M} ~ \ottnt{E} ~ \ottnt{e}   \rightarrow    n  }%
\ottpremise{ \mathit{a} ~\text{fresh}  \quad \ottnt{M'}  \mathrel{\stackrel{\mathsf{def} }{=} }   \ottnt{M}  [  \mathit{a}  \leftarrow   [  \textsfb{nil} _1, ..,  \textsfb{nil} _n ]   ] }%
}{
  \langle  \ottnt{P} \, \ottnt{I} \, L \, K^* \; \ottnt{M} \, \ottnt{E}  \rangle   \nto[   \tau   ]   \langle  \ottnt{P} \, \ottnt{I} \,  ( L \kern-1pt+\kern-2pt1)  \, K^* \; \ottnt{M'} \,  \ottnt{E}  [  \mathit{x}  \leftarrow   \scalebox{0.97}{$\mathsf{ a }$}   ]   \rangle  }{%
{\ottdrulename{ArrayDecl}}{}%
}}
\newcommand{\ottdruleArrayUpdate}[1]{\ottdrule[#1]{%
\ottpremise{  \ottnt{I} ( L )  =~   \mathit{x} [{ \ottnt{e'}  \kern 0.05em}]\nobreak\hspace{0pt} \leftarrow \nobreak\hspace{0pt} \ottnt{e}    \quad  \scalebox{0.97}{$\mathsf{ a }$}   \mathrel{\stackrel{\mathsf{def} }{=} }   \ottnt{E}  (  \mathit{x}  )  \quad  \ottnt{M} ~ \ottnt{E} ~ \ottnt{e'}   \rightarrow    \ottmv{n}   \quad  \ottnt{M} ~ \ottnt{E} ~ \ottnt{e}   \rightarrow   \ottnt{v} }%
\ottpremise{ \ottnt{M}  (  \mathit{a}  )   \ottsym{=}  \ottsym{[}  \ottnt{v_{{\mathrm{0}}}}  \ottsym{,} \, .. \, \ottsym{,}  \ottnt{v_{\ottmv{m}}}  \ottsym{]} \quad  0 \leq  \ottmv{n}  \leq  \ottmv{m} }%
\ottpremise{\ottnt{M'}  \mathrel{\stackrel{\mathsf{def} }{=} }   \ottnt{M}  [  \mathit{a}  \leftarrow  \ottsym{[}  \ottnt{v_{{\mathrm{0}}}}  \ottsym{,} \, .. \, \ottsym{,}  \ottnt{v_{\ottmv{m}}}  \ottsym{]}  \ottsym{\{}  \ottnt{v_{\ottmv{n}}}  \ottsym{/}  \ottnt{v}  \ottsym{\}}  ] }%
}{
  \langle  \ottnt{P} \, \ottnt{I} \, L \, K^* \; \ottnt{M} \, \ottnt{E}  \rangle   \nto[   \tau   ]   \langle  \ottnt{P} \, \ottnt{I} \,  ( L \kern-1pt+\kern-2pt1)  \, K^* \; \ottnt{M'} \, \ottnt{E}  \rangle  }{%
{\ottdrulename{ArrayUpdate}}{}%
}}
\newcommand{\ottdrulePrint}[1]{\ottdrule[#1]{%
\ottpremise{  \ottnt{I} ( L )  =~  \textsfb{print} \, \ottnt{e}   \quad  \ottnt{M} ~ \ottnt{E} ~ \ottnt{e}   \rightarrow   \ottnt{lit} }%
}{
  \langle  \ottnt{P} \, \ottnt{I} \, L \, K^* \; \ottnt{M} \, \ottnt{E}  \rangle   \nto[   \textsf{print}~ \ottnt{lit}   ]   \langle  \ottnt{P} \, \ottnt{I} \,  ( L \kern-1pt+\kern-2pt1)  \, K^* \; \ottnt{M} \, \ottnt{E}  \rangle  }{%
{\ottdrulename{Print}}{}%
}}
\newcommand{\ottdruleRead}[1]{\ottdrule[#1]{%
\ottpremise{  \ottnt{I} ( L )  =~  \textsfb{read} \, \mathit{x}  }%
}{
  \langle  \ottnt{P} \, \ottnt{I} \, L \, K^* \; \ottnt{M} \, \ottnt{E}  \rangle   \nto[   \textsf{read}~ \ottnt{lit}   ]   \langle  \ottnt{P} \, \ottnt{I} \,  ( L \kern-1pt+\kern-2pt1)  \, K^* \; \ottnt{M} \,  \ottnt{E}  [  \mathit{x}  \leftarrow  \ottnt{lit}  ]   \rangle  }{%
{\ottdrulename{Read}}{}%
}}
\newcommand{\ottdruleGoto}[1]{\ottdrule[#1]{%
\ottpremise{  \ottnt{I} ( L )  =~  \textsfb{goto} \, L'  }%
}{
  \langle  \ottnt{P} \, \ottnt{I} \, L \, K^* \; \ottnt{M} \, \ottnt{E}  \rangle   \nto[   \tau   ]   \langle  \ottnt{P} \, \ottnt{I} \, L' \, K^* \; \ottnt{M} \, \ottnt{E}  \rangle  }{%
{\ottdrulename{Goto}}{}%
}}
\newcommand{\ottdruleBranchT}[1]{\ottdrule[#1]{%
\ottpremise{  \ottnt{I} ( L )  =~   \textsfb{branch} ~ \ottnt{e} ~ L_{{\mathrm{1}}} ~ L_{{\mathrm{2}}}    \quad  \ottnt{M} ~ \ottnt{E} ~ \ottnt{e}   \rightarrow   \textsfb{true} }%
}{
  \langle  \ottnt{P} \, \ottnt{I} \, L \, K^* \; \ottnt{M} \, \ottnt{E}  \rangle   \nto[   \tau   ]   \langle  \ottnt{P} \, \ottnt{I} \, L_{{\mathrm{1}}} \, K^* \; \ottnt{M} \, \ottnt{E}  \rangle  }{%
{\ottdrulename{BranchT}}{}%
}}
\newcommand{\ottdruleBranchF}[1]{\ottdrule[#1]{%
\ottpremise{  \ottnt{I} ( L )  =~   \textsfb{branch} ~ \ottnt{e} ~ L_{{\mathrm{1}}} ~ L_{{\mathrm{2}}}    \quad  \ottnt{M} ~ \ottnt{E} ~ \ottnt{e}   \rightarrow   \textsfb{false} }%
}{
  \langle  \ottnt{P} \, \ottnt{I} \, L \, K^* \; \ottnt{M} \, \ottnt{E}  \rangle   \nto[   \tau   ]   \langle  \ottnt{P} \, \ottnt{I} \, L_{{\mathrm{2}}} \, K^* \; \ottnt{M} \, \ottnt{E}  \rangle  }{%
{\ottdrulename{BranchF}}{}%
}}
\newcommand{\ottdruleCall}[1]{\ottdrule[#1]{%
\ottpremise{  \ottnt{I} ( L )  =~   \textsfb{call} ~ \mathit{x} \nobreak\hspace{0pt}=\nobreak\hspace{0pt} \ottnt{e} ( \ottnt{e_{{\mathrm{1}}}}  \ottsym{,} \, .. \, \ottsym{,}  \ottnt{e_{\ottmv{n}}} )   }%
\ottpremise{ \ottnt{M} ~ \ottnt{E} ~ \ottnt{e}   \rightarrow   F }%
\ottpremise{ \ottnt{P} ( F )   \ottsym{=}   F (  \mathit{x}_{{\mathrm{1}}}  \ottsym{,} \, .. \, \ottsym{,}  \mathit{x}_{\ottmv{n}}  ) :  \mathit{D_F}  \quad \ottnt{I'}  \mathrel{\stackrel{\mathsf{def} }{=} }   \ottnt{P} ( F ,   \mathsf{active}  ) }%
\ottpremise{L'  \mathrel{\stackrel{\mathsf{def} }{=} }   \mathsf{start}( \ottnt{I'} )  \quad \ottnt{M} \, \ottnt{E} \,  [{ \mathit{x}_{{\mathrm{1}}}  \ottsym{=}  \ottnt{e_{{\mathrm{1}}}}  \ottsym{,} \, .. \, \ottsym{,}  \mathit{x}_{\ottmv{n}}  \ottsym{=}  \ottnt{e_{\ottmv{n}}}  \kern 0.033em}]   \rightsquigarrow  \ottnt{E'}}%
}{
  \langle  \ottnt{P} \, \ottnt{I} \, L \, K^* \; \ottnt{M} \, \ottnt{E}  \rangle   \nto[   \tau   ]   \langle  \ottnt{P} \, \ottnt{I'} \, L' \,  (K^*,   \langle  \ottnt{I} ~  ( L \kern-1pt+\kern-2pt1)  ~ \mathit{x} ~ \ottnt{E}  \rangle  )  \; \ottnt{M} \, \ottnt{E'}  \rangle  }{%
{\ottdrulename{Call}}{}%
}}
\newcommand{\ottdruleReturn}[1]{\ottdrule[#1]{%
\ottpremise{  \ottnt{I} ( L )  =~  \textsfb{return} \, \ottnt{e}   \quad  \ottnt{M} ~ \ottnt{E} ~ \ottnt{e}   \rightarrow   \ottnt{v} }%
}{
  \langle  \ottnt{P} \, \ottnt{I} \, L \,  (K^*,   \langle  \ottnt{I'} ~ L' ~ \mathit{x} ~ \ottnt{E'}  \rangle  )  \; \ottnt{M} \, \ottnt{E}  \rangle   \nto[   \tau   ]   \langle  \ottnt{P} \, \ottnt{I'} \, L' \, K^* \; \ottnt{M} \,  \ottnt{E'}  [  \mathit{x}  \leftarrow  \ottnt{v}  ]   \rangle  }{%
{\ottdrulename{Return}}{}%
}}
\newcommand{\ottdruleStop}[1]{\ottdrule[#1]{%
\ottpremise{  \ottnt{I} ( L )  =~  \textsfb{stop}  }%
}{
  \langle  \ottnt{P} \, \ottnt{I} \, L \, K^* \; \ottnt{M} \, \ottnt{E}  \rangle   \nto[   \textsf{stop}   ]   \langle  \ottnt{P} \,  \emptyset  \, L \, K^* \; \ottnt{M} \, \ottnt{E}  \rangle  }{%
{\ottdrulename{Stop}}{}%
}}
\newcommand{\ottdruleAssumePass}[1]{\ottdrule[#1]{%
\ottpremise{  \ottnt{I} ( L )  =~   \textsfb{assume} ~ e^* ~ \ottkw{else} ~ \xi ~ \tilde{\xi}^*    \quad  \forall  \ottmv{m}  ,   \ottnt{M} ~ \ottnt{E} ~ \ottnt{e_{\ottmv{m}}}   \rightarrow   \textsfb{true}  }%
}{
  \langle  \ottnt{P} \, \ottnt{I} \, L \, K^* \; \ottnt{M} \, \ottnt{E}  \rangle   \nto[   \tau   ]   \langle  \ottnt{P} \, \ottnt{I} \,  ( L \kern-1pt+\kern-2pt1)  \, K^* \; \ottnt{M} \, \ottnt{E}  \rangle  }{%
{\ottdrulename{AssumePass}}{}%
}}
\newcommand{\ottdruleAssumeDeopt}[1]{\ottdrule[#1]{%
\ottpremise{  \ottnt{I} ( L )  =~   \textsfb{assume} ~ e^* ~ \ottkw{else} ~ \xi ~ \tilde{\xi}^*    \quad  \neg (\forall  \ottmv{m}  ,   \ottnt{M} ~ \ottnt{E} ~ \ottnt{e_{\ottmv{m}}}   \rightarrow   \textsfb{true}  ) }%
}{
  \langle  \ottnt{P} \, \ottnt{I} \, L \, K^* \; \ottnt{M} \, \ottnt{E}  \rangle   \nto[   \tau   ]   \mathsf{deoptimize}(   \langle  \ottnt{P} \, \ottnt{I} \, L \, K^* \; \ottnt{M} \, \ottnt{E}  \rangle  ,  \xi ,  \tilde{\xi}^* )  }{%
{\ottdrulename{AssumeDeopt}}{}%
}}
\newcommand{\ottdruleAssumeDeoptSimple}[1]{\ottdrule[#1]{%
\ottpremise{  \ottnt{I} ( L )  =~   \textsfb{assume} ~ \ottnt{e_{{\mathrm{1}}}}  \ottsym{,} \, .. \, \ottsym{,}  \ottnt{e_{\ottmv{n}}} ~ \ottkw{else} ~   F' .\kern -0.5pt V' .\kern -0.5pt L'  ~ \mathit{VA}  ~ \,   }%
\ottpremise{ \exists  \ottmv{m}  ,   \ottnt{M} ~ \ottnt{E} ~ \ottnt{e_{\ottmv{m}}}   \rightarrow   \textsfb{false}  }%
\ottpremise{\ottnt{M} \, \ottnt{E} \, \mathit{VA}  \rightsquigarrow  \ottnt{E'}}%
}{
  \langle  \ottnt{P} \, \ottnt{I} \, L \, K^* \; \ottnt{M} \, \ottnt{E}  \rangle   \nto[   \tau   ]   \langle  \ottnt{P} \,  \ottnt{P} ( F' ,  V' )  \, L' \, K^* \; \ottnt{M} \, \ottnt{E'}  \rangle  }{%
{\ottdrulename{AssumeDeoptSimple}}{}%
}}
\newcommand{\ottdruleRefl}[1]{\ottdrule[#1]{%
}{
 \ottnt{C}  \tto[    ]  \ottnt{C} }{%
{\ottdrulename{Refl}}{}%
}}
\newcommand{\ottdruleSilentCons}[1]{\ottdrule[#1]{%
\ottpremise{ \ottnt{C}  \tto[  \mathit{T}  ]  \ottnt{C'}  \quad  \ottnt{C'}  \nto[   \tau   ]  \ottnt{C''} }%
}{
 \ottnt{C}  \tto[  \mathit{T}  ]  \ottnt{C''} }{%
{\ottdrulename{SilentCons}}{}%
}}
\newcommand{\ottdruleActionCons}[1]{\ottdrule[#1]{%
\ottpremise{ \ottnt{C}  \tto[  \mathit{T}  ]  \ottnt{C'}  \quad  \ottnt{C'}  \nto[  \mathit{A}  ]  \ottnt{C''} }%
}{
 \ottnt{C}  \tto[   \mathit{T} ~ \mathit{A}   ]  \ottnt{C''} }{%
{\ottdrulename{ActionCons}}{}%
}}
\newcommand{\ottdruleStartConf}[1]{\ottdrule[#1]{%
\ottpremise{\ottnt{I}  \mathrel{\stackrel{\mathsf{def} }{=} }   \ottnt{P} (  \mathsf{main}  ,   \mathsf{active}  )  \quad L  \mathrel{\stackrel{\mathsf{def} }{=} }   \mathsf{start}( \ottnt{I} ) }%
}{
 \mathsf{start}(  \ottnt{P}  )  \mathrel{\stackrel{\mathsf{def} }{=} }    \langle  \ottnt{P} \, \ottnt{I} \, L \,  \emptyset  \;  \emptyset  \,  \emptyset   \rangle  }{%
{\ottdrulename{StartConf}}{}%
}}
\newcommand{\ottdruleDeoptimizeConf}[1]{\ottdrule[#1]{%
\ottpremise{\ottnt{M} \, \ottnt{E} \, \mathit{VA}  \rightsquigarrow  \ottnt{E'} \quad \ottnt{I'}  \mathrel{\stackrel{\mathsf{def} }{=} }   \ottnt{P} ( F' ,  V' ) }%
\ottpremise{ \begin{matrix*}[l] \forall  \ottmv{q}  & \hspace{-3mm} \in  1 ,.., \ottmv{r}  , \\ &   \tilde{\xi}_{\ottmv{q}}  \ottsym{=}    F_{\ottmv{q}} .\kern -0.5pt V_{\ottmv{q}} .\kern -0.5pt L_{\ottmv{q}}  ~ \mathit{x}_{\ottmv{q}} ~ \mathit{VA}_{\ottmv{q}}   \\ &    \ottnt{M} \, \ottnt{E} \, \mathit{VA}_{\ottmv{q}}  \rightsquigarrow  \ottnt{E_{\ottmv{q}}}  \quad   \ottnt{I_{\ottmv{q}}}  \mathrel{\stackrel{\mathsf{def} }{=} }   \ottnt{P} ( F_{\ottmv{q}} ,  V_{\ottmv{q}} )   \quad  \ottnt{K_{\ottmv{q}}}  \mathrel{\stackrel{\mathsf{def} }{=} }   \langle  \ottnt{I_{\ottmv{q}}} ~ L_{\ottmv{q}} ~ \mathit{x}_{\ottmv{q}} ~ \ottnt{E_{\ottmv{q}}}  \rangle       \end{matrix*} }%
}{
 \mathsf{deoptimize}(   \langle  \ottnt{P} \, \ottnt{I} \, L \, K^* \; \ottnt{M} \, \ottnt{E}  \rangle  ,    F' .\kern -0.5pt V' .\kern -0.5pt L'  ~ \mathit{VA}  ,  \tilde{\xi}_{{\mathrm{1}}}  \ottsym{,} \, .. \, \ottsym{,}  \tilde{\xi}_{\ottmv{r}} )  \mathrel{\stackrel{\mathsf{def} }{=} }    \langle  \ottnt{P} \, \ottnt{I'} \, L' \,  (K^*,  \ottnt{K_{{\mathrm{1}}}}  \ottsym{,} \, .. \, \ottsym{,}  \ottnt{K_{\ottmv{r}}} )  \; \ottnt{M} \, \ottnt{E'}  \rangle  }{%
{\ottdrulename{DeoptimizeConf}}{}%
}}
\newcommand{\ottdruleEvalEnv}[1]{\ottdrule[#1]{%
\ottpremise{ \ottnt{M} ~ \ottnt{E} ~ \ottnt{e_{{\mathrm{1}}}}   \rightarrow   \ottnt{v_{{\mathrm{1}}}}  \quad .. \quad  \ottnt{M} ~ \ottnt{E} ~ \ottnt{e_{\ottmv{n}}}   \rightarrow   \ottnt{v_{\ottmv{n}}} }%
}{
\ottnt{M} \, \ottnt{E} \,  [{ \mathit{x}_{{\mathrm{1}}}  \ottsym{=}  \ottnt{e_{{\mathrm{1}}}}  \ottsym{,} \, .. \, \ottsym{,}  \mathit{x}_{\ottmv{n}}  \ottsym{=}  \ottnt{e_{\ottmv{n}}}  \kern 0.033em}]   \rightsquigarrow   [  \mathit{x}_{{\mathrm{1}}}  \rightarrow  \ottnt{v_{{\mathrm{1}}}}  \ottsym{,} \, .. \, \ottsym{,}  \mathit{x}_{\ottmv{n}}  \rightarrow  \ottnt{v_{\ottmv{n}}}  ] }{%
{\ottdrulename{EvalEnv}}{}%
}}
\renewcommand{\ottpremise}[1]{\premiseSTY{#1}}
\renewcommand{\ottusedrule}[1]{\usedruleSTY{#1}}
\renewcommand{\ottdrule}[4][]{\druleSTY[#1]{#2}{#3}{#4}}
\renewenvironment{ottdefnblock}[3][]{\defnblockSTY[#1]{#2}{#3}}{\enddefnblockSTY}
\newcommand{\code}[1]{\textsf{#1}}
\newcommand{\sourir}{\textsf{sourir}\xspace}
\newcommand{\Sourir}{\textsf{Sourir}\xspace}
\newcommand{\assume}{\code{assume}\xspace}
\newcommand{\tto}[1][\phantom{\tau}]{
  \overset{#1\phantom{*}}{\longrightarrow^{\smash{\mkern-2mu*}}}
}
\newcommand{\nto}[1][\phantom{\tau}]{
  \overset{#1}{\longrightarrow}
}
\tikzset{
    rar/.style={
        shorten >=.0em,#1-to,
        to path={-- (\tikztotarget) \tikztonodes}
    },
    rar/.default=
}
\tikzset{
    to*/.style={
        shorten >=.25em,#1-to,
        to path={-- node[inner sep=0pt,at end,sloped] {${}^*$} (\tikztotarget) \tikztonodes}
    },
    to*/.default=
}
\newcommand{\textsfb}[1]{\textsf{\textbf{#1}}}
\newcommand{\eg}{\emph{e.g.}\xspace}
\newcommand{\ie}{\emph{i.e.}\xspace}
\begin{document}

\title[Correctness of Speculative Optimizations]{Correctness of Speculative Optimizations \\ with Dynamic Deoptimization} 
\author[Fl\"uckiger]{Olivier Fl\"uckiger}
\affiliation{\institution{Northeastern University} \country{USA} }
\author[Scherer]{Gabriel Scherer}
\affiliation{   \institution{Northeastern University}  \country{USA}}
\affiliation{   \institution{INRIA}  \country{France}}
\author[Yee]{Ming-Ho Yee}
\author[Goel]{Aviral Goel}
\author[Ahmed]{Amal Ahmed}
\affiliation{   \institution{Northeastern University}  \country{USA} }
\author[Vitek]{Jan Vitek}
\affiliation{   \institution{Northeastern University} \country{USA} }
\affiliation{   \institution{CVUT} \country{Czech Republic} }

\authorsaddresses{}

\begin{abstract}
High-performance dynamic language implementations make heavy use of
speculative optimizations to achieve speeds close to statically compiled
languages. These optimizations are typically performed by a just-in-time
compiler that generates code under a set of assumptions about the state of
the program and its environment. In certain cases, a program may execute
code compiled under assumptions that are no longer valid. The implementation
must then deoptimize the program on-the-fly; this entails finding
semantically equivalent code that does not rely on invalid
assumptions, translating program state to that expected by the target code,
and transferring control.  This paper looks at the interaction between
optimization and deoptimization, and shows that reasoning about speculation
is surprisingly easy when assumptions are made explicit in the program
representation. This insight is demonstrated on a compiler intermediate
representation, named \sourir, modeled after the high-level representation
for a dynamic language. Traditional compiler optimizations such as constant
folding, unreachable code elimination, and function inlining are shown to be correct
in the presence of assumptions. Furthermore, the paper establishes the
correctness of compiler transformations specific to deoptimization: namely
unrestricted deoptimization, predicate hoisting, and assume composition.
\end{abstract}

\keywords{Speculative optimization,  dynamic deoptimization, on-stack-replacement}

\begin{CCSXML}
<ccs2012>
<concept>
<concept_id>10011007.10011006.10011041.10011044</concept_id>
<concept_desc>Software and its engineering~Just-in-time compilers</concept_desc>
<concept_significance>500</concept_significance>
</concept>
</ccs2012>
\end{CCSXML}

\ccsdesc[500]{Software and its engineering~Just-in-time compilers}
\maketitle

\section{Introduction}
Dynamic languages pose unique challenges to compiler writers.  With features
such as dynamic binding, runtime code generation, and generalized reflection,
languages such as Java, C\#, Python, JavaScript, R, or Lisp force
implementers to postpone code generation until the last possible
instant. The intuition being that just-in-time (JIT) compilation can
leverage information about the program state and its environment,
\eg, the value of program inputs or which libraries were loaded, to
generate efficient code and potentially update code on-the-fly.

Many dynamic language compilers support some form of \emph{speculative}
optimization to avoid generating code for unlikely control-flow paths. In a
dynamic language prevalent polymorphism causes even the simplest code to
have non-trivial control flow. Consider the JavaScript snippet
in \autoref{fig:js-ex} (example from~\citet{Spur10}).  Without optimization one
iteration of the loop executes 210 instructions; all arithmetic operations
are dispatched and their results boxed.  If the compiler is allowed
to make
\begin{wrapfigure}{r}{44mm}
\vspace{.2cm}
\centering
\small\sf
\parbox{0cm}{\begin{tabbing}
{\bf for} \=(i=0; i < a.length-1; i++) \{\\[-1mm]
\>  {\bf var} t=a[i];\\[-1mm]
\>  a[i]=a[i+1];\\[-1mm]
\>  a[i+1]=t;\\[-1mm]
\}
\end{tabbing}}
\caption{JavaScript rotate function.}
\label{fig:js-ex}  \vspace{-.1cm}
\end{wrapfigure} 
\kern -1mm
the assumption it is operating on integers, the body of the loop shrinks
down to 13 instructions.
As another example, most Java implementations assume that non-final methods are not
overridden. Speculating on this fact allows compilers to avoid emitting
dispatch code~\citep{ish00}. Newly loaded classes are monitored, and any time a method is overridden,
the virtual machine invalidates code that contains devirtualized calls to that method.
The validity of speculations is expressed as a predicate on the program state.
If some program action, like loading a new class, falsifies that predicate, the generated code must be
discarded.
To undo an assumption, an implementation must ensure that functions compiled
under that assumption are retired. This entails replacing affected code with
a version that does not depend on the invalid predicate and, if a function
currently being executed is found to contain invalid code, that function
needs to be replaced on-the-fly. In such a case, it is necessary to transfer
control to a different version of the function, and in the process, it may
be necessary to materialize portions of the state that were optimized
away and perform other recovery actions. In particular, if the invalid
function was inlined into another function, it is necessary to synthesize a
new stack frame for the caller.  This is referred to as
\emph{deoptimization}, or \emph{on-stack-replacement}, and is found in most
industrial-strength compilers.

Speculative optimization gives rise to a large and multi-dimensional design
space that lies mostly unexplored.  First, compiler writers must decide how
to obtain information about program state. This can be done ahead-of-time by
profiling, just-in-time by sampling or instrumenting code.  Next, they
must select what facts to record. This can range from information about the
program, its class hierarchy, which packages were loaded, to information
about the value of a particular mutable location in the heap. Finally, they
must decide how to efficiently monitor the validity of
speculations.  While some points in this space have been explored
empirically, existing systems have done it in an \emph{ad hoc} manner that
is often both language- and implementation-specific, and thus difficult to
apply broadly.

This paper has a focused goal. We aim to demystify the interaction
between compiler transformations and deoptimization.  When are two versions
compiled under different assumptions equivalent?  How should traditional
optimizations be adapted when operating on code containing deoptimization
points?  In what ways does deoptimization inhibit optimizations?  In this
work we give compiler writers the formal tools they need to reason about
speculative optimizations.  To do this in a way that is independent of the
specific language being targeted and of implementation details relative to a
particular compiler infrastructure, we have designed a high-level compiler
intermediate representation (IR), named \sourir, that is adequate for many
dynamic languages without being tied to any one in particular.

\Sourir is inspired by our work on RIR, an IR for the R language.  A \sourir
program is made up of functions, and each function can have multiple
versions.  We equip the IR with a single instruction, named $ \textsfb{assume} $,
specific to speculative optimization.  This instruction has the role of
describing what assumptions are being used to perform speculative
optimization and what information must be preserved for deoptimization. It
tests if those assumptions hold, and in case they do not, transfers control
to another, less optimized version of the code.  Reifying assumptions in
the IR makes the interaction with compiler transformations explicit and
simplifies reasoning.  The \assume instruction is more than a branch: when
deoptimizing it replaces the current stack frame with a stack frame that has
the variables and values expected by the target version, and, in case the
function was inlined, it synthesizes missing stack frames. Furthermore,
unlike a branch, its deoptimization target is not followed by the compiler during analysis and
optimization.  The code executed in case of deoptimization is invisible to
the optimizer. This simplifies optimizations and reduces compile time as
\begin{wrapfigure}{r}{7.9cm}
\( \\  {\small
                                                          \begin{array}{l}
                                                              \scalebox{0.97}{$\mathsf{ rot }$}   (  \,  \kern 0.04em ) \\
                                                                   \begin{array}{llll}
                                                                     \kern 1.5pt   \hspace{2.5mm}  \mathsf{V\kern -0.25pt \scalebox{0.87}{$\mathsf{ native }$} }   \\
                                                                             \begin{array}{l!{\,\color{gray}\vrule}lll}
                                                                               &      & \dots \\  \,  &      &   \textsfb{call} ~  \mathsf{ type }  \nobreak\hspace{0pt}=\nobreak\hspace{0pt}  \mathsf{ typeof }  (  \mathsf{ a }  )   \\  \,  &      &  \textsfb{assume} ~   \mathsf{ type }    =    \mathsf{ NumArray }   ~ \ottkw{else} ~    \scalebox{0.97}{$\mathsf{ rot }$}  .\kern -0.5pt  \mathsf{V\kern -0.25pt \scalebox{0.87}{$\mathsf{ base }$} }  .\kern -0.5pt  \mathsf{L\kern -0.25pt \scalebox{0.87}{$\mathsf{ t }$} }   ~  [{ \,  \kern 0.033em}]    \\  \,  &   \mathsf{L\kern -0.25pt \scalebox{0.87}{$\mathsf{ t }$} }   &   \textsfb{branch} ~   \mathsf{i}    <    \mathsf{ limit }   ~  \mathsf{L\kern -0.25pt \scalebox{0.87}{$\mathsf{ o }$} }  ~  \mathsf{L\kern -0.25pt \scalebox{0.87}{$\mathsf{ rt }$} }    \\  \,  &   \mathsf{L\kern -0.25pt \scalebox{0.87}{$\mathsf{ o }$} }   &   \textsfb{var} ~  \mathsf{ t }  \nobreak\hspace{0pt}=\nobreak\hspace{0pt}   \mathsf{ a }  [{  \mathsf{i}  }]    \\  \,  &      &  \textsfb{assume} ~   \mathsf{ t }    \neq    \mathsf{ HL }   ~ \ottkw{else} ~    \scalebox{0.97}{$\mathsf{ rot }$}  .\kern -0.5pt  \mathsf{V\kern -0.25pt \scalebox{0.87}{$\mathsf{ base }$} }  .\kern -0.5pt  \mathsf{L\kern -0.25pt \scalebox{0.87}{$\mathsf{ s }$} }   ~  [{  \mathsf{i}   \ottsym{=}   \mathsf{i}   \ottsym{,}   \mathsf{ j }   \ottsym{=}    \mathsf{i}    \ottsym{+}   1   \kern 0.033em}]    \\  \,  &      &    \mathsf{ a }  [{  \mathsf{i}   \kern 0.05em}]\nobreak\hspace{0pt} \leftarrow \nobreak\hspace{0pt}   \mathsf{ a }  [{   \mathsf{i}    \ottsym{+}   1  }]    \\  \,  &      &    \mathsf{ a }  [{   \mathsf{i}    \ottsym{+}   1   \kern 0.05em}]\nobreak\hspace{0pt} \leftarrow \nobreak\hspace{0pt}  \mathsf{ t }    \\  \,  &      &    \mathsf{i}  \nobreak\hspace{0pt} \leftarrow \nobreak\hspace{0pt}   \mathsf{i}    \ottsym{+}   1    \\  \,  &      &  \textsfb{goto} \,  \mathsf{L\kern -0.25pt \scalebox{0.87}{$\mathsf{ t }$} }   \\  \,  &   \mathsf{L\kern -0.25pt \scalebox{0.87}{$\mathsf{ rt }$} }   & \dots \\  
                                                                             \end{array} \vspace{0.25em} \\  \,  \hspace{2.5mm}  \mathsf{V\kern -0.25pt \scalebox{0.87}{$\mathsf{ base }$} }   \\
                                                                             \begin{array}{l!{\,\color{gray}\vrule}lll}
                                                                               &      & \dots \\  \,  &   \mathsf{L\kern -0.25pt \scalebox{0.87}{$\mathsf{ t }$} }   &   \textsfb{branch} ~   \mathsf{i}    <    \mathsf{ limit }   ~  \mathsf{L\kern -0.25pt \scalebox{0.87}{$\mathsf{ o }$} }  ~  \mathsf{L\kern -0.25pt \scalebox{0.87}{$\mathsf{ rt }$} }    \\  \,  &   \mathsf{L\kern -0.25pt \scalebox{0.87}{$\mathsf{ o }$} }   &   \textsfb{call} ~  \mathsf{ j }  \nobreak\hspace{0pt}=\nobreak\hspace{0pt}  \mathsf{ add }  (  \mathsf{i}   \ottsym{,}  1 )   \\  \,  &   \mathsf{L\kern -0.25pt \scalebox{0.87}{$\mathsf{ s }$} }   &   \textsfb{call} ~  \mathsf{ t1 }  \nobreak\hspace{0pt}=\nobreak\hspace{0pt}  \mathsf{ get }  (  \mathsf{ a }   \ottsym{,}   \mathsf{i}  )   \\  \,  &      &   \textsfb{call} ~  \mathsf{ t2 }  \nobreak\hspace{0pt}=\nobreak\hspace{0pt}  \mathsf{ get }  (  \mathsf{ a }   \ottsym{,}   \mathsf{ j }  )   \\  \,  &      &   \textsfb{call} ~  \mathsf{ t3 }  \nobreak\hspace{0pt}=\nobreak\hspace{0pt}  \mathsf{ store }  (  \mathsf{ a }   \ottsym{,}   \mathsf{i}   \ottsym{,}   \mathsf{ t2 }  )   \\  \,  &      &   \textsfb{call} ~  \mathsf{ t4 }  \nobreak\hspace{0pt}=\nobreak\hspace{0pt}  \mathsf{ store }  (  \mathsf{ a }   \ottsym{,}   \mathsf{ j }   \ottsym{,}   \mathsf{ t1 }  )   \\  \,  &      &    \mathsf{i}  \nobreak\hspace{0pt} \leftarrow \nobreak\hspace{0pt}  \mathsf{ j }    \\  \,  &      &  \textsfb{goto} \,  \mathsf{L\kern -0.25pt \scalebox{0.87}{$\mathsf{ t }$} }   \\  \,  &   \mathsf{L\kern -0.25pt \scalebox{0.87}{$\mathsf{ rt }$} }   & \dots \\  
                                                                             \end{array} \vspace{0.25em} \\  
                                                                   \end{array} \\  
                                                          \end{array} }  \\ \)
  \caption{Compiled function from \autoref{fig:js-ex}.}\label{fig:first-ex}\vspace{-.1cm}
\vspace{-.1cm}
\end{wrapfigure} %
analysis remains local to the version
being optimized and the deoptimization
metadata is considered to be a stand-in for the target version.

As an example consider the function from \autoref{fig:js-ex}.  A possible
translation to \sourir is shown in \autoref{fig:first-ex} (less relevant
code elided).  $  \mathsf{V\kern -0.25pt \scalebox{0.87}{$\mathsf{ base }$} }  $ contains the original version.  Helper
functions $  \scalebox{0.97}{$\mathsf{ get }$}  $ and $  \scalebox{0.97}{$\mathsf{ store }$}  $ implement JavaScript
(JS) array semantics, and the function $  \scalebox{0.97}{$\mathsf{ add }$}  $ implement JS
addition.  Version~$  \mathsf{V\kern -0.25pt \scalebox{0.87}{$\mathsf{ native }$} }  $ contains only primitive \sourir
instructions.  This version is optimized under the assumption that the
variable $  \mathsf{ a }  $ is an array of primitive numbers, which is
represented by the first \assume instruction.  Further, JS arrays can be
sparse and contain holes, in which case access might need to be delegated to
a getter function.  For this example $  \mathsf{ HL }  $ denotes such a
hole.  The second \assume instruction reifies the compiler's speculation
that the array has no holes, by asserting the predicate $  \mathsf{ t }    \neq    \mathsf{ HL }  $.  It
also contains the associated deoptimization metadata.  In case the predicate
does not hold, we deoptimize to a related position in the base version by
recreating the variables in the target scope.  As can be seen in the second
\assume, local variables are mapped as $[  \mathsf{i}   \ottsym{=}   \mathsf{i}   \ottsym{,}   \mathsf{ j }   \ottsym{=}    \mathsf{i}    \ottsym{+}   1  ]$; the
current value of $  \mathsf{i}  $ is carried over into the target frame's
$  \mathsf{i}  $, whereas variable $  \mathsf{ j }  $ has to be recomputed.

We prove the correctness of a selection of traditional compiler
optimizations in the presence of speculation; these are constant
propagation, unreachable code elimination, and function inlining.  The main
challenge for correctness is that the transformations operate on one version
in isolation and therefore only see a subset of all possible control flows.
We show how to split the work to prove correctness between the pass that
establishes a version-to-version correspondence and the actual
optimizations.  Furthermore, we introduce and prove the correctness of three optimizations
specific to speculation, namely unrestricted deoptimization,
predicate hoisting, and assume composition.

Our work makes several simplifying assumptions.  We use the same IR for
optimized and unoptimized code.  We ignore the issue of generation of
versions: we study optimizations operating on a program at a certain point
in time, on a set of versions created before that time.  We do not model the
low-level details of code generation. Correctness of runtime code generation
and code modification within a JIT compiler has been addressed by
\citet{myr10}.  \Sourir is not designed for implementation, but to give a
reasoning model for existing JIT implementations.  We do not intend to
implement a new JIT engine.  Instead, we evaluated our work by discussing it
with JIT implementers; the V8 team~\citep{v8} confirmed that intuitions and
correctness arguments could be ported from \sourir to their
setting.

\section{Related Work}\label{sec:related}

The SELF virtual machine pioneered dynamic deoptimization~\citep{hol92}. The
SELF compiler implemented many optimizations, one of which was aggressive
inlining, yet the language designers wanted to give end users the illusion
that they were debugging source code.  They achieved this by replacing
optimized code and the corresponding stack frames with non-optimized code
and matching stack frames. When deoptimizing code that had been inlined, the
SELF compiler synthesized stack frames.  The HotSpot compiler followed from
the work on SELF by introducing the idea of speculative
optimizations~\citep{pal01}. HotSpot supported very specific assumptions
related to the structure of the class hierarchy and instrumented the class
loader to trigger invalidation.  When an invalidation occurred affected
functions were rolled forward to a safe point and control was transferred
from native code to the interpreter. The Jikes RVM adopted these ideas
to avoid compiling uncommon code paths~\citep{fin03}.

One drawback of the early work was that deoptimization points were barriers
around which optimizations were not allowed. \citet{oda05} were the first to
investigate exception reordering by hoisting guards.  They remarked that
checking assumptions early might improve code.  In
\citet{som06} the optimizer is allowed to update the deoptimization
metadata.  In particular they support eliding duplicate variables in the
mapping and lazily reconstructing values when transferring control.  This
unlocks further optimizations, which were blocked in previous work. The
paper also introduces the idea of being able to transfer control at any
point.  We support both the update of metadata and unconstrained
deoptimization.

Modern virtual machines have all incorporated some degree of speculation and
support for deoptimization.  These include implementations of Java (HotSpot,
Jikes RVM), JavaScript (WebKit Core, Chromium V8, Truffle/JS, Firefox), Ruby
(Truffle/Ruby), and R (FastR), among others.  Anecdotal evidence suggests
that the representation adopted in this work is representative of the
instructions found in the IR of production VMs: the TurboFan IR from
V8~\citep{v8} represents \assume with three distinct nodes. First a
\emph{checkpoint}, holding the deoptimization target, marks a stable point,
to where execution can be rolled back. In \sourir this corresponds to the
original location of an \assume.  A \emph{framestate} node records the
layout of, and changes to, the local frame, roughly the varmap in \sourir.
Assumption predicates are guarded by conditional deoptimization nodes, such
as \emph{deoptimizeIf}.  Graal~\citep{dub13} also has an explicit
representation for assumptions and associated metadata as \emph{guard} and
\emph{framestate} nodes in their high-level IR. In both cases guards are
associated with the closest dominating checkpoint.

Lowering deoptimization metadata is described in~\citet{schn12,dub14}.  A
detailed empirical evaluation of deoptimization appears in~\citet{ecoop17}.
The implementation of control-flow transfer is not modeled here as it is not
relevant to our results. For one particular implementation, we refer readers
to \citet{eli16} which builds on LLVM.  Alternatively, \citet{wan15} propose
an IR that supports restricted primitives for hot-patching code in a JIT.

There is a rich literature on formalizing compiler optimizations. The
CompCert project~\citep{ler08} for example implements many optimizations,
and contains detailed proof arguments for a data-flow optimization used for
constant folding that is similar to ours. In fact, \sourir is close to
CompCert's RTL language without versions or assumptions.  There are
formalizations for tracing compilers~\citep{guo11,dis14}, but we are unaware
of any other formalization effort for speculative optimizations in general.
\citet{bera16} present a verifier for a bytecode-to-bytecode optimizer.  By
symbolically executing optimized and unoptimized code, they verify that the
deoptimization metadata produced by their optimizer correctly maps the
symbolic values of the former to the latter at all deoptimization points.

\section{Sourir: Speculative Compilation Under Assumptions}\label{sec:sourir-intro}

This section introduces our IR and its design principles.  We first present
the structure of programs and the \assume instruction.  Then,
\autoref{subsec:invariants} and following explain how \sourir maintains
multiple equivalent versions of the same function, each with a different set
of assumptions.  This enables the speculative optimizations presented in
\autoref{sec:optimizations}.  All concepts introduced in this section are
formalized in \autoref{sec:sourir-intro-formal}.

\subsection{Sourir in a Nutshell} \label{sec:nutshell}

\begin{wrapfigure}{r}{5.0cm}
\vspace{-.4cm}
  \center
  \(  {\small \begin{array}{lll}
                                                      &      &   \textsfb{var} ~  \mathsf{n}  \nobreak\hspace{0pt}=\nobreak\hspace{0pt} \textsfb{nil}   \\  \,  &      &  \textsfb{read} \,  \mathsf{n}   \\  \,  &      &   \textsfb{array} ~  \mathsf{ t }  [{  \mathsf{n}   \kern 0.06em}]   \\  \,  &      &   \textsfb{var} ~  \mathsf{ k }  \nobreak\hspace{0pt}=\nobreak\hspace{0pt} 0   \\  \,  &      &  \textsfb{goto} \,  \mathsf{L\kern -0.25pt \scalebox{0.83}{$\mathsf{ 1 }$} }   \\  \,  &   \mathsf{L\kern -0.25pt \scalebox{0.83}{$\mathsf{ 1 }$} }   &   \textsfb{branch} ~   \mathsf{ k }    <    \mathsf{n}   ~  \mathsf{L\kern -0.25pt \scalebox{0.83}{$\mathsf{ 2 }$} }  ~  \mathsf{L\kern -0.25pt \scalebox{0.83}{$\mathsf{ 3 }$} }    \\  \,  &   \mathsf{L\kern -0.25pt \scalebox{0.83}{$\mathsf{ 2 }$} }   &    \mathsf{ t }  [{  \mathsf{ k }   \kern 0.05em}]\nobreak\hspace{0pt} \leftarrow \nobreak\hspace{0pt}  \mathsf{ k }    \\  \,  &      &    \mathsf{ k }  \nobreak\hspace{0pt} \leftarrow \nobreak\hspace{0pt}   \mathsf{ k }    \ottsym{+}   1    \\  \,  &      &  \textsfb{goto} \,  \mathsf{L\kern -0.25pt \scalebox{0.83}{$\mathsf{ 1 }$} }   \\  \,  &   \mathsf{L\kern -0.25pt \scalebox{0.83}{$\mathsf{ 3 }$} }   &  \textsfb{drop} \,  \mathsf{ k }   \\  \,  &      &  \textsfb{stop}  \\  
                                                \end{array} }  \)
  \caption{Example \sourir code.}
  \label{fig:std-instr-example}
\end{wrapfigure} %
\Sourir is an untyped language with lexically scoped mutable variables and
first-class functions.
As an example the function
in \autoref{fig:std-instr-example} queries a number~$  \mathsf{n}  $ from the user and
initializes an array with values from \code{0} to \code{n-1}. By design, \sourir is a
cross between a compiler representation and a high-level language. We have
equipped it with sufficient expressive power so that it is possible to write
interesting programs in a style reminiscent of dynamic
languages.\footnote{An implementation of \sourir and the optimizations
  presented here is available at \url{https://github.com/reactorlabs/sourir}.} The only features that are
critical to our result are \emph{versions} and \emph{assumptions}.  Versions
are the counterpart of dynamically generated code fragments. Assumptions,
represented by the \assume instruction, support dynamic deoptimization of
speculatively compiled code.  The syntax of \sourir instructions is shown in
\autoref{fig:syntax-i}.

\Sourir supports defining a local variable, removing a variable from scope,
variable assignment, creating arrays, array assignment, (unstructured)
control flow, input and output, function calls and returns, assumptions, and
terminating execution. Control-flow instructions take explicit labels, which
are compiler-generated symbols but we sometimes give them meaningful names
for clarity of exposition.  Literals are integers, booleans, and nil.
Together with variables and function references, they form simple
expressions.  Finally, an expression is either a simple expression or an
operation: array access, array length, or primitive operation (arithmetic,
comparison, and logic operation).  Expressions are not nested---this is
common in intermediate representations such as A-normal
form~\citep{sab92}. We do allow bounded nesting in instructions for brevity.

\begin{figure}
\hrule\begin{small}
\vskip 2mm
\begin{mathpar}
\grammartabularSTY{%
  \otti{}\ottinterrule
 }
\grammartabularSTY{%
  \otteprint{}\ottinterrule
  \ottse{}\ottinterrule
  \ottlit{}\ottinterrule
 }

 \begin{matrix*}[l]
   \xi & ::=    &   \ottmv{F} .\kern -0.5pt \ottmv{V} .\kern -0.5pt \ottmv{L}  ~ \mathit{VA} 
                      & \emph{target and varmap} \\
   \tilde{\xi} & ::=    &   \ottmv{F} .\kern -0.5pt \ottmv{V} .\kern -0.5pt \ottmv{L}  ~ \mathit{x} ~ \mathit{VA} 
                      & \emph{extra continuation} \\
   \mathit{VA} & ::=
                      & [ \mathit{x}_{{\mathrm{1}}}  \ottsym{=}  \ottnt{e_{{\mathrm{1}}}}  \ottsym{,} \, .. \, \ottsym{,}  \mathit{x}_{\ottmv{n}}  \ottsym{=}  \ottnt{e_{\ottmv{n}}} ]
                      & \emph{varmap} \\
   \end{matrix*}
\end{mathpar}
\end{small}\caption{The syntax of \sourir.}
\label{fig:syntax-i}\end{figure}

A program~$\ottnt{P}$ is a set of function declarations.  The body of a
function is a list of versions indexed by a version label, where each
version is an instruction sequence.  The first instruction sequence in the
list (the \emph{active version}) is executed when the function is called.
$\ottmv{F}$ ranges over function names, $\ottmv{V}$ over version labels,
and $\ottmv{L}$ over instruction labels.  An absolute reference to an instruction
is thus a triple $ \ottmv{F} .\kern -0.5pt \ottmv{V} .\kern -0.5pt \ottmv{L} $.  Every instruction is
labeled, but for brevity we omit unused labels.

Versions model the speculative optimizations performed by the compiler. The
only instruction that explicitly references versions is \assume.  It has the
form $  \textsfb{assume} ~ e^* ~ \ottkw{else} ~ \xi ~ \tilde{\xi}^*  $ with a list of predicates
$(e^*)$ and deoptimization metadata $\xi$ and $\tilde{\xi}^*$.  When
executed, \assume evaluates its predicates; if they hold execution skips to
the next instruction. Otherwise, deoptimization occurs according to the
metadata.  The format of $\xi$ is $  \ottmv{F} .\kern -0.5pt \ottmv{V} .\kern -0.5pt \ottmv{L}  ~  [{ \mathit{x}_{{\mathrm{1}}}  \ottsym{=}  \ottnt{e_{{\mathrm{1}}}}  \ottsym{,} \, .. \, \ottsym{,}  \mathit{x}_{\ottmv{n}}  \ottsym{=}  \ottnt{e_{\ottmv{n}}}  \kern 0.033em}]  $,
which contains a target~$ \ottmv{F} .\kern -0.5pt \ottmv{V} .\kern -0.5pt \ottmv{L} $ and a varmap~$[ \mathit{x}_{{\mathrm{1}}}  \ottsym{=}  \ottnt{e_{{\mathrm{1}}}}  \ottsym{,} \, .. \, \ottsym{,}  \mathit{x}_{\ottmv{n}}  \ottsym{=}  \ottnt{e_{\ottmv{n}}}
]$.  To deoptimize, a fresh environment for the target is created according
to the varmap.  Each expression~$e_i$ is evaluated in the old
environment and bound to $x_i$ in the new environment.  The environment
specified by $\xi$ replaces the current one. Deoptimization might also
need to create additional continuations, if \assume occurs in an inlined
function. In this case multiple $\tilde{\xi}$ of the form $  \ottmv{F} .\kern -0.5pt \ottmv{V} .\kern -0.5pt \ottmv{L}  ~ \mathit{x} ~  [{ \mathit{x}_{{\mathrm{1}}}  \ottsym{=}  \ottnt{e_{{\mathrm{1}}}}  \ottsym{,} \, .. \, \ottsym{,}  \mathit{x}_{\ottmv{n}}  \ottsym{=}  \ottnt{e_{\ottmv{n}}}  \kern 0.033em}]  $ can be appended. Each one synthesizes a
continuation with an environment constructed according to the varmap, a
return target $ \ottmv{F} .\kern -0.5pt \ottmv{V} .\kern -0.5pt \ottmv{L} $, and the name $\mathit{x}$ to hold the returned
result---this situation and inlining are discussed in~\autoref{sec:inlining}.
The purpose of deoptimization metadata is twofold.  First, it provides the
necessary information for jumping to the target version.  Second, its
presence in the instruction stream allows the optimizer to keep the mapping
between different versions up-to-date.

\paragraph{Example}

\begin{wrapfigure}{r}{7.8cm}
\vspace{-.4cm}
  $  {\small
                                                          \begin{array}{l}
                                                              \scalebox{0.97}{$\mathsf{ size }$}   (   \mathsf{ x }   \kern 0.04em ) \\
                                                                   \begin{array}{llll}
                                                                     \kern 1.5pt   \hspace{2.5mm}  \mathsf{V\kern -0.25pt \scalebox{0.87}{$\mathsf{ o }$} }   \\
                                                                             \begin{array}{l!{\,\color{gray}\vrule}lll}
                                                                               &      &  \textsfb{assume} ~   \mathsf{ x }    \neq   \textsfb{nil}  ~ \ottkw{else} ~    \scalebox{0.97}{$\mathsf{ size }$}  .\kern -0.5pt  \mathsf{V\kern -0.25pt \scalebox{0.87}{$\mathsf{ b }$} }  .\kern -0.5pt  \mathsf{L\kern -0.25pt \scalebox{0.83}{$\mathsf{ 2 }$} }   ~  [{  \mathsf{ el }   \ottsym{=}  32  \ottsym{,}   \mathsf{ x }   \ottsym{=}   \mathsf{ x }   \kern 0.033em}]    \\  \,  &   \phantom{  \mathsf{L\kern -0.25pt \scalebox{0.83}{$\mathsf{ 1 }$} }  }   &   \textsfb{var} ~  \mathsf{ l }  \nobreak\hspace{0pt}=\nobreak\hspace{0pt}   \mathsf{ x }  [{ 0 }]    \\  \,  &      &  \textsfb{return} \,   \mathsf{ l }    \ottsym{*}   32   \\  
                                                                             \end{array} \vspace{0.25em} \\  \,  \hspace{2.5mm}  \mathsf{V\kern -0.25pt \scalebox{0.87}{$\mathsf{ b }$} }   \\
                                                                             \begin{array}{l!{\,\color{gray}\vrule}lll}
                                                                               &   \mathsf{L\kern -0.25pt \scalebox{0.83}{$\mathsf{ 1 }$} }   &   \textsfb{var} ~  \mathsf{ el }  \nobreak\hspace{0pt}=\nobreak\hspace{0pt} 32   \\  \,  &   \mathsf{L\kern -0.25pt \scalebox{0.83}{$\mathsf{ 2 }$} }   &   \textsfb{branch} ~   \mathsf{ x }    =   \textsfb{nil}  ~  \mathsf{L\kern -0.25pt \scalebox{0.83}{$\mathsf{ 4 }$} }  ~  \mathsf{L\kern -0.25pt \scalebox{0.83}{$\mathsf{ 3 }$} }    \\  \,  &   \mathsf{L\kern -0.25pt \scalebox{0.83}{$\mathsf{ 3 }$} }   &   \textsfb{var} ~  \mathsf{ l }  \nobreak\hspace{0pt}=\nobreak\hspace{0pt}   \mathsf{ x }  [{ 0 }]    \\  \,  &      &  \textsfb{return} \,   \mathsf{ l }    \ottsym{*}    \mathsf{ el }    \\  \,  &   \mathsf{L\kern -0.25pt \scalebox{0.83}{$\mathsf{ 4 }$} }   &  \textsfb{return} \, 0  \\  
                                                                             \end{array} \vspace{0.25em} \\  
                                                                   \end{array} \\  
                                                          \end{array} }  $
\caption{Speculation on $  \mathsf{ x }  $.}\label{fig:first-example}
\vspace{-.1cm}
\end{wrapfigure} %
Consider the function~$  \scalebox{0.97}{$\mathsf{ size }$}  $ in \autoref{fig:first-example}
which computes the size of a vector $  \mathsf{ x }  $.  In version~$  \mathsf{V\kern -0.25pt \scalebox{0.87}{$\mathsf{ b }$} }  $, $  \mathsf{ x }  $
is either nil or an array with its length stored at index~$0$.  The
optimized version~$  \mathsf{V\kern -0.25pt \scalebox{0.87}{$\mathsf{ o }$} }  $ expects that the input is never nil.
Classical compiler optimizations can leverage this fact: unreachable code
removal prunes the unused branch.  Constant propagation replaces the use of
$  \mathsf{ el }  $ with its value and updates the varmap so that it restores the
deleted variable upon deoptimization to the base version~$  \mathsf{V\kern -0.25pt \scalebox{0.87}{$\mathsf{ b }$} }  $.

\clearpage

\subsection{Deoptimization Invariants}\label{subsec:invariants}

\begin{wrapfigure}{r}{7.8cm}
\vspace{-.2cm}
$  {\small
                                                          \begin{array}{l}
                                                              \scalebox{0.97}{$\mathsf{ show }$}   (   \mathsf{ x }   \kern 0.04em ) \\
                                                                   \begin{array}{llll}
                                                                     \kern 1.5pt   \hspace{2.5mm}  \mathsf{V\kern -0.25pt \scalebox{0.87}{$\mathsf{ o }$} }   \\
                                                                             \begin{array}{l!{\,\color{gray}\vrule}lll}
                                                                               &   \phantom{  \mathsf{L\kern -0.25pt \scalebox{0.83}{$\mathsf{ 1 }$} }  }   &  \textsfb{assume} ~   \mathsf{ x }    =   42  ~ \ottkw{else} ~    \scalebox{0.97}{$\mathsf{ show }$}  .\kern -0.5pt  \mathsf{V\kern -0.25pt \scalebox{0.87}{$\mathsf{ b }$} }  .\kern -0.5pt  \mathsf{L\kern -0.25pt \scalebox{0.83}{$\mathsf{ 1 }$} }   ~  [{  \mathsf{ x }   \ottsym{=}   \mathsf{ x }   \kern 0.033em}]    \\  \,  &      &  \textsfb{print} \, 42  \\  
                                                                             \end{array} \vspace{0.25em} \\  \,  \hspace{2.5mm}  \mathsf{V\kern -0.25pt \scalebox{0.87}{$\mathsf{ w }$} }   \\
                                                                             \begin{array}{l!{\,\color{gray}\vrule}lll}
                                                                               &   \phantom{  \mathsf{L\kern -0.25pt \scalebox{0.83}{$\mathsf{ 1 }$} }  }   &  \textsfb{assume} ~ \textsfb{true} ~ \ottkw{else} ~    \scalebox{0.97}{$\mathsf{ show }$}  .\kern -0.5pt  \mathsf{V\kern -0.25pt \scalebox{0.87}{$\mathsf{ b }$} }  .\kern -0.5pt  \mathsf{L\kern -0.25pt \scalebox{0.83}{$\mathsf{ 1 }$} }   ~  [{  \mathsf{ x }   \ottsym{=}  42  \kern 0.033em}]    \\  \,  &      &  \textsfb{print} \,  \mathsf{ x }   \\  
                                                                             \end{array} \vspace{0.25em} \\  \,  \hspace{2.5mm}  \mathsf{V\kern -0.25pt \scalebox{0.87}{$\mathsf{ b }$} }   \\
                                                                             \begin{array}{l!{\,\color{gray}\vrule}lll}
                                                                               &   \mathsf{L\kern -0.25pt \scalebox{0.83}{$\mathsf{ 1 }$} }   &  \textsfb{print} \,  \mathsf{ x }   \\  
                                                                             \end{array} \vspace{0.25em} \\  
                                                                   \end{array} \\  
                                                          \end{array} } 
$
\caption{The version $\mathsf{w}$ violates the deoptimization invariant.}
\label{fig:deopt-invariants}
\end{wrapfigure} %
A version is the unit of optimization and deoptimization.  Thus we expect
that each function will have one original version and possibly many
optimized versions.  Versions are constructed such that they preserve two
crucial invariants: (1) \emph{version equivalence} and (2) \emph{assumption
  transparency}.  By the first invariant all versions of a function are
observationally equivalent.  The second invariant ensures that even if the
assumption predicates \emph{do} hold, deoptimizing to the target should be
correct.  Thus one could execute an optimized version and its base in
lockstep; at every \assume the varmap provides a complete mapping from the
new version to the base.  This simulation relation between versions is our
correctness argument.  The transparency invariant allows us to add
assumption predicates without fear of altering program semantics.  Consider
a function $  \scalebox{0.97}{$\mathsf{ show }$}  $ in \autoref{fig:deopt-invariants} which
prints its argument~$  \mathsf{ x }  $.  Version~$  \mathsf{V\kern -0.25pt \scalebox{0.87}{$\mathsf{ o }$} }  $ respects both
invariants: any value for $  \mathsf{ x }  $ will result in the same behavior as
the base version and deoptimizing is always possible.  On the other hand,
$  \mathsf{V\kern -0.25pt \scalebox{0.87}{$\mathsf{ w }$} }  $, which is equivalent because it will never deoptimize,
violates the second invariant: if it were to deoptimize, the value of
$  \mathsf{ x }  $ would be set to $42$, which is almost always incorrect.  We
present a formal treatment of the invariants and the correctness proofs in
\autoref{subsec:formal-invariants} and following.

\subsection{Creating Fresh Versions}\label{sec:new-version}

We expect that versions are chained.  A compiler will create a new version,
say $ \mathsf{V\kern -0.5pt \scalebox{0.83}{$\mathsf{ 1 }$} } $, from an existing version $ \mathsf{V\kern -0.5pt \scalebox{0.83}{$\mathsf{ 0 }$} } $ by copying all
instructions from the original version and chaining their
\begin{wrapfigure}{r}{7.9cm}
\vskip -1mm
$  {\small
                                                          \begin{array}{l}
                                                              \scalebox{0.97}{$\mathsf{ fun }$}   (  \,  \kern 0.04em ) \\
                                                                   \begin{array}{llll}
                                                                     \kern 1.5pt   \hspace{2.5mm}  \mathsf{V\kern -0.5pt \scalebox{0.83}{$\mathsf{ 2 }$} }   \\
                                                                             \begin{array}{l!{\,\color{gray}\vrule}lll}
                                                                               &   \mathsf{L\kern -0.25pt \scalebox{0.83}{$\mathsf{ 0 }$} }   &  \textsfb{assume} ~ \textsfb{true} ~ \ottkw{else} ~    \scalebox{0.97}{$\mathsf{ fun }$}  .\kern -0.5pt  \mathsf{V\kern -0.5pt \scalebox{0.83}{$\mathsf{ 1 }$} }  .\kern -0.5pt  \mathsf{L\kern -0.25pt \scalebox{0.83}{$\mathsf{ 0 }$} }   ~  [{ \,  \kern 0.033em}]    \\  \,  &      &   \textsfb{var} ~  \mathsf{ x }  \nobreak\hspace{0pt}=\nobreak\hspace{0pt} 1   \\  \,  &   \mathsf{L\kern -0.25pt \scalebox{0.83}{$\mathsf{ 1 }$} }   &  \textsfb{assume} ~ \ottnt{e} ~ \ottkw{else} ~    \scalebox{0.97}{$\mathsf{ fun }$}  .\kern -0.5pt  \mathsf{V\kern -0.5pt \scalebox{0.83}{$\mathsf{ 1 }$} }  .\kern -0.5pt  \mathsf{L\kern -0.25pt \scalebox{0.83}{$\mathsf{ 1 }$} }   ~  [{  \mathsf{ x }   \ottsym{=}   \mathsf{ x }   \kern 0.033em}]    \\  \,  &   \mathsf{L\kern -0.25pt \scalebox{0.83}{$\mathsf{ 2 }$} }   &  \textsfb{print} \,   \mathsf{ x }    \ottsym{+}   2   \\  
                                                                             \end{array} \vspace{0.25em} \\  \,  \hspace{2.5mm}  \mathsf{V\kern -0.5pt \scalebox{0.83}{$\mathsf{ 1 }$} }   \\
                                                                             \begin{array}{l!{\,\color{gray}\vrule}lll}
                                                                               &   \mathsf{L\kern -0.25pt \scalebox{0.83}{$\mathsf{ 0 }$} }   &   \textsfb{var} ~  \mathsf{ x }  \nobreak\hspace{0pt}=\nobreak\hspace{0pt} 1   \\  \,  &   \mathsf{L\kern -0.25pt \scalebox{0.83}{$\mathsf{ 1 }$} }   &  \textsfb{assume} ~ \ottnt{e} ~ \ottkw{else} ~    \scalebox{0.97}{$\mathsf{ fun }$}  .\kern -0.5pt  \mathsf{V\kern -0.5pt \scalebox{0.83}{$\mathsf{ 0 }$} }  .\kern -0.5pt  \mathsf{L\kern -0.25pt \scalebox{0.83}{$\mathsf{ 1 }$} }   ~  [{  \mathsf{ g }   \ottsym{=}   \mathsf{ x }   \kern 0.033em}]    \\  \,  &   \mathsf{L\kern -0.25pt \scalebox{0.83}{$\mathsf{ 2 }$} }   &  \textsfb{assume} ~ \textsfb{true} ~ \ottkw{else} ~    \scalebox{0.97}{$\mathsf{ fun }$}  .\kern -0.5pt  \mathsf{V\kern -0.5pt \scalebox{0.83}{$\mathsf{ 0 }$} }  .\kern -0.5pt  \mathsf{L\kern -0.25pt \scalebox{0.83}{$\mathsf{ 2 }$} }   ~  [{  \mathsf{ g }   \ottsym{=}   \mathsf{ x }   \ottsym{,}   \mathsf{ h }   \ottsym{=}    \mathsf{ x }    \ottsym{+}   1   \kern 0.033em}]    \\  \,  &      &  \textsfb{print} \,   \mathsf{ x }    \ottsym{+}   2   \\  
                                                                             \end{array} \vspace{0.25em} \\  \,  \hspace{2.5mm}  \mathsf{V\kern -0.5pt \scalebox{0.83}{$\mathsf{ 0 }$} }   \\
                                                                             \begin{array}{l!{\,\color{gray}\vrule}lll}
                                                                               &   \mathsf{L\kern -0.25pt \scalebox{0.83}{$\mathsf{ 0 }$} }   &   \textsfb{var} ~  \mathsf{ g }  \nobreak\hspace{0pt}=\nobreak\hspace{0pt} 1   \\  \,  &   \mathsf{L\kern -0.25pt \scalebox{0.83}{$\mathsf{ 1 }$} }   &   \textsfb{var} ~  \mathsf{ h }  \nobreak\hspace{0pt}=\nobreak\hspace{0pt}   \mathsf{ g }    \ottsym{+}   1    \\  \,  &   \mathsf{L\kern -0.25pt \scalebox{0.83}{$\mathsf{ 2 }$} }   &  \textsfb{print} \,   \mathsf{ h }    \ottsym{+}   1   \\  
                                                                             \end{array} \vspace{0.25em} \\  
                                                                   \end{array} \\  
                                                          \end{array} }  $
\caption{Chained \assume instructions: Version 1 was created from 0, then optimized. Version 2 is a fresh copy of 1.}
\label{fig:create-version}
\end{wrapfigure} %
deoptimization targets.
The latter is done by updating the target and varmap of \assume instructions
such that all targets refer to $ \mathsf{V\kern -0.5pt \scalebox{0.83}{$\mathsf{ 0 }$} } $ at the same label as the
current instruction. As the new version starts out as a copy, the varmap
is the identity function.
For instance, if the target contains the variables $  \mathsf{ x }  $ and $  \mathsf{ y }  $, then the varmap is $[  \mathsf{ x }   \ottsym{=}   \mathsf{ x }   \ottsym{,}   \mathsf{ z }   \ottsym{=}   \mathsf{ z }  ]$.
Additional \assume instructions can be added;
\assume instructions that bear no predicates (\ie, the predicate list is either empty
or just tautologies) can be removed while preserving equivalence.
As an example in \autoref{fig:create-version}, the new version $ \mathsf{V\kern -0.5pt \scalebox{0.83}{$\mathsf{ 2 }$} } $ is a copy of $ \mathsf{V\kern -0.5pt \scalebox{0.83}{$\mathsf{ 1 }$} } $; the instruction at
$ \mathsf{L\kern -0.25pt \scalebox{0.83}{$\mathsf{ 0 }$} } $ was added, the instruction at $ \mathsf{L\kern -0.25pt \scalebox{0.83}{$\mathsf{ 1 }$} } $ was
updated, and the one at $ \mathsf{L\kern -0.25pt \scalebox{0.83}{$\mathsf{ 2 }$} } $ was removed.

\clearpage
\begin{wrapfigure}{r}{8cm}
$  {\small
                                                          \begin{array}{l}
                                                              \scalebox{0.97}{$\mathsf{ size }$}   (   \mathsf{ x }   \kern 0.04em ) \\
                                                                   \begin{array}{llll}
                                                                     \kern 1.5pt   \hspace{2.5mm}  \mathsf{V\kern -0.25pt \scalebox{0.87}{$\mathsf{ dup }$} }   \\
                                                                             \begin{array}{l!{\,\color{gray}\vrule}lll}
                                                                               &   \mathsf{L\kern -0.25pt \scalebox{0.83}{$\mathsf{ 1 }$} }   &  \textsfb{assume} ~ \textsfb{true} ~ \ottkw{else} ~    \scalebox{0.97}{$\mathsf{ size }$}  .\kern -0.5pt  \mathsf{V\kern -0.25pt \scalebox{0.87}{$\mathsf{ b }$} }  .\kern -0.5pt  \mathsf{L\kern -0.25pt \scalebox{0.83}{$\mathsf{ 1 }$} }   ~  [{  \mathsf{ x }   \ottsym{=}   \mathsf{ x }   \kern 0.033em}]    \\  \,  &      &   \textsfb{var} ~  \mathsf{ el }  \nobreak\hspace{0pt}=\nobreak\hspace{0pt} 32   \\  \,  &   \mathsf{L\kern -0.25pt \scalebox{0.83}{$\mathsf{ 2 }$} }   &  \textsfb{assume} ~ \textsfb{true} ~ \ottkw{else} ~    \scalebox{0.97}{$\mathsf{ size }$}  .\kern -0.5pt  \mathsf{V\kern -0.25pt \scalebox{0.87}{$\mathsf{ b }$} }  .\kern -0.5pt  \mathsf{L\kern -0.25pt \scalebox{0.83}{$\mathsf{ 2 }$} }   ~  [{  \mathsf{ el }   \ottsym{=}   \mathsf{ el }   \ottsym{,}   \mathsf{ x }   \ottsym{=}   \mathsf{ x }   \kern 0.033em}]    \\  \,  &      &   \textsfb{branch} ~   \mathsf{ x }    =   \textsfb{nil}  ~  \mathsf{L\kern -0.25pt \scalebox{0.83}{$\mathsf{ 4 }$} }  ~  \mathsf{L\kern -0.25pt \scalebox{0.83}{$\mathsf{ 3 }$} }    \\  \,  &   \mathsf{L\kern -0.25pt \scalebox{0.83}{$\mathsf{ 3 }$} }   &   \textsfb{var} ~  \mathsf{ l }  \nobreak\hspace{0pt}=\nobreak\hspace{0pt}   \mathsf{ x }  [{ 0 }]    \\  \,  &      &  \textsfb{return} \,   \mathsf{ l }    \ottsym{*}    \mathsf{ el }    \\  \,  &   \mathsf{L\kern -0.25pt \scalebox{0.83}{$\mathsf{ 4 }$} }   &  \textsfb{return} \, 0  \\  
                                                                             \end{array} \vspace{0.25em} \\  \,  \hspace{2.5mm}  \mathsf{V\kern -0.25pt \scalebox{0.87}{$\mathsf{ b }$} }   ~\dots~ \\  
                                                                   \end{array} \\  
                                                          \end{array} }  $
\caption{A fresh copy of the base version of size.}\label{fig:copied-example}
\vspace{-.1cm}
\end{wrapfigure}%
Updating \assume instructions is not required for correctness. 
But the idea with a new version is that it captures a set of assumptions that can be undone independently from the previously existing assumptions.
Thus, we want to be able to undo one version at a time. In an implementation,
versions might, for example, correspond to optimization tiers.\footnote{A common strategy for VMs is to have different kind of optimizing compilers with different compilation speed versus code quality trade-offs. The more a code fragment is executed, the more powerful optimizations will be applied to it.}
This approach can lead to a cascade of deoptimizations if an
inherited assumption fails; we discuss this in \autoref{sec:checkpoint-combine}.
In the following sections we use the base version~$  \mathsf{V\kern -0.25pt \scalebox{0.87}{$\mathsf{ b }$} }  $ of \autoref{fig:first-example} as our running example.
As a first step, we generate the new version~$  \mathsf{V\kern -0.25pt \scalebox{0.87}{$\mathsf{ dup }$} }  $ with two fresh \assume instructions shown in \autoref{fig:copied-example}.
Initially the predicates are \code{true} and the \assume instructions never fire.
Version~$  \mathsf{V\kern -0.25pt \scalebox{0.87}{$\mathsf{ b }$} }  $ stays unchanged.

\subsection{Injecting Assumptions}\label{sec:inject-assumption}

We advocate an approach where the compiler
first injects assumption predicates,
and then uses them in optimizations.  In contrast, earlier work would
apply an unsound optimization and then recover by adding a guard~(see, for
example, \citet{dub13}).  While the end result is the same, the different
perspective helps with reasoning about correctness.  Assumptions are boolean
predicates, similar to user-provided assertions.  For example, to speculate on
a branch target, the assumption is the branch condition or its negation. It is
therefore correct for the compiler to expect that the predicate holds
immediately following an \assume.
Injecting predicates is done after establishing the correspondence between two versions with \assume instructions, as presented above.
Inserting a fresh \assume in a function is difficult in general, as one
must determine where to transfer control to or how to reconstruct the target
environment.  On the other hand, it is always correct to add a predicate to an
existing \assume.  Thanks to the assumption transparency
invariant it is safe to deoptimize more often to the target.
For instance, in $  \textsfb{assume} ~   \mathsf{ x }    \neq   \textsfb{nil}   \ottsym{,}    \mathsf{ x }    >   10  ~ \ottkw{else} ~  \dots   $ the
predicate $  \mathsf{ x }    \neq   \textsfb{nil} $ was narrowed down to $  \mathsf{ x }    >   10 $.

\section{Optimization with Assumptions}\label{sec:optimizations}

In the previous section we introduced our approach for establishing a fresh
version of a function that lends itself to speculative optimizations. Next,
we introduce classical compiler optimizations that are exemplary of our
approach.  Then we give additional transformations for the \assume in
\autoref{sec:move-checkpoint} and following, and conclude with a case study
in \autoref{sec:case-study}.  All transformations introduced in this section
are proved correct in \autoref{sec:optimizations-correctness}.

\subsection{Constant Propagation}
\label{sec:constantprop}\label{sec:specl-constantprop}

Consider a simple constant propagation pass that finds constant variables
and then updates all uses.  This pass maintains a map
from variable names to constant expressions or \emph{unknown}.  The map is
computed for every position in the instruction stream using a data-flow
analysis. Following the approach by \citet{kil73}, the analysis has an update function to add and
remove constants to the map.  For example analyzing $  \textsfb{var} ~  \mathsf{ x }  \nobreak\hspace{0pt}=\nobreak\hspace{0pt} 2  $, or
$   \mathsf{ x }  \nobreak\hspace{0pt} \leftarrow \nobreak\hspace{0pt} 2  $ adds the mapping $  \mathsf{ x }   \rightarrow 2$.  The instruction
$  \textsfb{var} ~  \mathsf{ y }  \nobreak\hspace{0pt}=\nobreak\hspace{0pt}   \mathsf{ x }    \ottsym{+}   1   $ adds $  \mathsf{ y }   \rightarrow 3$ to the previous map.  Finally,
$ \textsfb{drop} \,  \mathsf{ x }  $ removes a mapping.  Control-flow merges rely on a join
function for intersecting two maps; mappings which agree are preserved, while
others are set to \emph{unknown}.  In a second step, expressions that can be
evaluated to values are replaced and unused variables are removed.  No
additional care needs to be taken to make this pass correct in the presence
of assumptions.  This is because in \sourir, the expressions needed to
reconstruct environments appear in the varmap of the \assume
and are thus visible to the constant propagation pass.
Additionally, the pass can update them, for
example, in $  \textsfb{assume} ~ \textsfb{true} ~ \ottkw{else} ~    \mathsf{F}  .\kern -0.5pt  \mathsf{V}  .\kern -0.5pt  \mathsf{L}   ~  [{  \mathsf{ x }   \ottsym{=}    \mathsf{ y }    \ottsym{+}    \mathsf{ z }    \kern 0.033em}]    $, the variables
$  \mathsf{ y }  $ and $  \mathsf{ z }  $ are treated the same as in $  \textsfb{call} ~  \mathsf{ h }  \nobreak\hspace{0pt}=\nobreak\hspace{0pt}  {  \scalebox{0.97}{$\mathsf{ foo }$}  }  (   \mathsf{ y }    \ottsym{+}    \mathsf{ z }   )  $.  They can be replaced and will not artificially keep
constant variables alive.

Constant propagation can become speculative.  After the instruction
$  \textsfb{assume} ~   \mathsf{ x }    =   0  ~ \ottkw{else} ~  \dots   $, the variable $  \mathsf{ x }  $ is 0.  Therefore,
$  \mathsf{ x }   \leftarrow 0$ is added to the state map.  This is the only
extension required for speculative constant propagation. As an example, in
the case where we speculate on a nil check \hskip -3pt
  \( \\ {\small \begin{array}{lll}
                                                      &      & \dots \\  \,  &   \mathsf{L\kern -0.25pt \scalebox{0.83}{$\mathsf{ 2 }$} }   &  \textsfb{assume} ~   \mathsf{ x }    \neq   \textsfb{nil}  ~ \ottkw{else} ~    \scalebox{0.97}{$\mathsf{ size }$}  .\kern -0.5pt  \mathsf{V\kern -0.25pt \scalebox{0.87}{$\mathsf{ b }$} }  .\kern -0.5pt  \mathsf{L\kern -0.25pt \scalebox{0.83}{$\mathsf{ 2 }$} }   ~  [{  \mathsf{ el }   \ottsym{=}   \mathsf{ el }   \ottsym{,}   \mathsf{ x }   \ottsym{=}   \mathsf{ x }   \kern 0.033em}]    \\  \,  &      &   \textsfb{branch} ~   \mathsf{ x }    =   \textsfb{nil}  ~  \mathsf{L\kern -0.25pt \scalebox{0.83}{$\mathsf{ 4 }$} }  ~  \mathsf{L\kern -0.25pt \scalebox{0.83}{$\mathsf{ 3 }$} }    \\  \,  &      & \dots \\  
                                                \end{array} }  \\ \)
the map is $  \mathsf{ x }   \rightarrow   \neg    \textsfb{nil} $ after $ \mathsf{L\kern -0.25pt \scalebox{0.83}{$\mathsf{ 2 }$} } $.
Evaluating the branch condition under this context yields $  \neg    \textsfb{nil}  ==
 \textsfb{nil} $, and a further optimization opportunity presents itself.

\subsection{Unreachable Code Elimination}\label{sec:prune}

\begin{wrapfigure}{r}{7.5cm}
\vspace{-.3cm}
$  {\small
                                                          \begin{array}{l}
                                                              \scalebox{0.97}{$\mathsf{ size }$}   (   \mathsf{ x }   \kern 0.04em ) \\
                                                                   \begin{array}{llll}
                                                                     \kern 1.5pt   \hspace{2.5mm}  \mathsf{V\kern -0.25pt \scalebox{0.87}{$\mathsf{ pruned }$} }   \\
                                                                             \begin{array}{l!{\,\color{gray}\vrule}lll}
                                                                               &   \mathsf{L\kern -0.25pt \scalebox{0.83}{$\mathsf{ 1 }$} }   &  \textsfb{assume} ~ \textsfb{true} ~ \ottkw{else} ~    \scalebox{0.97}{$\mathsf{ size }$}  .\kern -0.5pt  \mathsf{V\kern -0.25pt \scalebox{0.87}{$\mathsf{ b }$} }  .\kern -0.5pt  \mathsf{L\kern -0.25pt \scalebox{0.83}{$\mathsf{ 1 }$} }   ~  [{  \mathsf{ x }   \ottsym{=}   \mathsf{ x }   \kern 0.033em}]    \\  \,  &      &   \textsfb{var} ~  \mathsf{ el }  \nobreak\hspace{0pt}=\nobreak\hspace{0pt} 32   \\  \,  &   \mathsf{L\kern -0.25pt \scalebox{0.83}{$\mathsf{ 2 }$} }   &  \textsfb{assume} ~   \mathsf{ x }    \neq   \textsfb{nil}  ~ \ottkw{else} ~    \scalebox{0.97}{$\mathsf{ size }$}  .\kern -0.5pt  \mathsf{V\kern -0.25pt \scalebox{0.87}{$\mathsf{ b }$} }  .\kern -0.5pt  \mathsf{L\kern -0.25pt \scalebox{0.83}{$\mathsf{ 2 }$} }   ~  [{  \mathsf{ el }   \ottsym{=}   \mathsf{ el }   \ottsym{,}   \mathsf{ x }   \ottsym{=}   \mathsf{ x }   \kern 0.033em}]    \\  \,  &      &   \textsfb{var} ~  \mathsf{ l }  \nobreak\hspace{0pt}=\nobreak\hspace{0pt}   \mathsf{ x }  [{ 0 }]    \\  \,  &      &  \textsfb{return} \,   \mathsf{ l }    \ottsym{*}    \mathsf{ el }    \\  
                                                                             \end{array} \vspace{0.25em} \\  \,  \hspace{2.5mm}  \mathsf{V\kern -0.25pt \scalebox{0.87}{$\mathsf{ b }$} }   ~\dots~ \\  
                                                                   \end{array} \\  
                                                          \end{array} }  $
\caption{A speculation that the argument is not nil eliminated one of the former branches.}\label{fig:pruned-example}
\vspace{-.2cm}
\end{wrapfigure} %
As shown above, an assumption coupled with constant folding leads to branches becoming
deterministic.  Unreachable code elimination benefits from that.
We consider a two step algorithm: the first pass replaces $  \textsfb{branch} ~ \ottnt{e} ~  \mathsf{L\kern -0.25pt \scalebox{0.83}{$\mathsf{ 1 }$} }  ~  \mathsf{L\kern -0.25pt \scalebox{0.83}{$\mathsf{ 2 }$} }   $ with $ \textsfb{goto} \,  \mathsf{L\kern -0.25pt \scalebox{0.83}{$\mathsf{ 1 }$} }  $ if $e$ is a tautology and with
$ \textsfb{goto} \,  \mathsf{L\kern -0.25pt \scalebox{0.83}{$\mathsf{ 2 }$} }  $ if it is a contradiction.  The second pass removes
unreachable instructions.  In our running example from
\autoref{fig:copied-example}, we add the predicate $  \mathsf{ x }    \neq   \textsfb{nil} $ to the
empty \assume at $ \mathsf{L\kern -0.25pt \scalebox{0.83}{$\mathsf{ 2 }$} } $.  Constant propagation shows that the
branch always goes to $ \mathsf{L\kern -0.25pt \scalebox{0.83}{$\mathsf{ 3 }$} } $, and unreachable code elimination removes
the dead statement at $ \mathsf{L\kern -0.25pt \scalebox{0.83}{$\mathsf{ 4 }$} } $ and branch.  This creates the version shown in \autoref{fig:pruned-example}.
Additionally, constant
propagation can replace $  \mathsf{ el }  $ by $32$. By also replacing its
mention in the varmap of the \assume at $ \mathsf{L\kern -0.25pt \scalebox{0.83}{$\mathsf{ 2 }$} } $, $  \mathsf{ el }  $
becomes unused and can be removed from the optimized version. This yields
version~$  \mathsf{V\kern -0.25pt \scalebox{0.87}{$\mathsf{ o }$} }  $ in \autoref{fig:first-example} at the top.

\subsection{Function Inlining}\label{sec:inlining}

Function inlining is our most involved optimization, since \assume
instructions inherited from the inlinee need to remain correct.  The
inlining itself is standard. Name mangling is used to separate the caller
and callee environments.  As an example \autoref{fig:ex-inlining} shows the
inlining of $  \scalebox{0.97}{$\mathsf{ size }$}  $ into a function~$  \mathsf{main}  $.
Na\"ively inlining without updating the metadata of the \assume at
$ \mathsf{L\kern -0.25pt \scalebox{0.83}{$\mathsf{ 2 }$} } $ will result in an incorrect deoptimization, as execution would
transfer to $  \scalebox{0.97}{$\mathsf{ size }$}  .\kern -0.5pt  \mathsf{V\kern -0.25pt \scalebox{0.87}{$\mathsf{ b }$} }  .\kern -0.5pt  \mathsf{L\kern -0.25pt \scalebox{0.83}{$\mathsf{ 2 }$} }  $ with no way to return to the
$  \mathsf{main}  $ function.  Also, $  \mathsf{main}  $'s part of the
environment is discarded in the transfer and permanently lost.  The solution
is to synthesize a new stack frame.  As shown in the figure, the \assume at
in the optimized $  \mathsf{main}  $ is thus extended with
$   \mathsf{main}  .\kern -0.5pt  \mathsf{V\kern -0.25pt \scalebox{0.87}{$\mathsf{ b }$} }  .\kern -0.5pt  \mathsf{L\kern -0.25pt \scalebox{0.87}{$\mathsf{ ret }$} }   ~  \mathsf{ s }  ~  [{  \mathsf{ pl }   \ottsym{=}   \mathsf{ pl }   \ottsym{,}   \mathsf{ vec }   \ottsym{=}   \mathsf{ vec }   \kern 0.033em}]  $.\kern -2pt  
\begin{wrapfigure}[26]{r}{8.7cm}
$ {\small
                                                          \begin{array}{l}
                                                              \mathsf{main}   (  \,  \kern 0.04em ) \\
                                                                   \begin{array}{llll}
                                                                     \kern 1.5pt   \hspace{2.5mm}  \mathsf{V\kern -0.25pt \scalebox{0.87}{$\mathsf{ inl }$} }   \\
                                                                             \begin{array}{l!{\,\color{gray}\vrule}lll}
                                                                               &      &   \textsfb{array} ~  \mathsf{ pl }  \nobreak\hspace{0pt}=\nobreak\hspace{0pt}[{ 1  \ottsym{,}  2  \ottsym{,}  3  \ottsym{,}  4  \kern 0.08em}]   \\  \,  &      &   \textsfb{array} ~  \mathsf{ vec }  \nobreak\hspace{0pt}=\nobreak\hspace{0pt}[{  \textsf{length} (   \mathsf{ pl }   \kern 0.04em )   \ottsym{,}   \mathsf{ pl }   \kern 0.08em}]   \\  \,  &      &   \textsfb{var} ~  \mathsf{ s }  \nobreak\hspace{0pt}=\nobreak\hspace{0pt} \textsfb{nil}   \\  \,  &      &   \textsfb{var} ~  \mathsf{ x }  \nobreak\hspace{0pt}=\nobreak\hspace{0pt}  \mathsf{ vec }    \\  \,  &      &
                                                                                 \begin{matrix*}[l]  \textsfb{assume} ~   \mathsf{ x }    \neq   \textsfb{nil}  ~ \ottkw{else} &    \scalebox{0.97}{$\mathsf{ size }$}  .\kern -0.5pt  \mathsf{V\kern -0.25pt \scalebox{0.87}{$\mathsf{ b }$} }  .\kern -0.5pt  \mathsf{L\kern -0.25pt \scalebox{0.83}{$\mathsf{ 2 }$} }   ~  [{  \mathsf{ el }   \ottsym{=}  32  \ottsym{,}   \mathsf{ x }   \ottsym{=}   \mathsf{ x }   \kern 0.033em}]     \\
                                                                                                                   &    \mathsf{main}  .\kern -0.5pt  \mathsf{V\kern -0.25pt \scalebox{0.87}{$\mathsf{ b }$} }  .\kern -0.5pt  \mathsf{L\kern -0.25pt \scalebox{0.87}{$\mathsf{ ret }$} }   ~  \mathsf{ s }  ~  [{  \mathsf{ pl }   \ottsym{=}   \mathsf{ pl }   \ottsym{,}   \mathsf{ vec }   \ottsym{=}   \mathsf{ vec }   \kern 0.033em}]   
                                                                                 \end{matrix*} \\  \,  &      &   \textsfb{var} ~  \mathsf{ l }  \nobreak\hspace{0pt}=\nobreak\hspace{0pt}   \mathsf{ x }  [{ 0 }]    \\  \,  &      &    \mathsf{ s }  \nobreak\hspace{0pt} \leftarrow \nobreak\hspace{0pt}   \mathsf{ l }    \ottsym{*}   32    \\  \,  &      &  \textsfb{drop} \,  \mathsf{ l }   \\  \,  &      &  \textsfb{drop} \,  \mathsf{ x }   \\  \,  &      &  \textsfb{goto} \,  \mathsf{L\kern -0.25pt \scalebox{0.87}{$\mathsf{ ret }$} }   \\  \,  &   \mathsf{L\kern -0.25pt \scalebox{0.87}{$\mathsf{ ret }$} }   &  \textsfb{print} \,  \mathsf{ s }   \\  \,  &      &  \textsfb{stop}  \\  
                                                                             \end{array} \vspace{0.25em} \\  \,  \hspace{2.5mm}  \mathsf{V\kern -0.25pt \scalebox{0.87}{$\mathsf{ b }$} }   \\
                                                                             \begin{array}{l!{\,\color{gray}\vrule}lll}
                                                                               &      &   \textsfb{array} ~  \mathsf{ pl }  \nobreak\hspace{0pt}=\nobreak\hspace{0pt}[{ 1  \ottsym{,}  2  \ottsym{,}  3  \ottsym{,}  4  \kern 0.08em}]   \\  \,  &      &   \textsfb{array} ~  \mathsf{ vec }  \nobreak\hspace{0pt}=\nobreak\hspace{0pt}[{  \textsf{length} (   \mathsf{ pl }   \kern 0.04em )   \ottsym{,}   \mathsf{ pl }   \kern 0.08em}]   \\  \,  &      &   \textsfb{call} ~  \mathsf{ s }  \nobreak\hspace{0pt}=\nobreak\hspace{0pt}  {  \scalebox{0.97}{$\mathsf{ size }$}  }  (  \mathsf{ vec }  )   \\  \,  &   \mathsf{L\kern -0.25pt \scalebox{0.87}{$\mathsf{ ret }$} }   &  \textsfb{print} \,  \mathsf{ s }   \\  \,  &      &  \textsfb{stop}  \\  
                                                                             \end{array} \vspace{0.25em} \\  
                                                                   \end{array} \\  \,   \scalebox{0.97}{$\mathsf{ size }$}   (   \mathsf{ x }   \kern 0.04em ) \\
                                                                   \begin{array}{llll}
                                                                     \kern 1.5pt   \hspace{2.5mm}  \mathsf{V\kern -0.25pt \scalebox{0.87}{$\mathsf{ o }$} }   \\
                                                                             \begin{array}{l!{\,\color{gray}\vrule}lll}
                                                                               &   \mathsf{L\kern -0.25pt \scalebox{0.83}{$\mathsf{ 2 }$} }   &  \textsfb{assume} ~   \mathsf{ x }    \neq   \textsfb{nil}  ~ \ottkw{else} ~    \scalebox{0.97}{$\mathsf{ size }$}  .\kern -0.5pt  \mathsf{V\kern -0.25pt \scalebox{0.87}{$\mathsf{ b }$} }  .\kern -0.5pt  \mathsf{L\kern -0.25pt \scalebox{0.83}{$\mathsf{ 2 }$} }   ~  [{  \mathsf{ el }   \ottsym{=}  32  \ottsym{,}   \mathsf{ x }   \ottsym{=}   \mathsf{ x }   \kern 0.033em}]    \\  \,  &      &   \textsfb{var} ~  \mathsf{ l }  \nobreak\hspace{0pt}=\nobreak\hspace{0pt}   \mathsf{ x }  [{ 0 }]    \\  \,  &   \phantom{  \mathsf{L\kern -0.25pt \scalebox{0.87}{$\mathsf{ ret }$} }  }   &  \textsfb{return} \,   \mathsf{ l }    \ottsym{*}   32   \\  
                                                                             \end{array} \vspace{0.25em} \\  \,  \hspace{2.5mm}  \mathsf{V\kern -0.25pt \scalebox{0.87}{$\mathsf{ b }$} }   ~\dots~ \\  
                                                                   \end{array} \\  
                                                          \end{array} } 
$
\caption{An inlining of  \code{size} into a \code{main}.}\label{fig:ex-inlining}
\vspace{-5mm}
\end{wrapfigure}
\noindent This creates an additional stack
frame that returns to the base version of $  \mathsf{main}  $, and stores the result in
$  \mathsf{ s }  $ with the entire caller portion of the environment
reconstructed.  It is always possible to compute the continuation, since the
original call site must have a label and the scope at this label is known.
Overall, after deoptimization, it appears as if version~$  \mathsf{V\kern -0.25pt \scalebox{0.87}{$\mathsf{ b }$} }  $ of
$  \mathsf{main}  $ had called version~$  \mathsf{V\kern -0.25pt \scalebox{0.87}{$\mathsf{ b }$} }  $ of
$  \scalebox{0.97}{$\mathsf{ size }$}  $.  Note, it would erroneous to  create a continuation that returns
to the optimized version of the caller $  \mathsf{V\kern -0.25pt \scalebox{0.87}{$\mathsf{ inl }$} }  $.  If 
deoptimization from the inlined code occurs, it is precisely because some of its
assumptions are invalid.  Multiple continuations can be appended for further
levels of inlining.  The inlining needs to be applied bottom up: for the
next level of inlining, \eg, to inline $  \mathsf{V\kern -0.25pt \scalebox{0.87}{$\mathsf{ inl }$} }  $ into an outer
caller, renamings must also be applied to the expressions in the extra continuations,
since they refer to local variables in $  \mathsf{V\kern -0.25pt \scalebox{0.87}{$\mathsf{ inl }$} }  $.

\subsection{Unrestricted Deoptimization}
\label{sec:move-checkpoint}

The \assume instructions are expensive: they create dependencies on live
variables and are barriers for moving instructions.  Hoisting a
side-effecting instruction over an \assume is invalid, because if we
deoptimize the effect happens twice.  Removing a local variable is also not
possible if its value is needed to reconstruct the target environment.  Thus
it makes sense to insert as few \assume instructions as possible.  On the
other hand it is desirable to be able to ``deoptimize
everywhere''---checking assumptions in the basic block in which they are
used can avoid unnecessary deoptimization---so there is a tension between
speculation and optimization.  Reaching an \assume marks a stable state in
the execution of the program that we can fall back to, similar to a
transaction.  Implementations, like \citep{dub13}, separate deoptimization
points and the associated guards into two separate instructions, to be able
to deoptimize more freely.  As long as the effects of instructions performed
since the last deoptimization point are not observable, it is valid to throw
away intermediate results and resume control from there. Effectively, in
\sourir this corresponds to moving an \assume instruction forward in the
instruction stream, while keeping its deoptimization target fixed.

\noindent An \assume can be moved over another instruction if that instruction:

\begin{enumerate}
  \item has no side-effects and is not a call instruction,
  \item does not interfere with the varmap or predicates, and
  \item has the \assume as its only predecessor instruction.
\end{enumerate}

\noindent The first condition prevents side-effects from happening twice.  The second
condition can be enabled by copying the affected variables at the original
\assume instruction location (\ie, taking a snapshot of the required part of the
environment).\footnote{In an SSA based IR this step is not necessary for SSA variables, since the captured ones are guaranteed to stay unchanged.}
The last condition prevents capturing traces incoming from
other basic blocks where (1) and (2) do not hold for all intermediate
instructions since the original location.  This is not the weakest
condition, but a reasonable, sufficient one.  Let us consider a
modified version of our running example in \autoref{fig:move-example} on the
left.  Again, we have an \assume before the branch, but would like to place
a guard inside one of the branches.

\begin{figure}[h]
\[
  \begin{array}{ll}
   {\small
                                                          \begin{array}{l}
                                                              \scalebox{0.97}{$\mathsf{ size }$}   (   \mathsf{ x }   \kern 0.04em ) \\
                                                                   \begin{array}{llll}
                                                                     \kern 1.5pt   \hspace{2.5mm}  \mathsf{V\kern -0.25pt \scalebox{0.87}{$\mathsf{ any }$} }   \\
                                                                             \begin{array}{l!{\,\color{gray}\vrule}lll}
                                                                               &      &  \textsfb{assume} ~ \textsfb{true} ~ \ottkw{else} ~    \scalebox{0.97}{$\mathsf{ size }$}  .\kern -0.5pt  \mathsf{V\kern -0.25pt \scalebox{0.87}{$\mathsf{ b }$} }  .\kern -0.5pt  \mathsf{L\kern -0.25pt \scalebox{0.83}{$\mathsf{ 1 }$} }   ~  [{  \mathsf{ x }   \ottsym{=}   \mathsf{ x }   \kern 0.033em}]    \\  \,  &      &   \textsfb{var} ~  \mathsf{ el }  \nobreak\hspace{0pt}=\nobreak\hspace{0pt} 32   \\  \,  &      &   \textsfb{branch} ~   \mathsf{ x }    =   \textsfb{nil}  ~  \mathsf{L\kern -0.25pt \scalebox{0.83}{$\mathsf{ 4 }$} }  ~  \mathsf{L\kern -0.25pt \scalebox{0.83}{$\mathsf{ 3 }$} }    \\  \,  &   \mathsf{L\kern -0.25pt \scalebox{0.83}{$\mathsf{ 3 }$} }   &    \mathsf{ x }  \nobreak\hspace{0pt} \leftarrow \nobreak\hspace{0pt}   \mathsf{ x }  [{ 0 }]    \\  \,  &      &  \textsfb{return} \,   \mathsf{ x }    \ottsym{*}    \mathsf{ el }    \\  \,  &   \mathsf{L\kern -0.25pt \scalebox{0.83}{$\mathsf{ 4 }$} }   & \dots \\  
                                                                             \end{array} \vspace{0.25em} \\  \,  \hspace{2.5mm}  \mathsf{V\kern -0.25pt \scalebox{0.87}{$\mathsf{ b }$} }   ~\dots~ \\  
                                                                   \end{array} \\  
                                                          \end{array} } 
    &
 {\small
                                                          \begin{array}{l}
                                                              \scalebox{0.97}{$\mathsf{ size }$}   (   \mathsf{ x }   \kern 0.04em ) \\
                                                                   \begin{array}{llll}
                                                                     \kern 1.5pt   \hspace{2.5mm}  \mathsf{V\kern -0.25pt \scalebox{0.87}{$\mathsf{ any }$} }   \\
                                                                             \begin{array}{l!{\,\color{gray}\vrule}lll}
                                                                               &      &   \textsfb{var} ~  \mathsf{ x0 }  \nobreak\hspace{0pt}=\nobreak\hspace{0pt}  \mathsf{ x }    \\  \,  &      &   \textsfb{var} ~  \mathsf{ el }  \nobreak\hspace{0pt}=\nobreak\hspace{0pt} 32   \\  \,  &      &   \textsfb{branch} ~   \mathsf{ x }    =   \textsfb{nil}  ~  \mathsf{L\kern -0.25pt \scalebox{0.83}{$\mathsf{ 4 }$} }  ~  \mathsf{L\kern -0.25pt \scalebox{0.83}{$\mathsf{ 3 }$} }    \\  \,  &   \mathsf{L\kern -0.25pt \scalebox{0.83}{$\mathsf{ 4 }$} }   &    \mathsf{ x }  \nobreak\hspace{0pt} \leftarrow \nobreak\hspace{0pt}   \mathsf{ x }  [{ 0 }]    \\  \,  &      &  \textsfb{assume} ~   \mathsf{ x }    =   1  ~ \ottkw{else} ~    \scalebox{0.97}{$\mathsf{ size }$}  .\kern -0.5pt  \mathsf{V\kern -0.25pt \scalebox{0.87}{$\mathsf{ b }$} }  .\kern -0.5pt  \mathsf{L\kern -0.25pt \scalebox{0.83}{$\mathsf{ 1 }$} }   ~  [{  \mathsf{ x }   \ottsym{=}   \mathsf{ x0 }   \kern 0.033em}]    \\  \,  &      &  \textsfb{return} \,  1   \ottsym{*}    \mathsf{ el }    \\  \,  &   \mathsf{L\kern -0.25pt \scalebox{0.83}{$\mathsf{ 3 }$} }   & \dots \\  
                                                                             \end{array} \vspace{0.25em} \\  \,  \hspace{2.5mm}  \mathsf{V\kern -0.25pt \scalebox{0.87}{$\mathsf{ b }$} }   ~\dots~ \\  
                                                                   \end{array} \\  
                                                          \end{array} } 
  \end{array}
\]
  \caption{Moving an \assume forward in the instruction stream.}\label{fig:move-example}
\end{figure}

There is an interfering instruction at $ \mathsf{L\kern -0.25pt \scalebox{0.83}{$\mathsf{ 4 }$} } $ that modifies
$  \mathsf{ x }  $.  By creating a temporary variable to hold the value of
$  \mathsf{ x }  $ at the original \assume location, it is possible to resolve
the interference.  Now the \assume can move inside the branch and a
predicate can be added on the updated $  \mathsf{ x }  $ (see right side of the
figure).  Note that the target is unchanged.  This approach allows for the
(logical) separation between the deoptimization point and the position of
assumption predicates.  In the transformed example a stable
deoptimization point is established at the beginning of the function by
storing the value of $  \mathsf{ x }  $, but then the assumption is checked only
in one branch.  The intermediate states are ephemeral and can be safely
discarded when deoptimizing.  For example the variable $  \mathsf{ el }  $ is not
mentioned in the varmap here, it is not captured by the \assume.  Instead it
is recomputed by the original code at the deoptimization target
$  \scalebox{0.97}{$\mathsf{ size }$}  .\kern -0.5pt  \mathsf{V\kern -0.25pt \scalebox{0.87}{$\mathsf{ b }$} }  .\kern -0.5pt  \mathsf{L\kern -0.25pt \scalebox{0.83}{$\mathsf{ 1 }$} }  $.  To be able to deoptimize from any position it is
sufficient to have an \assume after every side-effecting instruction, call,
and control-flow merge.

\subsection{Predicate Hoisting}
\label{sec:hoist-assumpt}

Moving an \assume backwards in the code would require replaying the
moved-over instructions in the case of deoptimization.  Hoisting
$  \textsfb{assume} ~ \textsfb{true} ~ \ottkw{else} ~    \scalebox{0.97}{$\mathsf{ size }$}  .\kern -0.5pt  \mathsf{V\kern -0.25pt \scalebox{0.87}{$\mathsf{ b }$} }  .\kern -0.5pt  \mathsf{L\kern -0.25pt \scalebox{0.83}{$\mathsf{ 2 }$} }   ~  [{  \mathsf{ el }   \ottsym{=}   \mathsf{ el }   \ottsym{,}   \dots   \kern 0.033em}]    $ above $  \textsfb{var} ~  \mathsf{ el }  \nobreak\hspace{0pt}=\nobreak\hspace{0pt} 32  $ is allowed if the varmap is changed to $[  \mathsf{ el }   \ottsym{=}  32  \ottsym{,}   \dots  ]$ to
compensate for the lost definition.  However this approach is tricky and
does not work for instructions with multiple predecessors as it could lead
to conflicting compensation code.  But a simple alternative to hoisting
\assume is to hoist a \emph{predicate} from one \assume to a previous one.
To understand why, let us decompose the approach into two steps.  Given an
\assume at $ \mathsf{L\kern -0.25pt \scalebox{0.83}{$\mathsf{ 1 }$} } $ that dominates a second one at $ \mathsf{L\kern -0.25pt \scalebox{0.83}{$\mathsf{ 2 }$} } $, we copy a
predicate from the latter to the former.  This is valid since the assumption
transparency invariant allows strengthening predicates.  A data-flow
analysis can determine if the copied predicate from $ \mathsf{L\kern -0.25pt \scalebox{0.83}{$\mathsf{ 1 }$} } $ is available
at $ \mathsf{L\kern -0.25pt \scalebox{0.83}{$\mathsf{ 2 }$} } $, in which case it can be removed from the original
instruction.  In our running example, version $  \mathsf{V\kern -0.25pt \scalebox{0.87}{$\mathsf{ pruned }$} }  $ in
\autoref{fig:pruned-example} has two \assume instructions and one predicate.
It is trivial to hoist $  \mathsf{ x }    \neq   \textsfb{nil} $, since there are no interfering
instructions.  This allows us to remove the \assume with the larger
scope. More interestingly, in the case of a loop-invariant assumption,
predicates can be hoisted out of the loop.

\subsection{Assume Composition}
\label{sec:checkpoint-combine}

As we have argued in \autoref{sec:new-version}, it is beneficial to
undo as few assumptions as possible.
On the other hand, deoptimizing an assumption added in an early version
cascades through all the later versions.
To be able to remove chained \assume instructions, we show
that assumptions are \emph{composable}.  If an \assume in version~$ \mathsf{V\kern -0.5pt \scalebox{0.83}{$\mathsf{ 3 }$} } $
transfers control to a target $ \mathsf{V\kern -0.5pt \scalebox{0.83}{$\mathsf{ 2 }$} } . \mathsf{L\kern -0.25pt \scalebox{0.87}{$\mathsf{ a }$} } $, that is itself an
assumption with $ \mathsf{V\kern -0.5pt \scalebox{0.83}{$\mathsf{ 1 }$} } . \mathsf{L\kern -0.25pt \scalebox{0.87}{$\mathsf{ b }$} } $ as target, then we can combine the
metadata to take both steps at once.  By the assumption transparency
invariant, the pre- and post-deoptimization states are equivalent: even if
the assumptions are not the same, it is correct to conservatively trigger
the second deoptimization.  For example, an instruction $  \textsfb{assume} ~ \ottnt{e} ~ \ottkw{else} ~    \mathsf{F}  .\kern -0.5pt  \mathsf{V\kern -0.5pt \scalebox{0.83}{$\mathsf{ 2 }$} }  .\kern -0.5pt  \mathsf{L\kern -0.25pt \scalebox{0.87}{$\mathsf{ a }$} }   ~  [{  \mathsf{ x }   \ottsym{=}  1  \kern 0.033em}]    $ that jumps to $  \textsfb{assume} ~ \ottnt{e'} ~ \ottkw{else} ~    \mathsf{F}  .\kern -0.5pt  \mathsf{V\kern -0.5pt \scalebox{0.83}{$\mathsf{ 0 }$} }  .\kern -0.5pt  \mathsf{L\kern -0.25pt \scalebox{0.87}{$\mathsf{ b }$} }   ~  [{  \mathsf{ y }   \ottsym{=}   \mathsf{ x }   \kern 0.033em}]    $ can be combined as $  \textsfb{assume} ~ \ottnt{e}  \ottsym{,}  \ottnt{e'} ~ \ottkw{else} ~    \mathsf{F}  .\kern -0.5pt  \mathsf{V\kern -0.5pt \scalebox{0.83}{$\mathsf{ 0 }$} }  .\kern -0.5pt  \mathsf{L\kern -0.25pt \scalebox{0.87}{$\mathsf{ b }$} }   ~  [{  \mathsf{ y }   \ottsym{=}  1  \kern 0.033em}]    $.  This new unified \assume skips the
intermediate version~$ \mathsf{V\kern -0.5pt \scalebox{0.83}{$\mathsf{ 2 }$} } $ and goes to $ \mathsf{V\kern -0.5pt \scalebox{0.83}{$\mathsf{ 0 }$} } $ directly.  This could
be an interesting approach for multi-tier JITs: after the system stabilizes,
intermediate versions are rarely used and may be discarded.

\subsection{Case Study}\label{sec:case-study}

\begin{figure}[!b]
\begin{table}[H]
\centering
\resizebox{1.03\textwidth}{!}{
\begin{tabular}{cc}
  $ \hspace{-5mm}  {\small
                                                          \begin{array}{l}
                                                              \scalebox{0.97}{$\mathsf{ div }$}   (   \mathsf{ tagx }   \ottsym{,}   \mathsf{ x }   \ottsym{,}   \mathsf{ tagy }   \ottsym{,}   \mathsf{ y }   \kern 0.04em ) \\
                                                                   \begin{array}{llll}
                                                                     \kern 1.5pt   \hspace{2.5mm}  \mathsf{V\kern -0.25pt \scalebox{0.87}{$\mathsf{ base }$} }   \\
                                                                             \begin{array}{l!{\,\color{gray}\vrule}lll}
                                                                               &   \mathsf{L\kern -0.25pt \scalebox{0.83}{$\mathsf{ 1 }$} }   &   \textsfb{branch} ~   \mathsf{ tagx }    \neq    \mathsf{ NUM }   ~  \mathsf{L\kern -0.25pt \scalebox{0.87}{$\mathsf{ slow }$} }  ~  \mathsf{L\kern -0.25pt \scalebox{0.83}{$\mathsf{ 2 }$} }    \\  \,  &   \mathsf{L\kern -0.25pt \scalebox{0.83}{$\mathsf{ 2 }$} }   &   \textsfb{branch} ~   \mathsf{ tagy }    \neq    \mathsf{ NUM }   ~  \mathsf{L\kern -0.25pt \scalebox{0.87}{$\mathsf{ slow }$} }  ~  \mathsf{L\kern -0.25pt \scalebox{0.83}{$\mathsf{ 3 }$} }    \\  \,  &   \mathsf{L\kern -0.25pt \scalebox{0.83}{$\mathsf{ 3 }$} }   &   \textsfb{branch} ~   \mathsf{ x }    =   0  ~  \mathsf{L\kern -0.25pt \scalebox{0.87}{$\mathsf{ error }$} }  ~  \mathsf{L\kern -0.25pt \scalebox{0.83}{$\mathsf{ 4 }$} }    \\  \,  &   \mathsf{L\kern -0.25pt \scalebox{0.83}{$\mathsf{ 4 }$} }   &  \textsfb{return} \,   \mathsf{ y }    \ottsym{/}    \mathsf{ x }    \\  \,  &   \mathsf{L\kern -0.25pt \scalebox{0.87}{$\mathsf{ slow }$} }   & \dots \\  
                                                                             \end{array} \vspace{0.25em} \\  
                                                                   \end{array} \\  
                                                          \end{array} }  $
  & \hspace{-15mm}
$  {\small \begin{array}{lll}
                                                      &      &  \textsfb{assume} ~   \mathsf{ tagx }    =    \mathsf{ NUM }    \ottsym{,}    \mathsf{ tagy }    =    \mathsf{ NUM }   ~ \ottkw{else} ~    \scalebox{0.97}{$\mathsf{ div }$}  .\kern -0.5pt  \mathsf{V\kern -0.25pt \scalebox{0.87}{$\mathsf{ b }$} }  .\kern -0.5pt  \mathsf{L\kern -0.25pt \scalebox{0.83}{$\mathsf{ 1 }$} }   ~  [{  \dots   \kern 0.033em}]    \\  \,  &      &   \textsfb{branch} ~   \mathsf{ x }    =   0  ~  \mathsf{L\kern -0.25pt \scalebox{0.87}{$\mathsf{ error }$} }  ~  \mathsf{L\kern -0.25pt \scalebox{0.83}{$\mathsf{ 4 }$} }    \\  \,  &   \mathsf{L\kern -0.25pt \scalebox{0.83}{$\mathsf{ 4 }$} }   &  \textsfb{return} \,   \mathsf{ y }    \ottsym{/}    \mathsf{ x }    \\  \,  &      & \dots \\  
                                                \end{array} }  $
  \\ (a) & \hspace{-6mm} (b)
  \\ &  \\
$  {\small \begin{array}{lll}
                                                      &      &  \textsfb{assume} ~   \mathsf{ tagx }    =    \mathsf{ NUM }    \ottsym{,}    \mathsf{ x }    \neq   0  ~ \ottkw{else} ~    \scalebox{0.97}{$\mathsf{ div }$}  .\kern -0.5pt  \mathsf{V\kern -0.25pt \scalebox{0.87}{$\mathsf{ b }$} }  .\kern -0.5pt  \mathsf{L\kern -0.25pt \scalebox{0.83}{$\mathsf{ 1 }$} }   ~  [{  \dots   \kern 0.033em}]    \\  \,  &      &   \textsfb{branch} ~   \mathsf{ tagy }    \neq    \mathsf{ NUM }   ~  \mathsf{L\kern -0.25pt \scalebox{0.87}{$\mathsf{ slow }$} }  ~  \mathsf{L\kern -0.25pt \scalebox{0.83}{$\mathsf{ 4 }$} }    \\  \,  &   \mathsf{L\kern -0.25pt \scalebox{0.83}{$\mathsf{ 4 }$} }   &  \textsfb{return} \,   \mathsf{ y }    \ottsym{/}    \mathsf{ x }    \\  \,  &      & \dots \\  
                                                \end{array} }  $
  & \hspace{-6mm}
$  {\small \begin{array}{lll}
                                                      &   \phantom{  \mathsf{L\kern -0.25pt \scalebox{0.83}{$\mathsf{ 4 }$} }  }   &  \textsfb{assume} ~   \mathsf{ tagx }    =    \mathsf{ NUM }    \ottsym{,}    \mathsf{ tagy }    =    \mathsf{ NUM }    \ottsym{,}    \mathsf{ x }    \neq   0  ~ \ottkw{else} ~    \scalebox{0.97}{$\mathsf{ div }$}  .\kern -0.5pt  \mathsf{V\kern -0.25pt \scalebox{0.87}{$\mathsf{ b }$} }  .\kern -0.5pt  \mathsf{L\kern -0.25pt \scalebox{0.83}{$\mathsf{ 1 }$} }   ~  [{  \dots   \kern 0.033em}]    \\  \,  &      &  \textsfb{return} \,   \mathsf{ y }    \ottsym{/}    \mathsf{ x }    \\  
                                                \end{array} }  $
  \\ (c) & \hspace{-6mm} (d)
\end{tabular}}
\end{table}
\caption{Case study.}\label{case}
\end{figure}

We conclude with an example.  In dynamic languages code is often dispatched
on runtime types.  If types were known, code could be specialized, resulting
in faster code with fewer checks and branches. Consider Figure \ref{case}(a)
which implements a generic binary division function that expects two values
and their type tags.
No static information is available; the arguments could be any type.
Therefore, multiple checks are needed before the division; for example the
slow branch will require even more checks on the exact value of the type
tag.  Suppose there is profiling information that indicates numbers can be
expected.  The function is specialized by speculatively pruning the
branches as shown in Figure \ref{case}(b).
In certain cases, \sourir's transformations can make it appear as though
checks have been reordered.  Consider a variation of the previous example,
that speculates on $  \mathsf{ x }  $, but not $  \mathsf{ y }  $ as shown in
Figure \ref{case}(c).
In this version, both checks on $  \mathsf{ x }  $ are performed first and then the
ones on $  \mathsf{ y }  $, whereas in the unoptimized version they are
interleaved.  By ruling out an exception early, it is possible to perform
the checks in a more efficient order.  The fully speculated on version
contains only the integer division and the required assumptions
(Figure \ref{case}(d)).
This version has no more branches and is a candidate for inlining.

\section{Speculative Compilation Formalized} \label{sec:sourir-intro-formal}

A \sourir program contains several functions, each of which can have
multiple versions.  This high-level structure is described in
\autoref{fig:syntax-prg}.  The first version is considered the currently
active version and will be executed by a call instruction.  Each version
consists of a stream of labeled instructions.  We use an indentation-based
syntax that directly reflects this structure and omit unreferenced instruction
labels. 

\begin{figure}[H]
\[
\begin{array}{lrll}
  P & ::= & \boxed{ 
   {\small
                                                          \begin{array}{l}
                                                             \ottmv{F}  (  x^*  \kern 0.04em ) \\
                                                                   \begin{array}{llll}
                                                                     \kern 1.5pt   \hspace{2.5mm} \ottmv{V}  \\
                                                                             \begin{array}{l!{\,\color{gray}\vrule}lll}
                                                                               &  \ottmv{L}  &  \ottnt{i}  \\  
                                                                             \end{array} \vspace{0.25em} \\  
                                                                   \end{array} \\  
                                                          \end{array} }  } & \text{indentation-based syntax}
\end{array}
\]
\[
\begin{array}{lrll}
  \ottnt{P}        & ::=    &  \ottmv{F} (  x^*  ) :  \mathit{D_F} , ...
  & \text{a program is a list of named functions} \\
  \mathit{D_F}  & ::=    &  \ottmv{V}  :  \ottnt{I} , ...
  & \text{a function definition is a list of versioned instruction streams} \\
  \ottnt{I}        & ::=    & \ottmv{L} : \ottnt{i}, ...
  & \text{an instruction stream with labeled instructions}\\
\end{array}
\]
\caption{Program syntax.}\label{fig:syntax-prg}
\end{figure}

Besides grammatical and scoping validity, we impose the following
well-formedness requirements to ease analysis and reasoning.  The last
instruction of each version of the $  \mathsf{main}  $ function is
$ \textsfb{stop} $.  Two variable declarations for the same name cannot occur in the
same instruction stream.  This simplifies reasoning by letting us use
variable names to unambiguously track information depending on the
declaration site. Different versions have separate scopes and can have names
in common.  If a function reference $ { \ottmv{F} } $ is used,
that function $\ottmv{F}$ must exist.  Source
and target of control-flow transitions must have the same set of declared
variables. This eases determining the environment at any point.  To
jump to a label $\ottmv{L}$, all variables not in scope at $\ottmv{L}$ must be
dropped ($\textsfb{drop} \,  \mathsf{ x } $).

\subsection{Operational Semantics: Expressions}

\autoref{fig:semantics-expressions} gives the semantics of
expressions. Evaluation $\ottnt{e}$ returns a value $\ottnt{v}$, which may be a
literal $\ottnt{lit}$, a function, or an address $a$.  Arrays are
represented by addresses into heap $\ottnt{M}$. The heap is a map from addresses
to blocks of values $\ottsym{[}  \ottnt{v_{{\mathrm{1}}}}  \ottsym{,} \, .. \, \ottsym{,}  \ottnt{v_{\ottmv{n}}}  \ottsym{]}$. An environment $\ottnt{E}$ is a mapping
from variables to values. Evaluation is defined by a relation $ \ottnt{M} ~ \ottnt{E} ~ \ottnt{e}   \rightarrow   \ottnt{v} $: under  $\ottnt{M}$ and environment $\ottnt{E}$, $\ottnt{e}$ evaluates to
$\ottnt{v}$. This definition in turn relies on a relation $ \ottnt{E} ~ \ottnt{se}   \rightharpoonup   \ottnt{v} $
defining evaluation of simple expressions $\ottnt{se}$, which does not access
arrays.  The notation $ [\![ primop ]\!] $ to denote, for each primitive
operation $ primop $, a partial function on values.  Arithmetic operators
and arithmetic comparison operators are only defined when their arguments
are numbers. Equality and inequality are defined for all values.
The relation $ \ottnt{M} ~ \ottnt{E} ~ \ottnt{e}   \rightarrow   \ottnt{v} $, when seen as a function from $\ottnt{M}$,
$\ottnt{E}$, $\ottnt{e}$ to $\ottnt{v}$, is partial: it is not defined on all inputs.
For example, there is no $\ottnt{v}$ such that the relation $ \ottnt{M} ~ \ottnt{E} ~   \mathsf{ x }  [{ \ottnt{se} }]    \rightarrow   \ottnt{v} $ holds if $ \ottnt{E}  (   \mathsf{ x }   ) $ is not an address $a$, if $a$ is
not bound in $\ottnt{M}$, if $\ottnt{se}$ does not reduce to a number $n$, or if $n$
is out of bounds.

\begin{figure}[t]
\begin{small}
\begin{mathpar}
\grammartabularSTY{
  \ottv{}\ottinterrule
 }

\begin{array}{ll}
addr ::= a & \text{addresses} \\
\ottnt{M} ::= (\mathit{a} \, \rightarrow  \ottsym{[}  \ottnt{v_{{\mathrm{1}}}}  \ottsym{,} \, .. \, \ottsym{,}  \ottnt{v_{\ottmv{n}}}  \ottsym{]})^* & \text{heap} \\
\ottnt{E} ::= ( \mathsf{ x }   \rightarrow  \ottnt{v})^* & \text{environment} \\
\end{array}
\end{mathpar}

\ottusedrule{\ottdruleLiteral{}}
\hskip 2cm
\ottusedrule{\ottdruleFunref{}}
\hskip 2cm
\ottusedrule{\ottdruleLookup{}}

\vskip 5mm

\ottusedrule{\ottdruleSimpleExp{}}
\hskip 2cm
\ottusedrule{\ottdrulePrimop{}}

\vskip 5mm

\ottusedrule{\ottdruleVecLen{}}
\hskip 2cm
\ottusedrule{\ottdruleVecAccess{}}

\end{small}
  \caption{Evaluation \( \ottnt{M} ~ \ottnt{E} ~ \ottnt{e}   \rightarrow   \ottnt{v} \) of expressions and \( \ottnt{E} ~ \ottnt{se}   \rightharpoonup   \ottnt{v} \) of simple expressions.}
\label{fig:semantics-expressions}
\end{figure}

\subsection{Operational Semantics: Instructions and Programs}

We define a small-step, labeled operational semantics with a notion of
machine state, or configuration, that represents the dynamic state of a
program being executed, and a transition relation between configurations.  A
configuration is a six-component tuple $ \langle  \ottnt{P} \, \ottnt{I} \, \ottmv{L} \, K^* \; \ottnt{M} \, \ottnt{E}  \rangle $ described in
\autoref{fig:semantics-state}.
Continuations $\ottnt{K}$ are tuples of the form $ \langle  \ottnt{I} ~ \ottmv{L} ~ \mathit{x} ~ \ottnt{E}  \rangle $,
storing the information needed to correctly return to a caller function.  On
a call $ \textsfb{call} ~  \mathsf{ x }  \nobreak\hspace{0pt}=\nobreak\hspace{0pt} \ottnt{e} ( \ottnt{e_{{\mathrm{1}}}}  \ottsym{,} \, .. \, \ottsym{,}  \ottnt{e_{\ottmv{n}}} ) $, the continuation pushed on the stack
contains the current instruction stream $\ottnt{I}$ (to be restored on return),
the label $\ottmv{L}$ of the next instruction after the call (the return label),
the variable $\mathit{x}$ to name the returned result, and 
environment $\ottnt{E}$.  For the details, see the reduction rules for
$ \textsfb{call} $ and $ \textsfb{return} $ in \autoref{fig:semantics-reduction}.

\begin{figure}[h]\begin{small}\begin{mathpar}\begin{array}{ll}
\begin{array}{c}
  \ottnt{C} ::=  \langle  \ottnt{P} \, \ottnt{I} \, \ottmv{L} \, K^* \; \ottnt{M} \, \ottnt{E}  \rangle   \\ \text{configuration}
\end{array}
&
\left(
\begin{array}{lrll}
  \ottnt{P} & &                   & \text{program} \\
  \ottnt{I} & &                   & \text{instructions} \\
  \ottmv{L} & &                   & \text{next label} \\
  K^*&::=& (\ottnt{K_{{\mathrm{1}}}}  \ottsym{,} \, .. \, \ottsym{,}  \ottnt{K_{\ottmv{n}}})  & \text{call stack} \\
  \ottnt{M} & &                   & \text{heap} \\
  \ottnt{E} & &                   & \text{environment} \\
\end{array}
\right.
\\ & \\
\begin{array}{c}
  \ottnt{K} ::=  \langle  \ottnt{I} ~ \ottmv{L} ~ \mathit{x} ~ \ottnt{E}  \rangle    \\ \text{continuation}
\end{array}
&
\left(
\begin{array}{lrll}
  \ottnt{I} & &  & \text{code of calling function} \\
  \ottmv{L} & &  & \text{return label} \\
  \mathit{x} & & & \text{return variable} \\
  \ottnt{E} & &  & \text{environment at call site} \\
\end{array}
\right.
\end{array}\end{mathpar}\end{small}
\vskip -3mm
\caption{Abstract machine state.}\label{fig:semantics-state}
\end{figure}

\begin{figure}[h]
\begin{small}\begin{mathpar}
\grammartabularSTY{  \ottAct{}\ottinterrule }

\grammartabularSTY{  \ottActopt{}\ottinterrule }

\grammartabularSTY{  \ottT{}\ottinterrule }
\end{mathpar}

\ottusedrule{\ottdruleRefl{}}
\hskip 1.4cm
\ottusedrule{\ottdruleSilentCons{}}
\hskip 1.4cm
\ottusedrule{\ottdruleActionCons{}}

\end{small}
\vskip -1mm
\caption{Actions and traces.}\label{fig:semantics-actions}
\end{figure}

\noindent
The relation $ \ottnt{C}  \nto[  \mathit{A}_\tau  ]  \ottnt{C'} $ specifies that executing the next
instruction may result in the configuration $\ottnt{C'}$.  The action
$\mathit{A}_\tau$ indicates whether this reduction is observable: it is either
the silent action, written $ \tau $, an I/O action
$  \textsf{read}~ \ottnt{lit}  $ or $  \textsf{print}~ \ottnt{lit}  $, or $  \textsf{stop}  $.
We write $ \ottnt{C}  \tto[  \mathit{T}  ]  \ottnt{C'} $ when there are zero or more steps from $\ottnt{C}$
to $\ottnt{C'}$.  The trace $\mathit{T}$ is a list of non-silent actions in the order
in which they appeared. 
Actions are defined in \autoref{fig:semantics-actions}, and the full
reduction relation is given in \autoref{fig:semantics-reduction}.

\begin{figure}
\begin{Small}
\begin{mathpar}
\ottusedrule{\ottdruleDecl{}}
\quad
\ottusedrule{\ottdruleDrop{}}

\ottusedrule{\ottdruleArrayDef{}}
\quad
\ottusedrule{\ottdruleArrayDecl{}}

\ottusedrule{\ottdruleUpdate{}}
\quad
\ottusedrule{\ottdruleArrayUpdate{}}

\ottusedrule{\ottdruleRead{}}
\quad
\ottusedrule{\ottdrulePrint{}}

\ottusedrule{\ottdruleBranchT{}}
\quad
\ottusedrule{\ottdruleBranchF{}}

\ottusedrule{\ottdruleGoto{}}
\quad
\ottusedrule{\ottdruleStop{}}

\ottusedrule{\ottdruleCall{}}
\quad
\ottusedrule{\ottdruleReturn{}}

\ottusedrule{\ottdruleAssumePass{}}
\quad
\ottusedrule{\ottdruleAssumeDeopt{}}

\ottusedrule{\ottdruleDeoptimizeConf{}}

\ottusedrule{\ottdruleEvalEnv{}}
\end{mathpar}
\end{Small}
  \caption{Reduction relation \( \ottnt{C}  \nto[   \tau   ]  \ottnt{C'} \) for \sourir{} IR.}
\label{fig:semantics-reduction}
\end{figure}

Most rules get the current instruction, $ \ottnt{I} ( \ottmv{L} ) $, perform an
operation, and advance to the next label, referred to by the shorthand
\( ( \ottmv{L} \kern-1pt+\kern-2pt1) \).  The $  \textsf{read}~ \ottnt{lit}  $ and $  \textsf{print}~ \ottnt{lit}  $
actions represent observable I/O operations.  They are emitted by
\textsc{Read} and \textsc{Print} in \autoref{fig:semantics-reduction}.  The
action $  \textsf{read}~ \ottnt{lit}  $ on the $\textsfb{read} \, \mathit{x}$ transition may be any
literal value.  This is the only reduction rule that is non-deterministic.
Note that the relation $ \ottnt{C}  \tto[    ]  \ottnt{C'} $, containing only sequences of
silent reductions, is deterministic.
The $ \textsfb{stop} $ reduction emits the $  \textsf{stop}  $ transition, and also
produces a configuration with no instructions, $\emptyset$.
This is a technical device to ensure that the resulting configuration is
stuck. A program with a silent loop has a different trace from a program
that halts.
Given a program $\ottnt{P}$, let $ \mathsf{start}( \ottnt{P} ) $ be its starting
configuration, and $ \mathsf{reachable}( \ottnt{P} ) $ be the set of configurations
reachable from it; they are all the states that may be encountered during a
valid run of $\ottnt{P}$.
\begin{mathpar}
\ottusedrule{\ottdruleStartConf{}}
\quad
 \mathsf{reachable}( \ottnt{P} ) 
\  \mathrel{\stackrel{\mathsf{def} }{=} } \ %
\{ \ottnt{C} \mid \exists \mathit{T},\   \mathsf{start}( \ottnt{P} )   \tto[  \mathit{T}  ]  \ottnt{C}  \}
\end{mathpar}

\subsection{Equivalence of Configurations: Bisimulation}

We use weak bisimulation to prove equivalence between configurations.  The
idea is to define, for each program transformation, a correspondence
relation $R$ between configurations over the source and transformed
programs.  We show that related configurations have the same observable
behavior, and reducing them results in configurations that are themselves
related.
Two programs are equivalent if their starting configurations are related.

\begin{definition}[Weak Bisimulation]
  \label{def:bisimulation}
  Given programs $\ottnt{P_{{\mathrm{1}}}}$ and $\ottnt{P_{{\mathrm{2}}}}$ and relation $R$ between the
  configurations of $\ottnt{P_{{\mathrm{1}}}}$ and $\ottnt{P_{{\mathrm{2}}}}$, $R$ is a \emph{weak simulation}
  if for any related states $(\ottnt{C_{{\mathrm{1}}}}, \ottnt{C_{{\mathrm{2}}}}) \in R$ and any reduction $ \ottnt{C_{{\mathrm{1}}}}  \nto[  \mathit{A}_\tau  ]  \ottnt{C'_{{\mathrm{1}}}} $ over $\ottnt{P_{{\mathrm{1}}}}$, there exists a reduction $ \ottnt{C_{{\mathrm{2}}}}  \tto[    ~ \mathit{A}_\tau   ]  \ottnt{C'_{{\mathrm{2}}}} $ over $\ottnt{P_{{\mathrm{2}}}}$ such that $(\ottnt{C'_{{\mathrm{1}}}},\ottnt{C'_{{\mathrm{2}}}})$ are
  themselves related by $R$.  Reduction over $\ottnt{P_{{\mathrm{2}}}}$ is allowed to take
  zero or more steps, but not to change the trace.  In other words, the
  diagram on the left below can always be completed into the diagram on the
  right. 
\begin{mathpar}
  \begin{tikzcd}[row sep = large, column sep = large]
    \ottnt{C_{{\mathrm{1}}}}
    \arrow[no head, squiggly, d, "R"]
    \arrow[rar, r, "A_\tau"]
    &
    \ottnt{C'_{{\mathrm{1}}}}
    \\
    \ottnt{C_{{\mathrm{2}}}}
    &
  \end{tikzcd}

  \begin{tikzcd}[row sep = large, column sep = large]
    \ottnt{C_{{\mathrm{1}}}}
    \arrow[no head, squiggly, d, "R"]
    \arrow[rar, r, "A_\tau"]
    &
    \ottnt{C'_{{\mathrm{1}}}}
    \arrow[no head, squiggly, d, "R"]
    \\
    \ottnt{C_{{\mathrm{2}}}}
    \arrow[to*, r, "A_\tau"]
    &
    \ottnt{C'_{{\mathrm{2}}}}
  \end{tikzcd}
\end{mathpar}
$R$ is a weak bisimulation if it is a weak simulation and the symmetric
relation $R^{-1}$ also is---a reduction from $\ottnt{C_{{\mathrm{2}}}}$ can be matched by
$\ottnt{C_{{\mathrm{1}}}}$.  Finally, two configurations are \emph{weakly bisimilar} if
there exists a weak bisimulation $R$ that relates them.
\end{definition}

In the remainder, the adjective weak is always implied.
The following result is standard, and essential to compose the correctness
proof of subsequent transformation passes.
\begin{lemma}[Transitivity]
  If $R_{12}$ is a weak bisimulation between $\ottnt{P_{{\mathrm{1}}}}$ and $\ottnt{P_{{\mathrm{2}}}}$, and
  $R_{23}$ is a weak bisimulation between $\ottnt{P_{{\mathrm{2}}}}$ and $\ottnt{P_{{\mathrm{3}}}}$,
  then the composed relation $R_{13}  \mathrel{\stackrel{\mathsf{def} }{=} }  (R_{12};R_{23})$ is a weak
  bisimulation between $\ottnt{P_{{\mathrm{1}}}}$ and $\ottnt{P_{{\mathrm{3}}}}$.
\end{lemma}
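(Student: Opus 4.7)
The plan is to first define the composed relation precisely: let $R_{13}$ consist of all pairs $(\ottnt{C_{{\mathrm{1}}}}, \ottnt{C_{{\mathrm{3}}}})$ such that there exists some $\ottnt{C_{{\mathrm{2}}}}$ with $(\ottnt{C_{{\mathrm{1}}}}, \ottnt{C_{{\mathrm{2}}}}) \in R_{12}$ and $(\ottnt{C_{{\mathrm{2}}}}, \ottnt{C_{{\mathrm{3}}}}) \in R_{23}$. I then need to show $R_{13}$ is a weak simulation; the bisimulation property follows by a symmetric argument applied to $R_{12}^{-1}$ and $R_{23}^{-1}$, whose composition is $R_{13}^{-1}$.

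Before tackling $R_{13}$ directly, I would first establish an auxiliary \emph{lifting lemma}: if $R$ is a weak simulation between $\ottnt{P}$ and $\ottnt{P'}$, $(\ottnt{C}, \ottnt{C'}) \in R$, and $ \ottnt{C}  \tto[  \mathit{T}  ]  \ottnt{C''} $, then there exists $\ottnt{C'''}$ with $ \ottnt{C'}  \tto[  \mathit{T}  ]  \ottnt{C'''} $ and $(\ottnt{C''}, \ottnt{C'''}) \in R$. This follows by induction on the derivation of $ \ottnt{C}  \tto[  \mathit{T}  ]  \ottnt{C''} $, using the three cases \textsc{Refl}, \textsc{SilentCons}, and \textsc{ActionCons} from Figure~\ref{fig:semantics-actions}; at each step, the single-step simulation property of $R$ produces the needed matching multi-step reduction, and concatenation of multi-step reductions is immediate from the same rules.

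With the lifting lemma in hand, the main proof becomes short. Given $(\ottnt{C_{{\mathrm{1}}}}, \ottnt{C_{{\mathrm{3}}}}) \in R_{13}$ witnessed by some $\ottnt{C_{{\mathrm{2}}}}$, and a single step $ \ottnt{C_{{\mathrm{1}}}}  \nto[  \mathit{A}_\tau  ]  \ottnt{C'_{{\mathrm{1}}}} $ in $\ottnt{P_{{\mathrm{1}}}}$, I first apply the weak simulation property of $R_{12}$ to obtain $ \ottnt{C_{{\mathrm{2}}}}  \tto[    ~ \mathit{A}_\tau   ]  \ottnt{C'_{{\mathrm{2}}}} $ with $(\ottnt{C'_{{\mathrm{1}}}}, \ottnt{C'_{{\mathrm{2}}}}) \in R_{12}$. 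Then I apply the lifting lemma to $R_{23}$ with the multi-step reduction $ \ottnt{C_{{\mathrm{2}}}}  \tto[    ~ \mathit{A}_\tau   ]  \ottnt{C'_{{\mathrm{2}}}} $ to obtain $ \ottnt{C_{{\mathrm{3}}}}  \tto[    ~ \mathit{A}_\tau   ]  \ottnt{C'_{{\mathrm{3}}}} $ with $(\ottnt{C'_{{\mathrm{2}}}}, \ottnt{C'_{{\mathrm{3}}}}) \in R_{23}$. The witness $\ottnt{C'_{{\mathrm{2}}}}$ then places $(\ottnt{C'_{{\mathrm{1}}}}, \ottnt{C'_{{\mathrm{3}}}})$ in the composed relation $R_{13}$, completing the simulation diagram.

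The one subtle point — and what I would flag as the main obstacle — is that the single-step simulation condition for $R_{23}$ does not directly apply to the multi-step trace $ \ottnt{C_{{\mathrm{2}}}}  \tto[    ~ \mathit{A}_\tau   ]  \ottnt{C'_{{\mathrm{2}}}} $ produced by $R_{12}$; we really do need the lifting step, and in particular the fact that a silent-then-action trace in $\ottnt{P_{{\mathrm{2}}}}$ is matched by a trace of the same shape in $\ottnt{P_{{\mathrm{3}}}}$. This hinges on the grammar of $\mathit{T}$ being closed under concatenation of the silent and singleton-action cases, which is easy to verify from Figure~\ref{fig:semantics-actions} but is the place where the trace-preservation clause in Definition~\ref{def:bisimulation} is actually used. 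Once this is clean, the symmetric direction for the converse relation is an immediate dualization and the lemma follows.
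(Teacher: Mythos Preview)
The paper does not actually give a proof of this lemma: it simply states that the result ``is standard, and essential to compose the correctness proof of subsequent transformation passes'' and moves on. Your proposal supplies exactly the standard argument one would expect (composition via an intermediate witness, plus a lifting lemma that extends the single-step simulation clause to multi-step traces), and it is correct for the definitions in \autoref{fig:semantics-actions} and Definition~\ref{def:bisimulation}. There is nothing to compare against on the paper's side; your write-up would serve perfectly well as the omitted proof.
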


\begin{definition}[Version bisimilarity]\label{def:version-bisimilarity}
  Let $\ottmv{V_{{\mathrm{1}}}}$, $\ottmv{V_{{\mathrm{2}}}}$ be two versions of a function $\ottmv{F}$ in $\ottnt{P}$,
  and let $\ottnt{I_{{\mathrm{1}}}}  \mathrel{\stackrel{\mathsf{def} }{=} }   \ottnt{P} ( \ottmv{F} ,  \ottmv{V_{{\mathrm{1}}}} ) $ and $\ottnt{I_{{\mathrm{2}}}}  \mathrel{\stackrel{\mathsf{def} }{=} }   \ottnt{P} ( \ottmv{F} ,  \ottmv{V_{{\mathrm{2}}}} ) $. $\ottmv{V_{{\mathrm{1}}}}$ and $\ottmv{V_{{\mathrm{2}}}}$ are \emph{(weakly)
    bisimilar} if $ \langle  \ottnt{P} \, \ottnt{I_{{\mathrm{1}}}} \,  \mathsf{start}( \ottnt{I_{{\mathrm{1}}}} )  \, K^* \; \ottnt{M} \, \ottnt{E}  \rangle $ and $ \langle  \ottnt{P} \, \ottnt{I_{{\mathrm{2}}}} \,  \mathsf{start}( \ottnt{I_{{\mathrm{2}}}} )  \, K^* \; \ottnt{M} \, \ottnt{E}  \rangle $ are weakly bisimilar for all $K^*$,
  $\ottnt{M}$, $\ottnt{E}$.
\end{definition}

\begin{definition}[Equivalence]
 $\ottnt{P_{{\mathrm{1}}}}$, $\ottnt{P_{{\mathrm{2}}}}$ are \emph{equivalent} if $ \mathsf{start}( \ottnt{P_{{\mathrm{1}}}} ) $,
  $ \mathsf{start}( \ottnt{P_{{\mathrm{2}}}} ) $ are weakly bisimilar.
\end{definition}

\subsection{Deoptimization Invariants}\label{subsec:formal-invariants}

We can now give a formal definition of the invariants from
\autoref{subsec:invariants}: {\it Version Equivalence} holds if any pair of
versions $(\ottmv{V_{{\mathrm{1}}}}, \ottmv{V_{{\mathrm{2}}}})$ of a function $\ottmv{F}$ are bisimilar; {\it
  Assumption Transparency} holds if for any configuration $\ottnt{C}$, at an
$ \textsfb{assume} ~ e^* ~ \ottkw{else} ~ \xi ~ \tilde{\xi}^* $, $\ottnt{C}$, is bisimilar to
$ \mathsf{deoptimize}(  \ottnt{C} ,  \xi ,  \tilde{\xi}^* ) $, as defined in
\autoref{fig:semantics-reduction}, \textsc{DeoptimizeConf}.

\subsection{Creating Fresh Versions and Injecting Assumptions}

\newcommand{\displace}[3]{#1[#2\nobreak\leftarrow\nobreak#3]}

Configuration $\ottnt{C}$ is \emph{over} location $ \ottmv{F} .\kern -0.5pt \ottmv{V} .\kern -0.5pt \ottmv{L} $ if it is
  $ \langle  \ottnt{P} \,  \ottnt{P} ( \ottmv{F} ,  \ottmv{V} )  \, \ottmv{L} \, K^* \; \ottnt{M} \, \ottnt{E}  \rangle $, where $ \ottnt{P} ( \ottmv{F} ,  \ottmv{V} ) $
denotes the instructions at version~$\ottmv{V}$ of $\ottmv{F}$ in $\ottnt{P}$.  Let
$ \displace{ \ottnt{C} }{  \ottmv{F} .\kern -0.5pt \ottmv{V} .\kern -0.5pt \ottmv{L}  }{  \ottmv{F'} .\kern -0.5pt \ottmv{V'} .\kern -0.5pt \ottmv{L'}  } $ be the configuration
$ \langle  \ottnt{P} \,  \ottnt{P} ( \ottmv{F'} ,  \ottmv{V'} )  \, \ottmv{L'} \, K^* \; \ottnt{M} \, \ottnt{E}  \rangle $.  More generally,
$\displace{\ottnt{C}}{X}{Y}$ replaces various components of $\ottnt{C}$. For
example, $ \displace{ \ottnt{C} }{ \ottnt{P_{{\mathrm{1}}}} }{ \ottnt{P_{{\mathrm{2}}}} } $ updates the program in $\ottnt{C}$; if only the
versions change between two locations $ \ottmv{F} .\kern -0.5pt \ottmv{V} .\kern -0.5pt \ottmv{L} $ and $ \ottmv{F} .\kern -0.5pt \ottmv{V'} .\kern -0.5pt \ottmv{L} $, 
write $ \displace{ \ottnt{C} }{ \ottmv{V} }{ \ottmv{V'} } $ instead of repeating the locations,
etc.

\begin{theorem}
  Creating a new copy of the currently active version of a function,
  possibly adding new \assume instructions, returns an equivalent program.
\end{theorem}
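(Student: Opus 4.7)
The plan is to exhibit a weak bisimulation $R$ between $ \mathsf{start}( \ottnt{P_{{\mathrm{1}}}} ) $ and $ \mathsf{start}( \ottnt{P_{{\mathrm{2}}}} ) $, where $\ottnt{P_{{\mathrm{2}}}}$ extends $\ottnt{P_{{\mathrm{1}}}}$ by promoting a fresh copy $\ottmv{V_{{\mathrm{1}}}}$ of the previously active version $\ottmv{V_{{\mathrm{0}}}}$ of some function $\ottmv{F}$ to active status, with optionally a handful of inserted $\textsfb{assume}$ instructions whose predicates are tautologies and whose deoptimization targets refer to $\ottmv{V_{{\mathrm{0}}}}$ at the same label via the identity varmap. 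I introduce a partial label map $\pi$ from labels of $\ottmv{V_{{\mathrm{1}}}}$ to labels of $\ottmv{V_{{\mathrm{0}}}}$ that pairs each copied instruction with its original and is undefined on newly-inserted assume labels. The relation $R$ relates $\ottnt{C_{{\mathrm{1}}}},\ottnt{C_{{\mathrm{2}}}}$ when their heaps and environments coincide, their call stacks are pointwise related (frames pointing to code outside $\ottmv{F}$'s active version must be identical; frames whose instructions are $ \ottnt{P_{{\mathrm{1}}}} ( \ottmv{F} ,  \ottmv{V_{{\mathrm{0}}}} ) $ are paired with frames to $ \ottnt{P_{{\mathrm{2}}}} ( \ottmv{F} ,  \ottmv{V_{{\mathrm{1}}}} ) $ whose return labels correspond under $\pi$), and their current $(\ottnt{I},\ottmv{L})$ pairs obey the same correspondence, with the relaxation that $\ottnt{C_{{\mathrm{2}}}}$ may be poised on a newly-inserted assume lying strictly between two $\pi$-related labels.

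For the forward simulation $\ottnt{P_{{\mathrm{1}}}}\leadsto\ottnt{P_{{\mathrm{2}}}}$: if $\ottnt{C_{{\mathrm{2}}}}$ sits on a newly-inserted assume, first take a silent \textsc{AssumePass} step (the predicate is a tautology), which advances $\ottnt{C_{{\mathrm{2}}}}$ while keeping it $R$-related to the unchanged $\ottnt{C_{{\mathrm{1}}}}$. Afterwards $\ottnt{C_{{\mathrm{2}}}}$'s current instruction is a verbatim copy of $\ottnt{C_{{\mathrm{1}}}}$'s, so all data, I/O, and control-flow rules take structurally identical steps on the shared environment and heap; branch and goto targets are preserved because copying rewrote labels consistently inside $\ottmv{V_{{\mathrm{1}}}}$. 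For \textsc{Call} of $\ottmv{F}$, $\ottnt{C_{{\mathrm{1}}}}$ enters $\ottmv{V_{{\mathrm{0}}}}$ while $\ottnt{C_{{\mathrm{2}}}}$ enters $\ottmv{V_{{\mathrm{1}}}}$ at corresponding start labels, and the pushed continuations refer to the unchanged caller stream, so they remain $R$-related. For an inherited assume in $\ottmv{V_{{\mathrm{1}}}}$ whose predicate fails, the deoptimization jumps to $\ottmv{V_{{\mathrm{0}}}}$ at the same label with identity varmap, yielding a configuration with the same $(\ottnt{I},\ottmv{L},\ottnt{E},\ottnt{M})$ as $\ottnt{C_{{\mathrm{1}}}}$ just before its own step from the identically-positioned original assume; one further silent step of $\ottnt{C_{{\mathrm{2}}}}$ then lockstep-matches $\ottnt{C_{{\mathrm{1}}}}$. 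The backward direction is symmetric: each new-assume step on the $\ottnt{P_{{\mathrm{2}}}}$ side is matched by zero steps on the $\ottnt{P_{{\mathrm{1}}}}$ side, an asymmetry $R$ is constructed to absorb.

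The main obstacle is the bookkeeping required to prove that $R$ is preserved across \textsc{Call} and \textsc{Return}, in particular that every continuation pushed onto the $\ottnt{P_{{\mathrm{2}}}}$-stack arises from a call site whose saved return label lies in the image of $\pi$, so that popping it yields an $R$-related pair. The tautology hypothesis on injected predicates is essential here: without it, \textsc{AssumePass} cannot be guaranteed to fire and the silent catch-up step used throughout the proof breaks; this is precisely the soundness side of the assumption-transparency discussion of \autoref{subsec:invariants}. Once these invariants are checked, the per-rule cases collapse to the observations that $\ottmv{V_{{\mathrm{1}}}}$'s copied instructions reduce identically to $\ottmv{V_{{\mathrm{0}}}}$'s on the common state, that inserted assumes pass silently, and that a failing inherited assume in $\ottmv{V_{{\mathrm{1}}}}$ hands control back to the very version and environment that $\ottnt{P_{{\mathrm{1}}}}$ is already executing.
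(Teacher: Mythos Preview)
Your approach is essentially the paper's: construct a weak bisimulation relating configurations at $\pi$-corresponding positions in the old and new versions, close it under a pointwise stack-frame replacement, absorb the tautological inserted assumes by silent catch-up steps, and match a failing inherited assume by letting $C_2$ deoptimize twice (first to $\ottmv{V_{{\mathrm{0}}}}$ via the identity varmap, then along the original target) against $C_1$'s single deopt. The paper's $R$ includes an explicit clause relating any $\ottnt{C_{{\mathrm{1}}}}$ over $\ottnt{P_{{\mathrm{1}}}}$ to $ \displace{ \ottnt{C_{{\mathrm{1}}}} }{ \ottnt{P_{{\mathrm{1}}}} }{ \ottnt{P_{{\mathrm{2}}}} } $, needed so that after the first deopt both sides may sit in $\ottmv{V_{{\mathrm{0}}}}$ and still be related; your ``frames outside $F$'s active version must be identical'' clause should be read (and in a full proof, stated) to cover this case as well.
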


\begin{proof}
  Consider $\ottnt{P_{{\mathrm{1}}}}$ with a function $\ottmv{F}$ with active version
  $\ottmv{V_{{\mathrm{1}}}}$. Adding a version yields $\ottnt{P_{{\mathrm{2}}}}$ with new active version
  $\ottmv{V_{{\mathrm{2}}}}$ of $\ottmv{F}$ such that
  \begin{itemize}
  \item any label $\ottmv{L}$ of $\ottmv{V_{{\mathrm{1}}}}$ exists in $\ottmv{V_{{\mathrm{2}}}}$L: the instruction at
    $\ottmv{L}$ in $\ottmv{V_{{\mathrm{1}}}}$ and $\ottmv{V_{{\mathrm{2}}}}$ are identical except for \assume
    instructions updated so that $ \textsfb{assume} ~ e^* ~ \ottkw{else} ~ \xi ~ \tilde{\xi}^* $ in
    $\ottmv{V_{{\mathrm{1}}}}$ has a corresponding $ \textsfb{assume} ~ e^* ~ \ottkw{else} ~   \ottmv{F} .\kern -0.5pt \ottmv{V_{{\mathrm{1}}}} .\kern -0.5pt \ottmv{L}  ~  \mathsf{Id}   $ in
    $\ottmv{V_{{\mathrm{2}}}}$ where $ \mathsf{Id} $ is the identity over the environment at $\ottmv{L}$.
  \item $\ottmv{V_{{\mathrm{2}}}}$ may contain extra empty \assume instructions: for any
    instruction $\ottnt{i}$ at $\ottmv{L}$ in $\ottmv{V_{{\mathrm{1}}}}$, $\ottmv{V_{{\mathrm{2}}}}$ may contain an
    \assume of the form $ \textsfb{assume} ~ \textsfb{true} ~ \ottkw{else} ~   \ottmv{F} .\kern -0.5pt \ottmv{V_{{\mathrm{1}}}} .\kern -0.5pt \ottmv{L}  ~  \mathsf{Id}   $, where $ \mathsf{Id} $
    is the identity mapping over the environment at $\ottmv{L}$, followed by
    $\ottnt{i}$ at a fresh label $\ottmv{L'}$.
  \end{itemize}
Let us write $\ottnt{I_{{\mathrm{1}}}}$ and $\ottnt{I_{{\mathrm{2}}}}$ for the instructions of $\ottmv{V_{{\mathrm{1}}}}$ and
$\ottmv{V_{{\mathrm{2}}}}$ respectively.  Stack $K^*_{{\mathrm{2}}}$ is a \emph{replacement} of $K^*_{{\mathrm{1}}}$
if it is obtained from $K^*_{{\mathrm{1}}}$ by replacing continuations of the form
$ \langle  \ottnt{I_{{\mathrm{1}}}} ~ \ottmv{L} ~ \mathit{x} ~ \ottnt{E}  \rangle $ by $ \langle  \ottnt{I_{{\mathrm{2}}}} ~ \ottmv{L} ~ \mathit{x} ~ \ottnt{E}  \rangle $.  Replacement is a device used in the
proof and does not correspond to any of the reduction rules.  We define a
relation $R$ as the smallest relation such that :
  \begin{enumerate}
  \item For any configuration $\ottnt{C_{{\mathrm{1}}}}$ over $\ottnt{P_{{\mathrm{1}}}}$,
    $R$ relates $\ottnt{C_{{\mathrm{1}}}}$ to $ \displace{ \ottnt{C_{{\mathrm{1}}}} }{ \ottnt{P_{{\mathrm{1}}}} }{ \ottnt{P_{{\mathrm{2}}}} } $.
  \item For any configuration $\ottnt{C_{{\mathrm{1}}}}$ over a $ \ottmv{F} .\kern -0.5pt \ottmv{V_{{\mathrm{1}}}} .\kern -0.5pt \ottmv{L} $ such that
    $\ottmv{L}$ in $\ottmv{V_{{\mathrm{2}}}}$ is not an added \assume, $R$ relates $\ottnt{C_{{\mathrm{1}}}}$ to
    $ \displace{  \displace{ \ottnt{C_{{\mathrm{1}}}} }{ \ottnt{P_{{\mathrm{1}}}} }{ \ottnt{P_{{\mathrm{2}}}} }  }{ \ottmv{V_{{\mathrm{1}}}} }{ \ottmv{V_{{\mathrm{2}}}} } $.
  \item For any configuration $\ottnt{C_{{\mathrm{1}}}}$ over a $ \ottmv{F} .\kern -0.5pt \ottmv{V_{{\mathrm{1}}}} .\kern -0.5pt \ottmv{L} $ such that at
    $\ottmv{L}$ in $\ottmv{V_{{\mathrm{2}}}}$ is a newly added \assume followed by label $\ottmv{L'}$,
    $R$ relates $\ottnt{C_{{\mathrm{1}}}}$ to both (a) $ \displace{ \ottnt{C_{{\mathrm{1}}}} }{  \ottmv{F} .\kern -0.5pt \ottmv{V_{{\mathrm{1}}}} .\kern -0.5pt \ottmv{L}  }{  \ottmv{F} .\kern -0.5pt \ottmv{V_{{\mathrm{2}}}} .\kern -0.5pt \ottmv{L}  } $
    and (b) $ \displace{ \ottnt{C_{{\mathrm{1}}}} }{  \ottmv{F} .\kern -0.5pt \ottmv{V_{{\mathrm{1}}}} .\kern -0.5pt \ottmv{L}  }{  \ottmv{F} .\kern -0.5pt \ottmv{V_{{\mathrm{2}}}} .\kern -0.5pt \ottmv{L'}  } $.
  \item For any related pair $(\ottnt{C_{{\mathrm{1}}}}, \ottnt{C_{{\mathrm{2}}}}) \in R$, where $K^*_{{\mathrm{1}}}$ is
    the call stack of $\ottnt{C_{{\mathrm{2}}}}$, for any replacement $K^*_{{\mathrm{2}}}$, the pair
    $(\ottnt{C_{{\mathrm{1}}}}, \displace{ \ottnt{C_{{\mathrm{2}}}} }{ K^*_{{\mathrm{1}}} }{ K^*_{{\mathrm{2}}} } )$ is in $R$.
  \end{enumerate}
  The proof proceeds by showing that $R$ is a bisimulation.
  If a related pair $( \ottnt{C_{{\mathrm{1}}}}, \ottnt{C_{{\mathrm{2}}}} ) \in R$ comes from the cases (1), (2)
  or (3) of the definition of $R$, we say that it is a \emph{base pair}.  A
  pair $( \ottnt{C_{{\mathrm{1}}}}, \ottnt{C_{{\mathrm{2}}}} )$ in case (4) is defined from another pair $(
  \ottnt{C_{{\mathrm{1}}}}, \ottnt{C'_{{\mathrm{2}}}} ) \in R$, such that the call stack of $\ottnt{C_{{\mathrm{2}}}}$ is a
  replacement of the stack of $\ottnt{C'_{{\mathrm{2}}}}$. If $( \ottnt{C_{{\mathrm{1}}}}, \ottnt{C'_{{\mathrm{2}}}} ) \in R$ is a
  base pair, we say that it is the base pair of $( \ottnt{C_{{\mathrm{1}}}}, \ottnt{C_{{\mathrm{2}}}}
  )$. Otherwise, we say that the base pair of $( \ottnt{C_{{\mathrm{1}}}}, \ottnt{C_{{\mathrm{2}}}} )$ is the
  base pair of $( \ottnt{C_{{\mathrm{1}}}}, \ottnt{C'_{{\mathrm{2}}}} )$.

  \paragraph{Bisimulation proof: generalities}

  To prove that $R$ is a bisimulation, consider all related pairs
  $(\ottnt{C_{{\mathrm{1}}}}, \ottnt{C_{{\mathrm{2}}}}) \in R$ and show that a reduction from
  $\ottnt{C_{{\mathrm{1}}}}$ can be matched by $\ottnt{C_{{\mathrm{2}}}}$ and conversely.
  Without loss of generality, assume that $\ottnt{C_{{\mathrm{2}}}}$ is not a
  newly added \assume instruction -- that the base pair of $(\ottnt{C_{{\mathrm{1}}}}, \ottnt{C_{{\mathrm{2}}}})$ is not in the case (3,b) of the definition of $R$. Indeed, the proof
  of the case (3,b) follows from proof of the case (3,a). In the case (3,b),
  $\ottnt{C_{{\mathrm{2}}}}$ is a newly added \assume instruction $ \textsfb{assume} ~ \textsfb{true} ~ \ottkw{else} ~  \dots  $ at $\ottmv{L}$ followed by  $\ottmv{L'}$. $\ottnt{C_{{\mathrm{2}}}}$ can only
  reduce silently into $\ottnt{C'_{{\mathrm{2}}}}  \mathrel{\stackrel{\mathsf{def} }{=} }   \displace{ \ottnt{C_{{\mathrm{2}}}} }{ \ottmv{L} }{ \ottmv{L'} } $, which
  is related to $\ottnt{C_{{\mathrm{1}}}}$ by the case (3,a). The empty reduction sequence
  from $\ottnt{C_{{\mathrm{1}}}}$ matches this reduction from $\ottnt{C_{{\mathrm{2}}}}$. Conversely, 
  assume the result in the case (3,a), then any reduction of $\ottnt{C_{{\mathrm{1}}}}$ can
  be matched from $\ottnt{C'_{{\mathrm{2}}}}$, and thus matched from $\ottnt{C_{{\mathrm{2}}}}$ by
  prepending the silent reduction $ \ottnt{C_{{\mathrm{2}}}}  \nto[   \tau   ]  \ottnt{C'_{{\mathrm{2}}}} $ to the matching
  reduction sequence. Finally, if $( \ottnt{C_{{\mathrm{1}}}}, \ottnt{C_{{\mathrm{2}}}} )$ comes from case
  (4) and has a base pair $( \ottnt{C_{{\mathrm{1}}}}, \ottnt{C'_{{\mathrm{2}}}} )$ from (3,b), and $\ottnt{C_{{\mathrm{2}}}}$ has label $\ottmv{L}$ followed by $\ottmv{L'}$, then the bisimulation
  property for $( \ottnt{C_{{\mathrm{1}}}}, \ottnt{C_{{\mathrm{2}}}} ) \in R$ comes from the one of $( \ottnt{C_{{\mathrm{1}}}},  \displace{ \ottnt{C_{{\mathrm{2}}}} }{ \ottmv{L} }{ \ottmv{L'} }  ) \in R$ by the same reasoning.

  \paragraph{Bisimulation proof: easy cases}

  The easy cases of the proof are the reductions $ \ottnt{C_{{\mathrm{1}}}}  \nto[  \mathit{A}_\tau  ]  \ottnt{C'_{{\mathrm{1}}}} $
  where neither $\ottnt{C_{{\mathrm{1}}}}$ nor $\ottnt{C'_{{\mathrm{1}}}}$ are over $\ottmv{V_{{\mathrm{1}}}}$, and the reductions
  $ \ottnt{C_{{\mathrm{2}}}}  \nto[  \mathit{A}_\tau  ]  \ottnt{C'_{{\mathrm{2}}}} $ where neither $\ottnt{C_{{\mathrm{2}}}}$ nor $\ottnt{C'_{{\mathrm{2}}}}$ are over
  $\ottmv{V_{{\mathrm{2}}}}$. For $ \ottnt{C_{{\mathrm{1}}}}  \nto[  \mathit{A}_\tau  ]  \ottnt{C'_{{\mathrm{1}}}} $, define $\ottnt{C'_{{\mathrm{2}}}}$ as
  $ \displace{ \ottnt{C'_{{\mathrm{1}}}} }{ \ottnt{P_{{\mathrm{1}}}} }{ \ottnt{P_{{\mathrm{2}}}} } $, and  both $ \ottnt{C_{{\mathrm{2}}}}  \nto[  \mathit{A}_\tau  ]  \ottnt{C'_{{\mathrm{2}}}} $
  and $(\ottnt{C'_{{\mathrm{1}}}}, \ottnt{C'_{{\mathrm{2}}}}) \in R$ hold. The $ \ottnt{C_{{\mathrm{2}}}}  \nto[  \mathit{A}_\tau  ]  \ottnt{C'_{{\mathrm{2}}}} $
  case is symmetric, defining $\ottnt{C'_{{\mathrm{1}}}}$ as $ \displace{ \ottnt{C'_{{\mathrm{2}}}} }{ \ottnt{P_{{\mathrm{2}}}} }{ \ottnt{P_{{\mathrm{1}}}} } $.

  \paragraph{Bisimulation proof: harder cases}

  The harder cases are split in two categories: version-change
  reductions (deoptimizations, functions call and returns), and
  same-version reductions within $\ottmv{V_{{\mathrm{1}}}}$ in $\ottnt{P_{{\mathrm{1}}}}$ or
  $\ottmv{V_{{\mathrm{2}}}}$ in $\ottnt{P_{{\mathrm{2}}}}$. We consider same-version reductions
  first.
  Without loss of generality, assume that the pair
  $(\ottnt{C_{{\mathrm{1}}}}, \ottnt{C_{{\mathrm{2}}}}) \in R$ is a base pair, that is a pair related
  by the cases (2) or (3) of the definition of $R$, but not (4) -- the
  case that changes the call stack of the configuration. Indeed, if 
 pair $( \ottnt{C_{{\mathrm{1}}}}, \ottnt{C'_{{\mathrm{2}}}} ) \in R$ comes from (4), the
  only difference between this pair and its base pair
  $( \ottnt{C_{{\mathrm{1}}}}, \ottnt{C_{{\mathrm{2}}}} ) \in R$ is in the call stack of $\ottnt{C_{{\mathrm{2}}}}$
  and $\ottnt{C'_{{\mathrm{2}}}}$. This means that $\ottnt{C_{{\mathrm{2}}}}$ and $\ottnt{C'_{{\mathrm{2}}}}$ have the
  exact same reduction behavior for non-version-change reductions. As
  long as the proof that the related configurations $\ottnt{C_{{\mathrm{1}}}}$ and
  $\ottnt{C_{{\mathrm{2}}}}$ match each other does not use version-change reductions
  (a property that holds for the proofs of the non-version-change
  cases below), it also applies to $\ottnt{C_{{\mathrm{1}}}}$ and $\ottnt{C'_{{\mathrm{2}}}}$.
  For a reduction $ \ottnt{C_{{\mathrm{2}}}}  \nto[  \mathit{A}_\tau  ]  \ottnt{C'_{{\mathrm{2}}}} $ that is not a version-change
  reduction (deoptimization, call or return), prove that it can be matched
  from $\ottnt{C_{{\mathrm{1}}}}$ by reasoning on whether $\ottnt{C_{{\mathrm{2}}}}$ or $\ottnt{C'_{{\mathrm{2}}}}$ are \assume
  instructions, coming from $\ottmv{V_{{\mathrm{1}}}}$ or newly added.
  \begin{itemize}
  \item If none of them are \assume instructions, then they are both
    in the case (2) of the definition of $R$, they are equal to
    $ \displace{ \ottnt{C_{{\mathrm{1}}}} }{ \ottmv{V_{{\mathrm{1}}}} }{ \ottmv{V_{{\mathrm{2}}}} } $ and $ \displace{ \ottnt{C'_{{\mathrm{1}}}} }{ \ottmv{V_{{\mathrm{1}}}} }{ \ottmv{V_{{\mathrm{2}}}} } $
    respectively, so  $ \ottnt{C_{{\mathrm{1}}}}  \nto[  \mathit{A}_\tau  ]  \ottnt{C'_{{\mathrm{1}}}} $ and
    $(\ottnt{C'_{{\mathrm{1}}}}, \ottnt{C'_{{\mathrm{2}}}}) \in R$ hold.
  \item If $\ottnt{C_{{\mathrm{2}}}}$ or $\ottnt{C'_{{\mathrm{2}}}}$ are \assume instructions coming from
    $\ottmv{V_{{\mathrm{1}}}}$, the same reasoning holds -- the problematic case where the
    \assume is $\ottnt{C_{{\mathrm{2}}}}$ and the guards do not pass is not considered here as
    the reduction is not a deoptimization.
  \item If $\ottnt{C'_{{\mathrm{2}}}}$ is a newly added \assume in $\ottmv{V_{{\mathrm{2}}}}$ at $\ottmv{L}$
    followed by $\ottmv{L'}$, $\ottnt{C_{{\mathrm{2}}}}$ is an instruction of
    $\ottmv{V_{{\mathrm{2}}}}$ copied from $\ottmv{V_{{\mathrm{1}}}}$, so $(\ottnt{C_{{\mathrm{1}}}},\ottnt{C_{{\mathrm{2}}}})$ are in the case (2)
    of the definition of $R$ and $\ottnt{C_{{\mathrm{1}}}}$ is $ \displace{ \ottnt{C_{{\mathrm{1}}}} }{ \ottmv{V_{{\mathrm{2}}}} }{ \ottmv{V_{{\mathrm{1}}}} } $. The
    reduction from $\ottnt{C_{{\mathrm{2}}}}$ corresponds to a reduction $ \ottnt{C_{{\mathrm{1}}}}  \nto[  \mathit{A}_\tau  ]  \ottnt{C'_{{\mathrm{1}}}} $ in $\ottnt{P_{{\mathrm{1}}}}$ with $\ottnt{C'_{{\mathrm{1}}}}  \mathrel{\stackrel{\mathsf{def} }{=} }   \displace{ \ottnt{C'_{{\mathrm{2}}}} }{ \ottmv{V_{{\mathrm{2}}}} }{ \ottmv{V_{{\mathrm{1}}}} } $, and  $(\ottnt{C'_{{\mathrm{1}}}}, \ottnt{C'_{{\mathrm{2}}}})\in R$ by the case
    (3,a) of the definition of $R$.
  \end{itemize}
  The reasoning for transitions $ \ottnt{C_{{\mathrm{1}}}}  \nto[  \mathit{A}_\tau  ]  \ottnt{C'_{{\mathrm{1}}}} $ that have to be
  matched from $\ottnt{C_{{\mathrm{2}}}}$ and are not version-change transitions
  (deoptimization, function calls or return) is similar. $\ottnt{C_{{\mathrm{2}}}}$ cannot
  be a new \assume, so we have $ \ottnt{C_{{\mathrm{2}}}}  \nto[  \mathit{A}_\tau  ]  \ottnt{C'_{{\mathrm{2}}}} $, and either $\ottnt{C'_{{\mathrm{2}}}}$ is not a new \assume and matches $\ottnt{C_{{\mathrm{1}}}}$ by case (2) of the
  definition of R, or it is a new \assume and it matches it by the case
  (3,a).

  \paragraph{Bisimulation proof: final cases}

  The cases that remain are the hard cases of version-change
  reductions: function call, return and deoptimization.
  If $ \ottnt{C_{{\mathrm{1}}}}  \nto[  \mathit{A}_\tau  ]  \ottnt{C'_{{\mathrm{1}}}} $ is a deoptimization reduction, then
  $\ottnt{C_{{\mathrm{1}}}}$ is over a location $ \ottmv{F} .\kern -0.5pt \ottmv{V_{{\mathrm{1}}}} .\kern -0.5pt \ottmv{L} $ in $\ottnt{P_{{\mathrm{1}}}}$, and its
  instruction is $ \textsfb{assume} ~ e^* ~ \ottkw{else} ~ \xi ~ \tilde{\xi}^* $, and $\ottnt{C'_{{\mathrm{1}}}}$ is
  $ \mathsf{deoptimize}(  \ottnt{C_{{\mathrm{1}}}} ,  \xi ,  \tilde{\xi}^* ) $.  $\ottnt{C_{{\mathrm{2}}}}$ is over the copied
  instruction $ \textsfb{assume} ~ e^* ~ \ottkw{else} ~   \ottmv{F} .\kern -0.5pt \ottmv{V_{{\mathrm{1}}}} .\kern -0.5pt \ottmv{L}  ~  \mathsf{Id}   $ and $ \mathsf{Id} $ is the
  identity. $\ottnt{C_{{\mathrm{2}}}}$ also deoptimizes, given that the tests give the same
  results in the same environment, so we have $ \ottnt{C_{{\mathrm{2}}}}  \nto[   \tau   ]  \ottnt{C'_{{\mathrm{2}}}} $ for
  $\ottnt{C'_{{\mathrm{2}}}}  \mathrel{\stackrel{\mathsf{def} }{=} }   \mathsf{deoptimize}(  \ottnt{C_{{\mathrm{2}}}} ,    \ottmv{F} .\kern -0.5pt \ottmv{V} .\kern -0.5pt \ottmv{L_{{\mathrm{1}}}}  ~  \mathsf{Id}   ,   \emptyset  ) $.
  $\ottnt{C'_{{\mathrm{2}}}}$ is over $ \ottmv{F} .\kern -0.5pt \ottmv{V_{{\mathrm{1}}}} .\kern -0.5pt \ottmv{L} $, that is the same \assume
  instruction as $\ottnt{C_{{\mathrm{1}}}}$, so it also deoptimizes, to $\ottnt{C''_{{\mathrm{2}}}}  \mathrel{\stackrel{\mathsf{def} }{=} } 
   \mathsf{deoptimize}(  \ottnt{C'_{{\mathrm{2}}}} ,  \xi ,  \tilde{\xi}^* ) $. We show that $\ottnt{C'_{{\mathrm{1}}}}$ and
  $\ottnt{C''_{{\mathrm{2}}}}$ are related by $R$:
  \begin{itemize}
  \item If $(\ottnt{C_{{\mathrm{1}}}}, \ottnt{C_{{\mathrm{2}}}}) \in R$ is a base pair, then $\ottnt{C_{{\mathrm{1}}}}$ is
    $ \displace{ \ottnt{C_{{\mathrm{2}}}} }{ \ottmv{V_{{\mathrm{2}}}} }{ \ottmv{V_{{\mathrm{1}}}} } $. In particular, the two configurations have
    the same environment, and $\ottnt{C'_{{\mathrm{2}}}}$ is identical to $\ottnt{C_{{\mathrm{2}}}}$ except it is
    over $ \ottmv{F} .\kern -0.5pt \ottmv{V} .\kern -0.5pt \ottmv{L_{{\mathrm{1}}}} $. It is thus equal to $\ottnt{C_{{\mathrm{1}}}}$. As a consequence,
    $\ottnt{C'_{{\mathrm{1}}}}$ and $\ottnt{C''_{{\mathrm{2}}}}$, which are obtained from $\ottnt{C_{{\mathrm{1}}}}$ and $\ottnt{C'_{{\mathrm{2}}}}$
    by the same deoptimization reduction, are the same configurations, and
    related in $R$.
  \item If $\ottnt{C_{{\mathrm{1}}}}$ and $\ottnt{C_{{\mathrm{2}}}}$ are related by the case (4) of the
    definition of $R$, the stack of $\ottnt{C_{{\mathrm{2}}}}$ is a replacement of the
    stack of $\ottnt{C_{{\mathrm{1}}}}$. The same reasoning as in the previous case
    shows that configurations $\ottnt{C'_{{\mathrm{1}}}}$ and $\ottnt{C''_{{\mathrm{2}}}}$ are identical,
    except that the stack of $\ottnt{C''_{{\mathrm{2}}}}$ is a replacement of the stack
    of $\ottnt{C'_{{\mathrm{1}}}}$: they are related by the case (4) of the definition
    of $R$.
  \end{itemize}
  Conversely, if $ \ottnt{C_{{\mathrm{2}}}}  \nto[  \mathit{A}_\tau  ]  \ottnt{C'_{{\mathrm{2}}}} $ is a deoptimization
  instruction then, by the same reasoning as in the proof of matching
  a deoptimization of $\ottnt{C_{{\mathrm{1}}}}$, $\ottnt{C'_{{\mathrm{2}}}}$ is identical to $\ottnt{C_{{\mathrm{1}}}}$
  (modulo replaced stacks). This means that the empty reduction
  sequence from $\ottnt{C_{{\mathrm{1}}}}$ matches the reduction of $\ottnt{C_{{\mathrm{2}}}}$.

  If $ \ottnt{C_{{\mathrm{1}}}}  \nto[  \mathit{A}_\tau  ]  \ottnt{C'_{{\mathrm{1}}}} $ is a function call transition,
  \[
      \langle  \ottnt{P_{{\mathrm{1}}}} \, \ottnt{I_{{\mathrm{1}}}} \, \ottmv{L} \, K^*_{{\mathrm{1}}} \; \ottnt{M} \, \ottnt{E}  \rangle   \nto[   \tau   ]   \langle  \ottnt{P_{{\mathrm{1}}}} \, \ottnt{I'_{{\mathrm{1}}}} \, \ottmv{L'} \,  (K^*,   \langle  \ottnt{I_{{\mathrm{1}}}} ~  ( \ottmv{L} \kern-1pt+\kern-2pt1)  ~  \mathsf{ x }  ~ \ottnt{E}  \rangle  )  \; \ottnt{M} \, \ottnt{E'}  \rangle  
  \]
  $\ottnt{C_{{\mathrm{2}}}}$ is on the same call with the same arguments,
  so it takes a transition $ \ottnt{C_{{\mathrm{2}}}}  \nto[   \tau   ]  \ottnt{C'_{{\mathrm{2}}}} $ of the form
  \[
      \langle  \ottnt{P_{{\mathrm{2}}}} \, \ottnt{I_{{\mathrm{2}}}} \, \ottmv{L} \, K^*_{{\mathrm{2}}} \; \ottnt{M} \, \ottnt{E}  \rangle   \nto[   \tau   ]   \langle  \ottnt{P_{{\mathrm{2}}}} \, \ottnt{I'_{{\mathrm{2}}}} \, \ottmv{L'} \,  (K^*,   \langle  \ottnt{I_{{\mathrm{2}}}} ~  ( \ottmv{L} \kern-1pt+\kern-2pt1)  ~  \mathsf{ x }  ~ \ottnt{E}  \rangle  )  \; \ottnt{M} \, \ottnt{E'}  \rangle  
  \]
  The stack of $\ottnt{C'_{{\mathrm{2}}}}$ is a replacement of the stack of $\ottnt{C'_{{\mathrm{1}}}}$:
  assuming that $K^*_{{\mathrm{2}}}$ is a replacement of $K^*_{{\mathrm{1}}}$, the
  difference in the new continuation is precisely the definition of
  stack replacement -- note that it is precisely this reasoning step
  that required the addition of case (4) in the definition of
  $R$. Also, the new instruction streams $\ottnt{I'_{{\mathrm{1}}}}$ and $\ottnt{I'_{{\mathrm{2}}}}$ are either
  identical (if the function is not $\ottmv{F}$ itself) or equal to $\ottnt{I_{{\mathrm{1}}}}$ and
  $\ottnt{I_{{\mathrm{2}}}}$ respectively, so we do have $(\ottnt{C'_{{\mathrm{1}}}}, \ottnt{C'_{{\mathrm{2}}}}) \in R$ as
  expected. The proof of the symmetric case, matching a function call
  from $\ottnt{C_{{\mathrm{2}}}}$, is identical.

  If $ \ottnt{C_{{\mathrm{1}}}}  \nto[  \mathit{A}_\tau  ]  \ottnt{C'_{{\mathrm{1}}}} $ is a function return transition
  \[
      \langle  \ottnt{P_{{\mathrm{1}}}} \, \ottnt{I_{{\mathrm{1}}}} \, L \,  (K^*,   \langle  \ottnt{I'_{{\mathrm{1}}}} ~ L' ~ \mathit{x} ~ \ottnt{E'}  \rangle  )  \; \ottnt{M} \, \ottnt{E}  \rangle   \nto[   \tau   ]   \langle  \ottnt{P_{{\mathrm{1}}}} \, \ottnt{I'_{{\mathrm{1}}}} \, L' \, K^*_{{\mathrm{1}}} \; \ottnt{M} \,  \ottnt{E'}  [  \mathit{x}  \leftarrow  \ottnt{v}  ]   \rangle  
  \]
  then $ \ottnt{C_{{\mathrm{2}}}}  \nto[  \mathit{A}_\tau  ]  \ottnt{C'_{{\mathrm{2}}}} $ is also a function return transition
  \[
      \langle  \ottnt{P_{{\mathrm{2}}}} \, \ottnt{I_{{\mathrm{2}}}} \, L \,  (K^*,   \langle  \ottnt{I'_{{\mathrm{2}}}} ~ L' ~ \mathit{x} ~ \ottnt{E'}  \rangle  )  \; \ottnt{M} \, \ottnt{E}  \rangle   \nto[   \tau   ]   \langle  \ottnt{P_{{\mathrm{2}}}} \, \ottnt{I'_{{\mathrm{2}}}} \, L' \, K^*_{{\mathrm{2}}} \; \ottnt{M} \,  \ottnt{E'}  [  \mathit{x}  \leftarrow  \ottnt{v}  ]   \rangle  
  \]
  We have to show that $\ottnt{C'_{{\mathrm{1}}}}$ and $\ottnt{C'_{{\mathrm{2}}}}$ are related by $R$.
  The environments and heaps of the two configurations are identical.
  We know that the stack of $\ottnt{C_{{\mathrm{2}}}}$ is a replacement of the stack
  of $\ottnt{C_{{\mathrm{1}}}}$, which means that $K^*_{{\mathrm{2}}}$ a replacement of 
  $K^*_{{\mathrm{1}}}$, and that either $\ottnt{I'_{{\mathrm{1}}}}$ and $\ottnt{I'_{{\mathrm{2}}}}$ are
  identical or they are respectively equal to $\ottnt{I_{{\mathrm{1}}}}$ and $\ottnt{I_{{\mathrm{2}}}}$.
  In either case, $\ottnt{C'_{{\mathrm{1}}}}$ and $\ottnt{C'_{{\mathrm{2}}}}$ are related
  by $R$. The proof of the symmetric case, matching a function return
  from $\ottnt{C_{{\mathrm{2}}}}$, is identical.
  We have established that $R$ is a bisimulation.

  Finally, remark that our choice of $R$ also proves that the new
  version respects the assumption transparency invariant. A new
  \assume at $\ottmv{L}$ in $\ottmv{V_{{\mathrm{2}}}}$ is of the form
  $ \textsfb{assume} ~ \textsfb{true} ~ \ottkw{else} ~   \ottmv{F} .\kern -0.5pt \ottmv{V_{{\mathrm{1}}}} .\kern -0.5pt \ottmv{L}  ~  \mathsf{Id}   $, with $ \mathsf{Id} $ the identity
  environment.  Any configuration $\ottnt{C}$ over $ \ottmv{F} .\kern -0.5pt \ottmv{V_{{\mathrm{2}}}} .\kern -0.5pt \ottmv{L} $ is
  related by $R^{-1}$ to $ \displace{ \ottnt{C} }{  \ottmv{F} .\kern -0.5pt \ottmv{V_{{\mathrm{2}}}} .\kern -0.5pt \ottmv{L}  }{  \ottmv{F} .\kern -0.5pt \ottmv{V_{{\mathrm{1}}}} .\kern -0.5pt \ottmv{L}  } $,
  which is equal to $ \mathsf{deoptimize}(  \ottnt{C} ,    \ottmv{F} .\kern -0.5pt \ottmv{V_{{\mathrm{1}}}} .\kern -0.5pt \ottmv{L}  ~  \mathsf{Id}   ,   \emptyset  ) $.
  These two configurations are related by the bisimulation $R^{-1}$,
  so they are bisimilar.
\end{proof}

\begin{lemma}
  Adding a new predicate $\ottnt{e'}$ to an existing \assume instruction $  \textsfb{assume} ~ e^* ~ \ottkw{else} ~ \xi ~ \tilde{\xi}^*  $ of $\ottnt{P_{{\mathrm{1}}}}$ returns an equivalent program $\ottnt{P_{{\mathrm{2}}}}$.
\end{lemma}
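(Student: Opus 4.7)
My plan is to construct a weak bisimulation $R$ between $\ottnt{P_{{\mathrm{1}}}}$ and $\ottnt{P_{{\mathrm{2}}}}$ whose base case pairs configurations differing only in their program component, i.e., $(\ottnt{C_{{\mathrm{1}}}},  \displace{ \ottnt{C_{{\mathrm{1}}}} }{ \ottnt{P_{{\mathrm{1}}}} }{ \ottnt{P_{{\mathrm{2}}}} } )$, closed under the same call-stack replacement device used in the preceding theorem so that continuations mentioning either instruction stream can freely appear on the stack. Since $\ottnt{P_{{\mathrm{1}}}}$ and $\ottnt{P_{{\mathrm{2}}}}$ are syntactically identical outside of the single modified \assume, every reduction step at any \emph{other} instruction is mirrored by the identical rule on the other side and keeps the resulting pair in $R$.

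The interesting case is a reduction that actually executes the modified \assume. Write $e^*$ for the original predicates and $\ottnt{e'}$ for the fresh one; the target $\xi$ and extras $\tilde{\xi}^*$ are unchanged. I split into three subcases. (i) All of $e^*$ and $\ottnt{e'}$ evaluate to $\textsfb{true}$: both programs take \textsc{AssumePass} to $ ( \ottmv{L} \kern-1pt+\kern-2pt1) $ and the resulting pair stays in the base case of $R$. (ii) Some $\ottnt{e_{\ottmv{m}}}$ of $e^*$ evaluates to $\textsfb{false}$: both programs take \textsc{AssumeDeopt} with the same $\xi$ and $\tilde{\xi}^*$, so the resulting pair of deoptimized configurations again differs only in its program component and lies in the base case of $R$ (this uses that the deoptimization target is a location outside the modified \assume and hence identical in $\ottnt{P_{{\mathrm{1}}}}$ and $\ottnt{P_{{\mathrm{2}}}}$). (iii) All of $e^*$ pass but $\ottnt{e'}$ fails: $\ottnt{P_{{\mathrm{1}}}}$ takes \textsc{AssumePass} to $\ottnt{C'_{{\mathrm{1}}}}$, while $\ottnt{P_{{\mathrm{2}}}}$ takes \textsc{AssumeDeopt} to $\ottnt{C'_{{\mathrm{2}}}}  \mathrel{\stackrel{\mathsf{def} }{=} }   \mathsf{deoptimize}(  \ottnt{C_{{\mathrm{2}}}} ,  \xi ,  \tilde{\xi}^* ) $.

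The main obstacle is subcase (iii), where the two programs diverge in which rule fires. To resolve it I appeal to the \emph{assumption transparency} invariant of $\ottnt{P_{{\mathrm{1}}}}$, which states that $\ottnt{C_{{\mathrm{1}}}}$ is bisimilar in $\ottnt{P_{{\mathrm{1}}}}$ to $\widehat{\ottnt{C_{{\mathrm{1}}}}}  \mathrel{\stackrel{\mathsf{def} }{=} }   \mathsf{deoptimize}(  \ottnt{C_{{\mathrm{1}}}} ,  \xi ,  \tilde{\xi}^* ) $. The \textsc{AssumePass} step from $\ottnt{C_{{\mathrm{1}}}}$ to $\ottnt{C'_{{\mathrm{1}}}}$ is silent, so $\ottnt{C'_{{\mathrm{1}}}}$ is bisimilar to $\widehat{\ottnt{C_{{\mathrm{1}}}}}$. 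Moreover the base case of $R$ already relates $\widehat{\ottnt{C_{{\mathrm{1}}}}}$ to $\ottnt{C'_{{\mathrm{2}}}}$, since $\widehat{\ottnt{C_{{\mathrm{1}}}}}$ and $\ottnt{C'_{{\mathrm{2}}}}$ are defined by the same deoptimization construction (\textsc{DeoptimizeConf}) from the same $\xi, \tilde{\xi}^*$ and the same $\ottnt{M}, \ottnt{E}, K^*$, differing only in the surrounding program. I therefore take $R$ to be the union of the base relation with the additional pairs $(\ottnt{C'_{{\mathrm{1}}}}, \ottnt{C'_{{\mathrm{2}}}})$ produced by subcase (iii), verifying the bisimulation conditions for these extra pairs by composing the assumption-transparency bisimulation in $\ottnt{P_{{\mathrm{1}}}}$ with $R_{\text{base}}$ via the Transitivity Lemma. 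The symmetric direction is handled by matching $\ottnt{P_{{\mathrm{2}}}}$'s deoptimization step with zero steps from $\ottnt{C_{{\mathrm{1}}}}$, using exactly the same chain of bisimilarities; weak bisimulation permits this. Finally I check that the result preserves the assumption transparency invariant for the newly strengthened \assume, by an argument entirely analogous to the last paragraph of the preceding theorem.
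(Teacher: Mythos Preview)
Your intuition is right—the divergence in subcase (iii) is exactly where assumption transparency must do the work—but your concrete relation $R$ is too small, and your appeal to the Transitivity Lemma is circular.

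First, the circularity. The Transitivity Lemma composes two relations that are \emph{already} bisimulations. You want to compose $R_{\text{base}}$ (identity up to program component) with the assumption-transparency bisimilarity inside $\ottnt{P_{{\mathrm{1}}}}$. But $R_{\text{base}}$ is \emph{not} a bisimulation between $\ottnt{P_{{\mathrm{1}}}}$ and $\ottnt{P_{{\mathrm{2}}}}$: subcase (iii) is precisely the place where a $\ottnt{P_{{\mathrm{1}}}}$-step from $\ottnt{C_{{\mathrm{1}}}}$ cannot be matched by a $\ottnt{P_{{\mathrm{2}}}}$-step landing in $R_{\text{base}}$. So you cannot invoke Transitivity with $R_{\text{base}}$ as one of the factors.

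Second, the closure problem. Your $R$ is the union of $R_{\text{base}}$ with just the immediate pairs $(\ottnt{C'_{{\mathrm{1}}}}, \ottnt{C'_{{\mathrm{2}}}})$ arising from subcase (iii). But after that step, $\ottnt{C'_{{\mathrm{1}}}}$ is at $ ( \ottmv{L} \kern-1pt+\kern-2pt1) $ while $\ottnt{C'_{{\mathrm{2}}}}$ is at the deoptimization target; their subsequent reductions will in general stay out of sync indefinitely. Those future pairs are neither in $R_{\text{base}}$ nor among your ``additional pairs,'' so the simulation diagram cannot be closed inside your $R$.

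The paper's fix is to bake the bisimilarity of $\ottnt{P_{{\mathrm{1}}}}$ into the \emph{definition} of $R$ from the start: take
\[
  (\ottnt{C_{{\mathrm{1}}}}, \ottnt{C_{{\mathrm{2}}}}) \in R
  \quad\iff\quad
  (\ottnt{C_{{\mathrm{1}}}},\, \displace{ \ottnt{C_{{\mathrm{2}}}} }{ \ottnt{P_{{\mathrm{2}}}} }{ \ottnt{P_{{\mathrm{1}}}} } ) \in R_{\ottnt{P_{{\mathrm{1}}}}},
\]
where $R_{\ottnt{P_{{\mathrm{1}}}}}$ is weak bisimilarity over $\ottnt{P_{{\mathrm{1}}}}$. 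This is the composition you gestured at, but taken as the relation itself rather than obtained \emph{a posteriori} by the Transitivity Lemma. One then proves directly that this $R$ is a bisimulation; in the interesting case, the assumption-transparency invariant gives that $\displace{ \ottnt{C_{{\mathrm{2}}}} }{ \ottnt{P_{{\mathrm{2}}}} }{ \ottnt{P_{{\mathrm{1}}}} }$ is bisimilar in $\ottnt{P_{{\mathrm{1}}}}$ to its own deoptimized configuration, and transitivity of bisimilarity (not the Transitivity Lemma for relations between programs) closes the diagram. With this definition no call-stack replacement device is needed either: all the stack bookkeeping is absorbed into $R_{\ottnt{P_{{\mathrm{1}}}}}$.
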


\begin{proof}
  This is a consequence of the invariant of assumption transparency.
  Let $R_{\ottnt{P_{{\mathrm{1}}}}}$ be the bisimilarity relation for configurations
  over $\ottnt{P_{{\mathrm{1}}}}$, and $ \ottmv{F} .\kern -0.5pt \ottmv{V} .\kern -0.5pt \ottmv{L} $ be the location of the modified
  \assume. Let us define the relation $R$ between $\ottnt{P_{{\mathrm{1}}}}$ and $\ottnt{P_{{\mathrm{2}}}}$ by
  \[
    (\ottnt{C_{{\mathrm{1}}}}, \ottnt{C_{{\mathrm{2}}}}) \in R \quad\iff\quad (\ottnt{C_{{\mathrm{1}}}},  \displace{ \ottnt{C_{{\mathrm{2}}}} }{ \ottnt{P_{{\mathrm{2}}}} }{ \ottnt{P_{{\mathrm{1}}}} } ) \in R_{\ottnt{P_{{\mathrm{1}}}}}
  \]
  We show that $R$ is a bisimulation.
  Consider $(\ottnt{C_{{\mathrm{1}}}}, \ottnt{C_{{\mathrm{2}}}}) \in R$. If $\ottnt{C_{{\mathrm{2}}}}$ is not over
  $ \ottmv{F} .\kern -0.5pt \ottmv{V} .\kern -0.5pt \ottmv{L} $, the reductions of $\ottnt{C_{{\mathrm{2}}}}$ (in $\ottnt{P_{{\mathrm{2}}}}$) and
  $ \displace{ \ottnt{C_{{\mathrm{2}}}} }{ \ottnt{P_{{\mathrm{2}}}} }{ \ottnt{P_{{\mathrm{1}}}} } $ (in $\ottnt{P_{{\mathrm{1}}}}$) are identical, and the
  latter configuration is, by assumption, bisimilar to $\ottnt{C_{{\mathrm{1}}}}$, so
  it is immediate that any reduction from $\ottnt{C_{{\mathrm{1}}}}$ can be matched by
  $\ottnt{C_{{\mathrm{2}}}}$ and conversely.
  If $\ottnt{C_{{\mathrm{2}}}}$ is over $ \ottmv{F} .\kern -0.5pt \ottmv{V} .\kern -0.5pt \ottmv{L} $, we can compare its
  reduction behavior (in $\ottnt{P_{{\mathrm{2}}}}$) with the one of
  $ \displace{ \ottnt{C_{{\mathrm{2}}}} }{ \ottnt{P_{{\mathrm{2}}}} }{ \ottnt{P_{{\mathrm{1}}}} } $ (in $\ottnt{P_{{\mathrm{1}}}}$). The first
  configuration deoptimizes when one of the $e^*  \ottsym{,}  \ottnt{e'}$ is not
  true in the environment of $\ottnt{C_{{\mathrm{2}}}}$, while the second deoptimizes
  when one of the $e^*$ is not true -- in the same
  environment. If $\ottnt{C_{{\mathrm{2}}}}$ gives the same boolean value to both
  series of test, then the two configurations have the same reduction
  behavior, and $(\ottnt{C_{{\mathrm{1}}}}, \ottnt{C_{{\mathrm{2}}}})$ match each other by the same
  reasoning as in the previous paragraph. The only interesting case is
  the configurations $\ottnt{C_{{\mathrm{2}}}}$ that pass all the tests in
  $e^*$, but fail $\ottnt{e'}$. Let us show that, even in that
  case, the reductions of $\ottnt{C_{{\mathrm{1}}}}$ and $\ottnt{C_{{\mathrm{2}}}}$ match each other.
  The following diagram will be useful to follow the proof below:
  \begin{mathpar}
    \begin{tikzcd}[row sep = large, column sep = large]
      \ottnt{C_{{\mathrm{1}}}}
      \arrow[no head, squiggly, d, "R"]
      \arrow[rar, r, "A_\tau"]
      \arrow[no head, squiggly, dr, "R"]
      &
      \ottnt{C'_{{\mathrm{1}}}}
      \arrow[no head, squiggly, dr, "R"]
      \\
      \ottnt{C_{{\mathrm{2}}}}
      \arrow[to, r, "\tau"]
      &
       \mathsf{deoptimize}(  \ottnt{C_{{\mathrm{2}}}} ,  \xi ,  \tilde{\xi}^* ) 
      \arrow[to, r, "A_\tau"]
      &
      \ottnt{C''_{{\mathrm{1}}}}
    \end{tikzcd}
  \end{mathpar}
  Let us first show that the reductions of $\ottnt{C_{{\mathrm{2}}}}$ can be matched
  by $\ottnt{C_{{\mathrm{1}}}}$. The only possible reduction from $\ottnt{C_{{\mathrm{2}}}}$, given
  our assumptions, is
  $ \ottnt{C_{{\mathrm{2}}}}  \nto[   \tau   ]   \mathsf{deoptimize}(  \ottnt{C_{{\mathrm{2}}}} ,  \xi ,  \tilde{\xi}^* )  $. We claim
  that the empty reduction sequence from $\ottnt{C_{{\mathrm{1}}}}$ matches it, that
  is, that
  $( \ottnt{C_{{\mathrm{1}}}},  \mathsf{deoptimize}(  \ottnt{C_{{\mathrm{2}}}} ,  \xi ,  \tilde{\xi}^* ) ) \in R$. By
  definition of $R$, this goal means that
  $\ottnt{C_{{\mathrm{1}}}}$ and $ \displace{  \mathsf{deoptimize}(  \ottnt{C_{{\mathrm{2}}}} ,  \xi ,  \tilde{\xi}^* )  }{ \ottnt{P_{{\mathrm{2}}}} }{ \ottnt{P_{{\mathrm{1}}}} } $
  are bisimilar in $\ottnt{P_{{\mathrm{1}}}}$. But the latter configuration is the same as
  $ \mathsf{deoptimize}(   \displace{ \ottnt{C_{{\mathrm{2}}}} }{ \ottnt{P_{{\mathrm{2}}}} }{ \ottnt{P_{{\mathrm{1}}}} }  ,  \xi ,  \tilde{\xi}^* ) $, which is bisimilar
  to $\ottnt{C_{{\mathrm{2}}}}$ by the invariant of assumption transparency, and thus to $\ottnt{C_{{\mathrm{1}}}}$
  by transitivity.
  Conversely, we show that the reductions of $\ottnt{C_{{\mathrm{1}}}}$ can be matched by
  $\ottnt{C_{{\mathrm{2}}}}$. Suppose a reduction $ \ottnt{C_{{\mathrm{1}}}}  \nto[  \mathit{A}_\tau  ]  \ottnt{C'_{{\mathrm{1}}}} $.
  The configuration
  $ \displace{  \mathsf{deoptimize}(  \ottnt{C_{{\mathrm{2}}}} ,  \xi ,  \tilde{\xi}^* )  }{ \ottnt{P_{{\mathrm{2}}}} }{ \ottnt{P_{{\mathrm{1}}}} } $ is bisimilar to
  $\ottnt{C_{{\mathrm{1}}}}$ (same reasoning as in the previous paragraph), so there is a
  matching state $\ottnt{C''_{{\mathrm{1}}}}$ such that $  \displace{  \mathsf{deoptimize}(  \ottnt{C_{{\mathrm{2}}}} ,  \xi ,  \tilde{\xi}^* )  }{ \ottnt{P_{{\mathrm{2}}}} }{ \ottnt{P_{{\mathrm{1}}}} }   \nto[  \mathit{A}_\tau  ]  \ottnt{C''_{{\mathrm{1}}}} $ in $\ottnt{P_{{\mathrm{1}}}}$ with
  $(\ottnt{C'_{{\mathrm{1}}}}, \ottnt{C''_{{\mathrm{1}}}}) \in R_{\ottnt{P_{{\mathrm{1}}}}}$. We can transpose this reduction in
  $\ottnt{P_{{\mathrm{2}}}}$:  $  \mathsf{deoptimize}(  \ottnt{C_{{\mathrm{2}}}} ,  \xi ,  \tilde{\xi}^* )   \nto[  \mathit{A}_\tau  ]   \displace{ \ottnt{C''_{{\mathrm{1}}}} }{ \ottnt{P_{{\mathrm{1}}}} }{ \ottnt{P_{{\mathrm{2}}}} }  $ in $\ottnt{P_{{\mathrm{2}}}}$, and thus $ \ottnt{C_{{\mathrm{2}}}}  \tto[    ~ \mathit{A}_\tau   ]   \displace{ \ottnt{C''_{{\mathrm{1}}}} }{ \ottnt{P_{{\mathrm{1}}}} }{ \ottnt{P_{{\mathrm{2}}}} }  $. This matches the reduction of $\ottnt{C_{{\mathrm{1}}}}$,
  given that our assumption $( \ottnt{C'_{{\mathrm{1}}}}, \ottnt{C''_{{\mathrm{1}}}} ) \in R_{\ottnt{P_{{\mathrm{1}}}}}$ exactly
  means that $( \ottnt{C'_{{\mathrm{1}}}},  \displace{ \ottnt{C''_{{\mathrm{1}}}} }{ \ottnt{P_{{\mathrm{1}}}} }{ \ottnt{P_{{\mathrm{2}}}} }  ) \in R$.
\end{proof}

\clearpage

\section{Optimization Correctness} \label{sec:optimizations-correctness}

The proofs of the optimizations from \autoref{sec:optimizations}
are easier than the proofs for deoptimization invariants in the previous
section (although, as program transformations, they seem more
elaborate). This comes from the fact that the classical optimizations
rewrite an existing version and interact little with deoptimization.

\subsection{Constant Propagation}

We say that given a version $\ottmv{V}$, a \emph{static environment} $SE$ for
label $\ottmv{L}$ maps a subset of the variables in scope at $\ottmv{L}$ to values.
A static environment is \emph{valid}, written $SE \vDash \ottmv{L}$, if for any
configuration $\ottnt{C}$ over $\ottmv{L}$ reachable from the start of
$\ottmv{V}$ we have that $SE$ is a subset of the lexical environment $\ottnt{E}$.
Constant propagation can use a classic work-queue data-flow algorithm to
compute a valid static environment $SE$ at each label $\ottmv{L}$. It then
replaces, in the instruction at $\ottmv{L}$, each expression or simple
expression that can be evaluated in $SE$ by its value.  This is speculative
since assumption predicates of the form $ \mathit{x}   =   \ottnt{lit} $ populate the
static environment with the binding $\mathit{x}  \rightarrow  \ottnt{lit}$. 

\begin{lemma}\label{lem:constfold}
  For any version $\ottmv{V_{{\mathrm{1}}}}$, let $\ottmv{V_{{\mathrm{2}}}}$ be the result of constant
  propagation. $\ottmv{V_{{\mathrm{1}}}}$ and $\ottmv{V_{{\mathrm{2}}}}$ are bisimilar.
\end{lemma}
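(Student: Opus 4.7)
The plan is to exhibit a weak bisimulation $R$ between configurations over $V_1$ and $V_2$. Since constant propagation only substitutes (simple) expressions by their constant values inside existing instructions and leaves the control-flow skeleton and all labels untouched, I take $R$ as the lockstep relation: $R$ relates $\langle P\ I_1\ L\ K^*_1\ ;\ M\ E\rangle$ with $\langle P\ I_2\ L\ K^*_2\ ;\ M\ E\rangle$ where the program counter, heap, and lexical environment agree, and $K^*_2$ is a pointwise replacement of $K^*_1$ swapping $I_1$-frames for $I_2$-frames, as in the call-stack replacement device of the version-creation proof. The initial configurations required by Definition \ref{def:version-bisimilarity} are in $R$ trivially.

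The central semantic ingredient is a soundness lemma for the static environment: if $SE \vDash L$, then every reachable configuration at $L$ in $V_1$ has $SE \subseteq E$, and hence every expression $e$ evaluated by the analysis under $SE$ to a value $v$ also satisfies $M\ E\ e \to v$. Soundness of the data-flow fixpoint itself is standard; the only novelty is the speculative rule extending $SE$ at label $L+1$ with $x \mapsto \ottnt{lit}$ after an instruction $\textsf{assume}\ x = \ottnt{lit}\ \textbf{else}\ \ldots$ at $L$. This is sound in $V_1$ because \textsc{AssumePass} only fires when the predicate evaluates to $\textsf{true}$; any successor configuration at $L+1$ must therefore satisfy $E(x) = \ottnt{lit}$.

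The bisimulation proof is then a routine case analysis on the reduction step from either $V_1$ or $V_2$. For most instructions, the rewritten expressions of $V_2$ evaluate to the same values in $M, E$ as the original expressions of $V_1$, by the soundness lemma applied at $L$; hence the two reductions reach the same $L'$, $M'$, $E'$ and differ only in the replacement-style divergence of their stacks. Branches whose guard was constant-folded step to the same label in both versions (this pass does not yet prune the untaken side). For \assume, both the predicate list and the varmap expressions evaluate identically under $M, E$, so either both versions take \textsc{AssumePass}, or both deoptimize; because the targets $\xi, \tilde{\xi}^*$ are unchanged, the $\mathsf{deoptimize}(\cdot)$ results are equal configurations, trivially in $R$. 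Function calls and returns are handled exactly as in the version-creation theorem, using the stack-replacement clause of $R$.

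The main obstacle is phrasing $SE \vDash L$ so that its soundness (established for $V_1$) transfers to $V_2$-configurations without circularity: $V_2$ is precisely the program whose correctness we wish to establish by invoking this soundness. I plan to resolve this by an induction on the length of the reduction prefix, showing that every reachable $V_2$-configuration has an $R$-matched $V_1$-configuration with identical $E$; the $V_1$-soundness of $SE$ then justifies every expression rewrite that $V_2$'s reduction relies on, closing the induction. A minor subtlety is that the analysis may rewrite expressions occurring inside varmaps of \assume instructions, but this is harmless because varmaps are evaluated in $M, E$ exactly like ordinary expressions, so the deoptimized environment is unaffected.
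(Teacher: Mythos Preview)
Your approach is essentially the paper's: a lockstep relation plus the soundness of the static environment $SE$. Your explicit treatment of call-stack replacement is in fact more careful than the paper's terse proof, which glosses over returns.

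However, there is a real gap in your definition of $R$. As you state it, $R$ relates \emph{every} pair of configurations with matching $L$, $M$, $E$, not just reachable ones. That relation is not a weak bisimulation: take any $E$ with $SE \not\subseteq E$ at some label $L$ where the pass replaced an expression by its $SE$-value; the two related configurations at $L$ can then emit different $\textsf{print}$ actions. The paper gives exactly this counterexample after its proof. Your induction on the length of the reduction prefix does not repair this, because proving that the bisimulation diagram closes \emph{for reachable pairs} does not make the full $R$ a bisimulation; Definition~\ref{def:bisimulation} quantifies over all pairs in $R$.

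The fix is the one the paper takes: define $R$ from the start as the lockstep relation restricted to pairs $(C_1,C_2)$ with $C_1 \in \mathsf{reachable}(P_1)$. Then $SE \vDash L$ applies directly to every $C_1$ in $R$, and since related pairs share $E$, every rewritten subexpression in $C_2$ evaluates as in $C_1$. This also dissolves the circularity you worried about: you never need $SE$-soundness for $V_2$ independently, because every $C_2$ you encounter is, by construction of $R$, paired with a reachable $C_1$ whose environment it shares. Your induction is then unnecessary; closure of $R$ under reduction follows from the single-step argument you already sketched.
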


\begin{proof}
  The relation $R$ to use here for bisimulation is the one that relates each
  reachable $\ottnt{C_{{\mathrm{1}}}}$ in $ \mathsf{reachable}( \ottnt{P_{{\mathrm{1}}}} ) $ to the corresponding
  state $\ottnt{C_{{\mathrm{2}}}}  \mathrel{\stackrel{\mathsf{def} }{=} }   \displace{ \ottnt{C_{{\mathrm{1}}}} }{ \ottmv{V_{{\mathrm{1}}}} }{ \ottmv{V_{{\mathrm{2}}}} } $ in $ \mathsf{reachable}( \ottnt{P_{{\mathrm{2}}}} ) $.
  Consider two related $\ottnt{C_{{\mathrm{1}}}}, \ottnt{C_{{\mathrm{2}}}}$ over  $\ottmv{L}$, and $SE$ be
  the valid static environment at $\ottmv{L}$ inferred by our constant
  propagation algorithm. Reducing the next instruction of $\ottnt{C_{{\mathrm{1}}}}$ and
  $\ottnt{C_{{\mathrm{2}}}}$ will produce the same result, given that they only differ by
  substitutions of subexpressions by values that are valid under the static
  environment $SE$, and thus under $\ottnt{E}$.  If
  $ \ottnt{C_{{\mathrm{1}}}}  \nto[  \mathit{A}_\tau  ]  \ottnt{C'_{{\mathrm{1}}}} $ then $ \ottnt{C_{{\mathrm{2}}}}  \nto[  \mathit{A}_\tau  ]  \ottnt{C'_{{\mathrm{2}}}} $, and conversely.
\end{proof}

The restriction of our bisimulation $R$ to reachable configurations
introduced is crucial for the proof to work.  Indeed, a configuration that
is not reachable may \emph{not} respect the static environment $SE$.
Consider the following example, with $ \mathsf{V\kern -0.5pt \scalebox{0.83}{$\mathsf{ 1 }$} } $ on the left and $ \mathsf{V\kern -0.5pt \scalebox{0.83}{$\mathsf{ 2 }$} } $
on the right.
  \begin{mathpar}
     {\small \begin{array}{lll}
                                                      &   \mathsf{L\kern -0.25pt \scalebox{0.83}{$\mathsf{ 1 }$} }   &   \textsfb{var} ~  \mathsf{ x }  \nobreak\hspace{0pt}=\nobreak\hspace{0pt} 1   \\  \,  &      &  \textsfb{print} \,   \mathsf{ x }    \ottsym{+}    \mathsf{ x }    \\  \,  &      &  \textsfb{return} \, 3  \\  
                                                \end{array} } 

     {\small \begin{array}{lll}
                                                      &   \mathsf{L\kern -0.25pt \scalebox{0.83}{$\mathsf{ 1 }$} }   &   \textsfb{var} ~  \mathsf{ x }  \nobreak\hspace{0pt}=\nobreak\hspace{0pt} 1   \\  \,  &      &  \textsfb{print} \, 2  \\  \,  &      &  \textsfb{return} \, 3  \\  
                                                \end{array} } 
  \end{mathpar}
Now consider a pair of configurations at $ \mathsf{L\kern -0.25pt \scalebox{0.83}{$\mathsf{ 1 }$} } $ with the binding $ \mathsf{ x }   \rightarrow  0$ in the environment.
  \begin{mathpar}
    \ottnt{C_{{\mathrm{1}}}}  \mathrel{\stackrel{\mathsf{def} }{=} }   \langle  \ottnt{P} \,  \ottnt{P} ( \ottmv{F} ,  \ottmv{V_{{\mathrm{1}}}} )  \, \ottmv{L_{{\mathrm{1}}}} \, K^* \; \ottnt{M} \,  [   \mathsf{ x }   \rightarrow  0  ]   \rangle 

    \ottnt{C_{{\mathrm{2}}}}  \mathrel{\stackrel{\mathsf{def} }{=} }   \langle  \ottnt{P} \,  \ottnt{P} ( \ottmv{F} ,  \ottmv{V_{{\mathrm{2}}}} )  \, \ottmv{L_{{\mathrm{1}}}} \, K^* \; \ottnt{M} \,  [   \mathsf{ x }   \rightarrow  0  ]   \rangle 
  \end{mathpar}
They would be related by the relation $R$ used by the proof, yet they are not
bisimilar: we have $ \ottnt{C_{{\mathrm{1}}}}  \nto[   \textsf{print}~ 0   ]  \ottnt{C'_{{\mathrm{1}}}} $ as the only transition of
$\ottnt{C_{{\mathrm{1}}}}$ in $\ottmv{V_{{\mathrm{1}}}}$, and $ \ottnt{C_{{\mathrm{2}}}}  \nto[   \textsf{print}~ 2   ]  \ottnt{C'_{{\mathrm{2}}}} $ as the only
transition of $\ottnt{C_{{\mathrm{2}}}}$ in $\ottmv{V_{{\mathrm{2}}}}$.

\subsection{Unreachable Code Elimination}

The following two lemmas are trivial: the simple version-change mapping
between configurations on the two version is clearly a bisimulation.  In the
first case, this comes from the case that $ \textsfb{branch} ~ \textsfb{true} ~ \ottmv{L_{{\mathrm{1}}}} ~ \ottmv{L_{{\mathrm{2}}}} $ and
$\textsfb{goto} \, \ottmv{L_{{\mathrm{1}}}}$ reduce in the example same way.  In the second case,
unreachable configurations are not even considered by the proof.

\begin{lemma}
 Replacing $ \textsfb{branch} ~ \textsfb{true} ~ \ottmv{L_{{\mathrm{1}}}} ~ \ottmv{L_{{\mathrm{2}}}} $ by $\textsfb{goto} \, \ottmv{L_{{\mathrm{1}}}}$ or $ \textsfb{branch} ~ \textsfb{false} ~ \ottmv{L_{{\mathrm{1}}}} ~ \ottmv{L_{{\mathrm{2}}}} $ by $ \textsfb{goto} \, \ottmv{L_{{\mathrm{2}}}}$ results in an equivalent program.
\end{lemma}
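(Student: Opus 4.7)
The plan is to exhibit an explicit bisimulation $R$ between configurations over the original program $\ottnt{P_{{\mathrm{1}}}}$ and the transformed program $\ottnt{P_{{\mathrm{2}}}}$, in the same style as the proofs for version creation and for constant propagation. Let $\ottmv{V_{{\mathrm{1}}}}$ be the source version and $\ottmv{V_{{\mathrm{2}}}}$ the version obtained by replacing the target branch by a goto. Define $R$ to relate each configuration $\ottnt{C_{{\mathrm{1}}}}$ over $\ottnt{P_{{\mathrm{1}}}}$ to $\ottnt{C_{{\mathrm{2}}}} \mathrel{\stackrel{\mathsf{def}}{=}} \displace{\ottnt{C_{{\mathrm{1}}}}}{\ottnt{P_{{\mathrm{1}}}}}{\ottnt{P_{{\mathrm{2}}}}}$, along with the usual stack-replacement closure (as in the version-freshening proof) to handle the case where the continuation stack contains frames whose instruction streams mention $\ottmv{V_{{\mathrm{1}}}}$ instead of $\ottmv{V_{{\mathrm{2}}}}$.

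The proof then proceeds by case analysis on the next instruction of $\ottnt{C_{{\mathrm{1}}}}$ (equivalently, of $\ottnt{C_{{\mathrm{2}}}}$). All cases other than the rewritten instruction are immediate: the instruction at the current label is identical in both versions (modulo the names of $\ottnt{I}$ in continuations), so $\ottnt{C_{{\mathrm{1}}}}$ and $\ottnt{C_{{\mathrm{2}}}}$ take exactly the same step with the same observable action, landing in configurations again related by $R$ (using the stack-replacement closure to handle calls, returns, and deoptimizations, exactly as in the earlier version-creation proof). The version-change reductions (call, return, deoptimization) are handled just like in that proof, because the only thing that changes in a continuation is whether its instruction stream field refers to $\ottmv{V_{{\mathrm{1}}}}$ or $\ottmv{V_{{\mathrm{2}}}}$.

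The interesting case is the rewritten instruction itself. In $\ottnt{P_{{\mathrm{1}}}}$, by \textsc{BranchT}, $\textsfb{branch}~\textsfb{true}~\ottmv{L_{{\mathrm{1}}}}~\ottmv{L_{{\mathrm{2}}}}$ reduces silently to the configuration at $\ottmv{L_{{\mathrm{1}}}}$ with the environment and heap unchanged, since the simple-expression evaluation $\ottnt{E}~\textsfb{true} \rightharpoonup \textsfb{true}$ holds by \textsc{Literal} and therefore $\ottnt{M}~\ottnt{E}~\textsfb{true} \rightarrow \textsfb{true}$ by \textsc{SimpleExp}. In $\ottnt{P_{{\mathrm{2}}}}$, $\textsfb{goto}~\ottmv{L_{{\mathrm{1}}}}$ reduces silently to the same configuration at $\ottmv{L_{{\mathrm{1}}}}$ via \textsc{Goto}. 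Both target configurations are related by $R$. The second rewriting ($\textsfb{false} \Rightarrow \textsfb{goto}~\ottmv{L_{{\mathrm{2}}}}$) is symmetric, using \textsc{BranchF}. The converse direction ($\ottnt{C_{{\mathrm{2}}}}$ reduces, must be matched by $\ottnt{C_{{\mathrm{1}}}}$) is entirely symmetric.

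There is essentially no obstacle here; unlike constant propagation, this transformation does not depend on any fact about reachable configurations, so the bisimulation can be defined on all related pairs rather than restricted via $\mathsf{reachable}(\ottnt{P_{{\mathrm{1}}}})$. The only modest bookkeeping, as noted, is threading the stack-replacement closure through call/return/deoptimize transitions, which is already fully worked out in the earlier version-freshening theorem and can simply be reused.
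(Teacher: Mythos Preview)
Your proposal is correct and takes essentially the same approach as the paper: the paper simply states that the ``simple version-change mapping'' is a bisimulation because $\textsfb{branch}~\textsfb{true}~\ottmv{L_{{\mathrm{1}}}}~\ottmv{L_{{\mathrm{2}}}}$ and $\textsfb{goto}~\ottmv{L_{{\mathrm{1}}}}$ reduce identically. Your write-up is more explicit (in particular about the stack-replacement closure for call/return/deopt frames), which the paper's one-line proof leaves implicit, but the underlying argument is the same.
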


\begin{lemma}
  Removing an unreachable label results in an equivalent program.
\end{lemma}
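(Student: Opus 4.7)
The plan is to establish a weak bisimulation $R$ between $\ottnt{P_{{\mathrm{1}}}}$ and the program $\ottnt{P_{{\mathrm{2}}}}$ obtained by deleting the labeled instruction at an unreachable location $ \ottmv{F} .\kern -0.5pt \ottmv{V} .\kern -0.5pt \ottmv{L} $, following the same template as the constant-propagation proof. Take \emph{unreachable} to mean that no $\ottnt{C} \in  \mathsf{reachable}( \ottnt{P_{{\mathrm{1}}}} ) $ is over $ \ottmv{F} .\kern -0.5pt \ottmv{V} .\kern -0.5pt \ottmv{L} $, and define $R$ to relate every reachable $\ottnt{C_{{\mathrm{1}}}}$ to its translation $ \displace{ \ottnt{C_{{\mathrm{1}}}} }{ \ottnt{P_{{\mathrm{1}}}} }{ \ottnt{P_{{\mathrm{2}}}} } $. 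The starting configurations are related because the entry label of $ \mathsf{main} $ is reachable by definition, so it is not the deleted one.

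The bisimulation argument goes through cleanly because reachable configurations are closed under reduction and, by hypothesis, never sit at the removed label. Given $(\ottnt{C_{{\mathrm{1}}}}, \ottnt{C_{{\mathrm{2}}}}) \in R$ and a reduction $ \ottnt{C_{{\mathrm{1}}}}  \nto[  \mathit{A}_\tau  ]  \ottnt{C'_{{\mathrm{1}}}} $, the current instruction $ \ottnt{I} ( \ottmv{L'} ) $ of $\ottnt{C_{{\mathrm{1}}}}$ is not the deleted instruction, so $\ottnt{C_{{\mathrm{2}}}}$ reads the exact same text and fires the same rule to produce $\ottnt{C'_{{\mathrm{2}}}} = \displace{ \ottnt{C'_{{\mathrm{1}}}} }{ \ottnt{P_{{\mathrm{1}}}} }{ \ottnt{P_{{\mathrm{2}}}} } $; since $\ottnt{C'_{{\mathrm{1}}}}$ is itself reachable, the new pair lies in $R$. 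The converse direction is symmetric.

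The main obstacle is arguing that reductions whose semantics inspect \emph{other} labels, namely $\textsc{AssumeDeopt}$ and $\textsc{Return}$, remain consistent. For \assume, a reachable instance that actually deoptimizes produces a configuration over its target $ \ottmv{F'} .\kern -0.5pt \ottmv{V'} .\kern -0.5pt \ottmv{L'} $, which is then reachable; hence the target cannot be the removed label. For an \assume whose predicates hold at every reachable visit, its target label in $\ottnt{P_{{\mathrm{2}}}}$ (possibly dangling) is never dereferenced, so both programs take $\textsc{AssumePass}$ with matching environments. Continuations on the call stack are handled similarly: every continuation on the stack of a reachable $\ottnt{C_{{\mathrm{1}}}}$ was installed by $\textsc{Call}$ at some reachable label $\ottmv{L_0}$, and after the matching $\textsc{Return}$ the label $ ( \ottmv{L_0} \kern-1pt+\kern-2pt1) $ is the current label of a reachable configuration, so it too is not the removed label.

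Finally there is a small well-formedness obligation: we must check that $\ottnt{P_{{\mathrm{2}}}}$ still satisfies the IR's CFG and scoping invariants, at least at every label actually exercised during execution. Any $\textsfb{goto}$, $\textsfb{branch}$, or $\textsfb{assume}$ occurrence reached in $\ottnt{P_{{\mathrm{1}}}}$ targets only labels that are themselves reachable (otherwise the target would appear as the current label of some reachable configuration), so no dynamically exercised reference is broken by the deletion. With these observations, $R$ is a bisimulation and hence $\ottnt{P_{{\mathrm{1}}}}$ and $\ottnt{P_{{\mathrm{2}}}}$ are equivalent.
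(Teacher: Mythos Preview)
Your proposal is correct and follows the paper's approach: a bisimulation restricted to reachable configurations, with the observation that the removed label is never visited. The paper dispatches this in a single sentence (``unreachable configurations are not even considered by the proof''), whereas you spell out the edge cases for \assume{} targets, return labels, and well-formedness that the paper leaves implicit.
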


\subsection{Function Inlining}

Assume that the function $\ottmv{F}$ has active version $  \mathsf{V\kern -0.25pt \scalebox{0.87}{$\mathsf{ callee }$} }  $.
If the new version contains a call to $\ottmv{F}$, $ \textsfb{call} ~  \mathsf{ res }  \nobreak\hspace{0pt}=\nobreak\hspace{0pt}  { \ottmv{F} }  ( \ottnt{e_{{\mathrm{1}}}}  \ottsym{,} \, .. \, \ottsym{,}  \ottnt{e_{\ottmv{n}}} ) $ with return label $  \mathsf{L\kern -0.25pt \scalebox{0.87}{$\mathsf{ ret }$} }  $ (the label after the call), 
inlining removes the call and instead:
\begin{itemize}
\item declares a fresh mutable return variable $ \textsfb{var} ~  \mathsf{ res }  \nobreak\hspace{0pt}=\nobreak\hspace{0pt} \textsfb{nil} $;
\item for the formal variables $\mathit{x}, ..$ of $\ottmv{F}$,
  defines the argument variables
  $ \textsfb{var} ~ \mathit{x}_{{\mathrm{1}}} \nobreak\hspace{0pt}=\nobreak\hspace{0pt} \ottnt{se_{{\mathrm{1}}}} , ..,  \textsfb{var} ~ \mathit{x}_{\ottmv{n}} \nobreak\hspace{0pt}=\nobreak\hspace{0pt} \ottnt{se_{\ottmv{n}}} $;
\item inserts the instructions from $  \mathsf{V\kern -0.25pt \scalebox{0.87}{$\mathsf{ callee }$} }  $,
  replacing each instruction $\textsfb{return} \, \ottnt{e}$ by the sequence:
    \(   \mathsf{ res }  \nobreak\hspace{0pt} \leftarrow \nobreak\hspace{0pt} \ottnt{e}  ;~ \textsfb{drop} \, \mathit{x}_{{\mathrm{1}}} ;~...~;~ \textsfb{drop} \, \mathit{x}_{\ottmv{n}} ;~ \textsfb{goto} \,  \mathsf{L\kern -0.25pt \scalebox{0.87}{$\mathsf{ ret }$} }  \)
\end{itemize}

\begin{theorem}
  The inlining transformation presented returns a version equivalent to the
  caller version.
\end{theorem}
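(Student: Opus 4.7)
The plan is to prove the theorem by exhibiting a weak bisimulation, in the style of the proofs of Section 5. Let $P_1$ denote the program before inlining and $P_2$ the one after; write $V_o$ and $V_n$ for the caller's original and post-inlining versions of the enclosing function $F_c$, and $V_c$ for $F$'s active version. I would define a relation $R$ in three groups. (i) For a configuration $C_1$ of $P_1$ not over $V_n$ and whose stack contains no inlined-call frame, $R$ relates $C_1$ to its program swap in $P_2$, with the stack adjusted by the replacement relation used in the new-version proof. (ii) For a configuration over $V_o$ of $P_1$ at a label outside the inlined block, $R$ relates it to the same label in $V_n$ of $P_2$. (iii) For a configuration of $P_2$ over $V_n$ at a label $L'$ coming from the inlined body, with environment $E$ containing the caller's state $E_c$, the argument bindings $x_i \mapsto v_i$, the local bindings introduced in the inlined body, and $\mathsf{res} \mapsto v_{\mathsf{res}}$, $R$ relates it to the $P_1$ configuration over $V_c$ at $L'$ whose environment is only the callee-side portion of $E$ and whose stack has one extra frame $\langle I_o~L_{\mathsf{ret}}~\mathsf{res}~E_c\rangle$ pushed on top, where $I_o$ is the instruction stream of $V_o$.

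Once $R$ is fixed, the proof proceeds by case analysis on the reduction rule of \autoref{fig:semantics-reduction}. Straight-line and control-flow instructions inside the inlined body step identically on both sides, since name mangling makes the two parts of the combined environment disjoint and expression evaluation yields the same values. The $\textsfb{call}$ transition in $P_1$ is matched in $P_2$ by the silent preamble $\textsfb{var}~\mathsf{res} = \textsfb{nil};~\textsfb{var}~x_1 = e_1;~\dots;~\textsfb{var}~x_n = e_n$, which builds exactly the environment required by case (iii); the auxiliary $\mathsf{res}$ binding is harmless because the original callee body never reads it. Symmetrically, a $\textsfb{return}~e$ in $P_1$ is matched in $P_2$ by $\mathsf{res} \leftarrow e;~\textsfb{drop}~x_1;~\dots;~\textsfb{drop}~x_n;~\textsfb{goto}~L_{\mathsf{ret}}$, which is silent and leaves the observable state equal to the post-return state of $P_1$ (the popped continuation and the rebinding of $\mathsf{res}$). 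To handle that the $P_2$ preamble produces intermediate configurations, I would extend $R$ with an auxiliary clause relating each partially-built preamble state to the pre-call $P_1$ configuration, so each silent preamble step is matched by zero $P_1$ steps until the final one triggers the actual call; the same device handles the return epilogue.

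The principal obstacle is the $\textsfb{assume}$ case inside the inlined body, which is exactly what motivates the extra-continuation machinery informally introduced in \autoref{sec:inlining}. In $P_1$, deoptimization from such an assume evaluates its metadata $\xi, \tilde{\xi}^*$ in the callee environment and pushes the resulting continuations on top of the pre-existing case (iii) frame, so control eventually reaches $V_o$ by an ordinary function return. In $P_2$, the same assume no longer sits behind a $V_c$ frame, so a substitute continuation has to be synthesized. To close the proof I would strengthen the formal transformation so that every inlined assume has its $\tilde{\xi}^*$ extended with a continuation of the shape $F_c.V_o.L_{\mathsf{ret}}~\mathsf{res}~\mathit{VA}_c$, where $\mathit{VA}_c$ is the identity varmap over the variables of $E_c$, and so that callee-local variable occurrences in the varmaps of the inlined assumes are renamed to their mangled counterparts. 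With this extension, the two post-deoptimization configurations produced by \textsc{DeoptimizeConf} on each side coincide up to case (i) of $R$, and the bisimulation closes; the symmetric direction is analogous.
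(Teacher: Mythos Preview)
Your proposal is correct and follows essentially the same approach as the paper: decompose the inlined environment into disjoint caller and callee parts, and build a bisimulation relating an inlined configuration to a callee configuration with the caller's continuation sitting on top of the stack; the call and return boundary cases are handled by stepping through the silent preamble and epilogue, exactly as you describe.

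Where you differ is in thoroughness rather than strategy. The paper's proof sketch lists only three ``interesting cases'' (within-callee steps, call entry, return exit) and never discusses the \assume case at all; you correctly identify that deoptimization from inside the inlined body is the crux and that it only closes because the transformation appends the extra continuation $F_c.V_o.L_{\mathsf{ret}}~\mathsf{res}~\mathit{VA}_c$ to every inherited \assume (this is described informally in \autoref{sec:inlining} but omitted from the formal transformation in \autoref{sec:optimizations-correctness}). Your explicit auxiliary clauses for the partially-built preamble and epilogue states are likewise just a more careful rendering of what the paper waves through with ``we step through the silent transitions.'' One small framing point: the theorem is stated as version bisimilarity within a single program (Definition~\ref{def:version-bisimilarity}), so you need not introduce two programs $P_1,P_2$; both $V_o$ and $V_n$ live in the same $P$, which slightly simplifies case~(i) of your relation.
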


\begin{proof}
  \newcommand{\Ecaller}{E_{\mathsf{caller}}}
  \newcommand{\Ecallee}{E_{\mathsf{callee}}} The key idea of the proof is
  that any environment $E$ in the inlined instruction stream can be
  split into two disjoint parts: an environment corresponding to the
  caller function, $\Ecaller$, and an environment corresponding to
  the callee, $\Ecallee$.
  To build the bisimulation, we relate the inlined version, on one hand,
  with the callee on the other hand, \emph{when} the callee was called by
  the called at the inlined call point. This takes two forms:
  \begin{itemize}
  \item If a configuration is currently executing in the callee, and has the
    caller on the top of the call stack with the expected return address, we
    relate it to a configuration in the inlined version (at the same
    position in the callee).  The environment of the inlined version
    is exactly the union of the callee environment (the environment of the
    configuration) and the caller environment (found on the call stack).
  \item If the stack contains a caller frame above a callee frame, we
    relate this to a single frame in the inlined version; again, there is a
    bidirectional correspondence between inlined environment and a pair of a
    caller and callee environment.
  \end{itemize}

To check that this relation is a bisimulation, there are three interesting
cases:
  \begin{itemize}
  \item If a transition is purely within the callee's code on one side, and
    within the inlined version of the callee on the other, it suffices to
    check that the environment decomposition is preserved.  During the
    execution of inlinee, $\Ecaller$ never changes, given that the
    instruction coming from the callee do not have the caller's variable in
    scope---and thus cannot mutate them.
  \item If the transition is a call of the callee from the caller on one side,
    and the entry into the declaration of the return variable $ \textsfb{var} ~  \mathsf{ res }  \nobreak\hspace{0pt}=\nobreak\hspace{0pt} \textsfb{nil} $
    on the other, we step through the silent transitions that bind the call
    parameters $ \textsfb{var} ~ \mathit{x}_{{\mathrm{1}}} \nobreak\hspace{0pt}=\nobreak\hspace{0pt} \ottnt{e_{{\mathrm{1}}}} , ..,  \textsfb{var} ~ \mathit{x}_{\ottmv{n}} \nobreak\hspace{0pt}=\nobreak\hspace{0pt} \ottnt{e_{\ottmv{n}}} $ and get to
    a state in the inlined function corresponding to the start of the callee.
  \item If the transition is a $\textsfb{return} \, \ottnt{e}$ of the callee to the caller
    on one side, and the entry into the result assignment $  \mathsf{ res }  \nobreak\hspace{0pt} \leftarrow \nobreak\hspace{0pt} \ottnt{e} $
    on the other, we similarly step through the $\textsfb{drop} \, \mathit{x}$ for each
    $ \mathit{x} $ in the callee's environment, and get to related state on
    the label $ret$ following the function call.
  \end{itemize}
\end{proof}

\subsection{Unrestricted Deoptimization}

Consider $\ottnt{P_{{\mathrm{1}}}}$ containing an \assume at $\ottmv{L_{{\mathrm{1}}}}$, followed by $\ottnt{i_{\ottmv{m}}}$ at
$\ottmv{L_{{\mathrm{2}}}}  \mathrel{\stackrel{\mathsf{def} }{=} }   ( \ottmv{L_{{\mathrm{1}}}} \kern-1pt+\kern-2pt1) $.  Let $\ottnt{i_{\ottmv{m}}}$ be such it has a unique
successor, is the unique predecessor of $\ottmv{L_{{\mathrm{1}}}}$, and is not a function
call, has no side-effect, does not modify the heap (array write or
creation), and does not modify the variables mentioned in the \assume.
Under these conditions, we can move the \assume immediately after the
successor of $\ottnt{i_{\ottmv{m}}}$.  Let us name $\ottnt{P_{{\mathrm{2}}}}$ the program modified in this
way.

\begin{lemma}
  Given a program $\ottnt{P_{{\mathrm{1}}}}$, and $\ottnt{P_{{\mathrm{2}}}}$ obtained by
  permuting an \assume instruction $\ottmv{L_{{\mathrm{1}}}}$ after $\ottnt{i_{\ottmv{m}}}$ at $\ottmv{L_{{\mathrm{2}}}}$ under the conditions
  above, $\ottnt{P_{{\mathrm{1}}}}$ and $\ottnt{P_{{\mathrm{2}}}}$ are bisimilar.
\end{lemma}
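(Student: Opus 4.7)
The plan is to build an explicit weak bisimulation $R$ that treats the displaced \assume as a transient ``bubble'' of misalignment which closes in one step on each side. Intuitively $i_m$ and the \assume commute under the non-interference hypotheses, so the two programs produce the same observable traces even though they execute the same instructions in different orders.

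I would define $R$ with three kinds of related pairs. First, for any reachable $\ottnt{C_{{\mathrm{1}}}}$ whose next label is outside $\{L_1, L_2\}$, relate $\ottnt{C_{{\mathrm{1}}}}$ to $\ottnt{C_{{\mathrm{1}}}}[P_1 \leftarrow P_2]$ (same environment, heap, call stack). Second, for an \emph{entry} pair, relate $\ottnt{C_{{\mathrm{1}}}}$ at $L_1$ in $P_1$ (the \assume) to $\ottnt{C_{{\mathrm{2}}}}$ at $L_1$ in $P_2$ (now $i_m$), again sharing environment, heap, and stack. Third, for an \emph{interior} pair, relate $\ottnt{C_{{\mathrm{1}}}}$ at $L_2$ in $P_1$ (now $i_m$) with environment $\ottnt{E}$ to $\ottnt{C_{{\mathrm{2}}}}$ at $L_2$ in $P_2$ (the moved \assume) with environment $\ottnt{E'}$ obtained from $\ottnt{E}$ by running $i_m$; heap and stack are shared, justified because $i_m$ has no side effects, does not modify the heap, and is not a call. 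Following the constant propagation proof, I would restrict $R$ to reachable configurations, so that at any interior pair I may assume the \assume predicates evaluate to \textbf{true} in $\ottnt{E}$ (the only predecessor of $L_2$ in $P_1$ is the original \assume at $L_1$).

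Bisimulation is then checked by case analysis. Outside pairs step identically. For entry pairs, if the predicates succeed in $\ottnt{E}$, $P_1$ silently moves into an interior pair while $P_2$ silently executes $i_m$ into the same interior pair; if the predicates fail, $P_1$ deoptimizes at once while $P_2$ first executes $i_m$ and then deoptimizes, and by the non-interference hypothesis (no varmap or predicate expression touches a variable modified by $i_m$, and the heap is unchanged) the \textsc{DeoptimizeConf} construction produces identical configurations on both sides. For interior pairs, $P_1$'s unique move is to run $i_m$ and reach the successor label with environment $\ottnt{E'}$; by the same non-interference argument the predicates evaluate in $\ottnt{E'}$ exactly as they did in $\ottnt{E}$, hence the \assume in $P_2$ also passes and $P_2$ silently steps to the successor label with environment $\ottnt{E'}$, landing in an outside pair.

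The main obstacle is the interior pair, since it is the only case where the two configurations disagree on the environment while agreeing on the program counter. Bisimulation must be checked in both directions, and the delicate subcase is that from $\ottnt{C_{{\mathrm{2}}}}$ I must rule out a spurious deoptimization of the moved \assume in $\ottnt{E'}$. This is precisely where the combination of reachability with the non-interference hypothesis is essential: reachability forces the predicates to have already passed in $\ottnt{E}$, and non-interference propagates ``passing'' from $\ottnt{E}$ to $\ottnt{E'}$. Once this observation is made, the remaining rule-by-rule matching is routine.
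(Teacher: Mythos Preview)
Your relation $R$ is not a weak bisimulation: it fails to close in the $P_2\to P_1$ direction at an entry pair when the predicates fail. From the entry pair $(C_1,C_2)$ with shared environment $E$ in which the guards are false, take the single silent step on the $P_2$ side that executes $i_m$, producing $C'_2$ at the moved \assume with environment $E'$. Weak bisimulation demands some $C_1\to^*C$ with $(C,C'_2)\in R$. Your only candidates for $C$ are $C_1$ itself (zero steps) or the deoptimized configuration $C_0$. Neither is related to $C'_2$: the only clause of your $R$ that can place a configuration at the moved \assume on the right is the interior clause, and by your own reachability restriction that clause requires a reachable $P_1$ configuration at $L_2$---but with failing guards nothing at $L_2$ in $P_1$ is reachable. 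Your case analysis for the failing entry pair argues only the $P_1\to P_2$ direction (one $P_1$ step against two $P_2$ steps); the single $P_2$ step $C_2\to C'_2$ is never matched.

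The paper closes this gap by pairing $C_1$ (still at the \assume in $P_1$, not yet executed) with \emph{both} $C_2$ and $C'_2$. Then $C_2\to C'_2$ is matched by the empty reduction on $C_1$, and from $(C_1,C'_2)$ both sides sit at an \assume whose guards evaluate identically (by non-interference of $i_m$), so they pass or deoptimize in lockstep. This pair $(C_1,C'_2)$ is well-defined in both the passing and failing cases, whereas your interior pair $(C'_1,C'_2)$---with the $P_1$ side already past the \assume---can only exist when the guards pass. To keep the passing case closed once $(C_1,C'_2)$ is present, the paper also relates $C'_1$ to the $P_2$ configuration \emph{after} the moved \assume rather than to $C'_2$; with those two ``staggered'' pairs in place the reachability restriction becomes unnecessary.
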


\begin{proof}
  The applicability restrictions are specific enough that we can reason
  precisely about the structure of reductions around the permuted
  instructions.
  Consider a configuration $\ottnt{C_{{\mathrm{1}}}}$ over the \assume at $\ottmv{L_{{\mathrm{1}}}}$ in
  $\ottnt{P_{{\mathrm{1}}}}$, and the corresponding configuration $\ottnt{C_{{\mathrm{2}}}}  \mathrel{\stackrel{\mathsf{def} }{=} } 
   \displace{  \displace{ \ottnt{C_{{\mathrm{1}}}} }{ \ottnt{P_{{\mathrm{1}}}} }{ \ottnt{P_{{\mathrm{2}}}} }  }{ \ottmv{L_{{\mathrm{1}}}} }{ \ottmv{L_{{\mathrm{2}}}} } $ over $\ottmv{L_{{\mathrm{2}}}}$ in $\ottnt{P_{{\mathrm{2}}}}$.
  Instruction $\ottnt{i_{\ottmv{m}}}$ has a single successor, so there is only one possible
  reduction rule.  Since $\ottnt{i_{\ottmv{m}}}$ is not an I/O instruction, it must be a
  silent action.  Hence there is a unique $\ottnt{C'_{{\mathrm{2}}}}$ such that $ \ottnt{C_{{\mathrm{2}}}}  \nto[  \mathit{A}_\tau  ]  \ottnt{C'_{{\mathrm{2}}}} $ holds, and furthermore $\mathit{A}_\tau$ is $ \tau $.
  Configurations $\ottnt{C_{{\mathrm{1}}}}$ and $\ottnt{C'_{{\mathrm{2}}}}$ are over the same \assume.  Let
  $\ottnt{E_{{\mathrm{1}}}}$ and $\ottnt{E_{{\mathrm{2}}}}$ be environments of $\ottnt{C_{{\mathrm{1}}}}$ and $\ottnt{C'_{{\mathrm{2}}}}$
  respectively, and $\ottnt{E'}$ be their common sub-environment that contain
  only the variables mentioned in the \assume ($\ottnt{i_{\ottmv{m}}}$ does not modify its variables).
  If all tests in the \assume instruction are true under $\ottnt{E'}$, then
  $\ottnt{C_{{\mathrm{1}}}}$ and $\ottnt{C'_{{\mathrm{2}}}}$ silently reduce to $\ottnt{C'_{{\mathrm{1}}}}$ and
  $\ottnt{C''_{{\mathrm{2}}}}$. $\ottnt{C'_{{\mathrm{1}}}}$ is over $\ottnt{i_{\ottmv{m}}}$ at $\ottmv{L_{{\mathrm{2}}}}$, so it reduces $ \ottnt{C'_{{\mathrm{1}}}}  \nto[   \tau   ]  \ottnt{C''_{{\mathrm{1}}}} $; notice that $\ottnt{C''_{{\mathrm{1}}}}$ and $\ottnt{C''_{{\mathrm{2}}}}$ are over the
  labels $ ( \ottmv{L_{{\mathrm{2}}}} \kern-1pt+\kern-2pt1) $ in $\ottnt{P_{{\mathrm{1}}}}$ and $ ( \ottmv{L_{{\mathrm{1}}}} \kern-1pt+\kern-2pt1) $ in $\ottnt{P_{{\mathrm{2}}}}$,
  which are equal.
  If not all tests of the \assume are true under $\ottnt{E'}$, then both $\ottnt{C_{{\mathrm{1}}}}$ and $\ottnt{C'_{{\mathrm{2}}}}$ deoptimize.
  The deoptimized configurations are the same
  \begin{itemize}
  \item their function, version and label are the same: the \assume's deoptimization target;
  \item they have the same call stack: it only depends on the call stack of $\ottnt{C_{{\mathrm{1}}}}$ and the
    interpretation of the \assume's extra frames under $\ottnt{E'}$;
  \item they have the same heap, as we assumed that $\ottnt{i_{\ottmv{m}}}$ does not modify
    the heap;
  \item they have the same deoptimized environment: it only depends on $\ottnt{E'}$.
  \end{itemize}
  Let us call $\ottnt{C_{{\mathrm{0}}}}$ the configuration resulting from either deoptimization transitions.

  We establish bisimilarity using definition a relation $R$ and proving it is a bisimulation.
  The following diagrams are useful to follow the definition of $R$ and the proofs.
  \begin{mathpar}
  \begin{tikzcd}[row sep = large, column sep = large]
    \ottmv{L_{{\mathrm{1}}}} : \ottnt{C_{{\mathrm{1}}}}
    \arrow[to, r, "\assume"]
    &
    \ottmv{L_{{\mathrm{2}}}} : \ottnt{C'_{{\mathrm{1}}}}
    \arrow[to, r, "\ottnt{i_{\ottmv{m}}}"]
    &
    \ottnt{C''_{{\mathrm{1}}}}
    \\
    \ottmv{L_{{\mathrm{2}}}} : \ottnt{C_{{\mathrm{2}}}}
    \arrow[no head, squiggly, u, "R"]
    \arrow[to, r, "\ottnt{i_{\ottmv{m}}}"]
    &
    \ottmv{L_{{\mathrm{1}}}} : \ottnt{C'_{{\mathrm{2}}}}
    \arrow[to, r, "\assume"]
    \arrow[no head, squiggly, ul, "R"]
    &
    \ottnt{C''_{{\mathrm{2}}}}
    \arrow[no head, squiggly, ul, "R"]
    \arrow[no head, squiggly, u, "R"]
  \end{tikzcd}

  \begin{tikzcd}[row sep = large, column sep = large]
    \ottmv{L_{{\mathrm{1}}}} : \ottnt{C_{{\mathrm{1}}}}
    \arrow[to, r, "\mathsf{deoptimize}"]
    &
    \ottnt{C_{{\mathrm{0}}}}
    &
    \\
    \ottmv{L_{{\mathrm{2}}}} : \ottnt{C_{{\mathrm{2}}}}
    \arrow[no head, squiggly, u, "R"]
    \arrow[to, r, "\ottnt{i_{\ottmv{m}}}"]
    &
    \ottmv{L_{{\mathrm{1}}}} : \ottnt{C'_{{\mathrm{2}}}}
    \arrow[no head, squiggly, ul, "R"]
    \arrow[to, r, "\mathsf{deoptimize}"]
    &
    \ottnt{C_{{\mathrm{0}}}}
  \end{tikzcd}
  \end{mathpar}

  We define $R$ as the smallest relation such that:
  \begin{enumerate}
    \item For any $\ottnt{C_{{\mathrm{1}}}}$ and $\ottnt{C_{{\mathrm{2}}}}$ as above,
      $\ottnt{C_{{\mathrm{1}}}}$ and $\ottnt{C'_{{\mathrm{1}}}}$ are related to $\ottnt{C_{{\mathrm{2}}}}$.
    \item For any $\ottnt{C_{{\mathrm{1}}}}$ and $\ottnt{C_{{\mathrm{2}}}}$ as above
      such that $\ottnt{C_{{\mathrm{1}}}}$ passes the \assume tests (does not deoptimize),
      both $\ottnt{C'_{{\mathrm{2}}}}$ and $\ottnt{C''_{{\mathrm{2}}}}$ are related to $\ottnt{C''_{{\mathrm{1}}}}$.
    \item For any $\ottnt{C}$ over $\ottnt{P_{{\mathrm{1}}}}$ that is over neither $\ottmv{L_{{\mathrm{1}}}}$ nor $\ottmv{L_{{\mathrm{2}}}}$,
      $\ottnt{C}$ and $ \displace{ \ottnt{C} }{ \ottnt{P_{{\mathrm{1}}}} }{ \ottnt{P_{{\mathrm{2}}}} } $ are related.
  \end{enumerate}
  We now prove that $R$ is a bisimulation.
  Any pair of configurations that are not over either $\ottmv{L_{{\mathrm{1}}}}$ or
  $\ottmv{L_{{\mathrm{2}}}}$ come from the case (3), so they are identical and it is
  immediate that they match each other. The interesting cases are for
  matching pairs of configurations over $\ottmv{L_{{\mathrm{1}}}}$ or $\ottmv{L_{{\mathrm{2}}}}$.

  In the case where no deoptimization happens, the reductions in $\ottnt{P_{{\mathrm{2}}}}$ are
  either $ \ottnt{C_{{\mathrm{2}}}}  \nto[   \tau   ]  \ottnt{C'_{{\mathrm{2}}}} $, where both configurations are related to $\ottnt{C_{{\mathrm{1}}}}$,
  or $ \ottnt{C'_{{\mathrm{2}}}}  \nto[   \tau   ]  \ottnt{C''_{{\mathrm{2}}}} $ which is matched by $ \ottnt{C_{{\mathrm{1}}}}  \nto[   \tau   ]  \ottnt{C'_{{\mathrm{1}}}} $.
  The reductions in $\ottnt{P_{{\mathrm{1}}}}$ are
  either $ \ottnt{C_{{\mathrm{1}}}}  \nto[   \tau   ]  \ottnt{C'_{{\mathrm{1}}}} $, which is matched by $ \ottnt{C_{{\mathrm{2}}}}  \nto[   \tau   ]   \ottnt{C'_{{\mathrm{2}}}}  \nto[   \tau   ]  \ottnt{C''_{{\mathrm{2}}}}  $ and $ \ottnt{C'_{{\mathrm{2}}}}  \nto[   \tau   ]  \ottnt{C''_{{\mathrm{2}}}} $,
  or $ \ottnt{C'_{{\mathrm{1}}}}  \nto[   \tau   ]  \ottnt{C''_{{\mathrm{1}}}} $, which are both related to $\ottnt{C''_{{\mathrm{2}}}}$.

  In the case where a deoptimization happens,
  the only reduction in $\ottnt{P_{{\mathrm{1}}}}$ is $ \ottnt{C_{{\mathrm{1}}}}  \nto[   \tau   ]  \ottnt{C_{{\mathrm{0}}}} $,
  which is matched by $ \ottnt{C_{{\mathrm{2}}}}  \nto[   \tau   ]   \ottnt{C'_{{\mathrm{2}}}}  \nto[   \tau   ]  \ottnt{C_{{\mathrm{0}}}}  $ and $ \ottnt{C'_{{\mathrm{2}}}}  \nto[   \tau   ]  \ottnt{C_{{\mathrm{0}}}} $.
  The reductions in $\ottnt{P_{{\mathrm{2}}}}$ are $ \ottnt{C_{{\mathrm{2}}}}  \nto[   \tau   ]  \ottnt{C'_{{\mathrm{2}}}} $,
  which are matched by the empty reduction on $\ottnt{C_{{\mathrm{1}}}}$
  and $ \ottnt{C'_{{\mathrm{2}}}}  \nto[   \tau   ]  \ottnt{C_{{\mathrm{0}}}} $ are matched by $ \ottnt{C_{{\mathrm{1}}}}  \nto[   \tau   ]  \ottnt{C_{{\mathrm{0}}}} $.

  Finally, we show preservation of the assumption transparency invariant.
  We have to establish the invariant for $\ottnt{P_{{\mathrm{2}}}}$, assuming the invariant
  for $\ottnt{P_{{\mathrm{2}}}}$.  We have to show that $\ottnt{C_{{\mathrm{0}}}}$ and $\ottnt{C'_{{\mathrm{2}}}}$ are bisimilar.
  $\ottnt{C_{{\mathrm{0}}}}$ is bisimilar to $\ottnt{C_{{\mathrm{1}}}}$ (this is the transparency invariant on
  $\ottnt{P_{{\mathrm{1}}}}$), and $\ottnt{C_{{\mathrm{1}}}}$ and $\ottnt{C'_{{\mathrm{2}}}}$ are bisimilar because they are
  related by the bisimulation $R$.
\end{proof}

\subsection{Predicate Hoisting}

Hoisting predicates takes a version $\ottmv{V_{{\mathrm{1}}}}$, an expression $\ottnt{e}$, and two
labels $\ottmv{L_{{\mathrm{1}}}}, \ottmv{L_{{\mathrm{2}}}}$, such that the instruction at $\ottmv{L_{{\mathrm{1}}}}, \ottmv{L_{{\mathrm{2}}}}$ are
both \assume instructions and $\ottnt{e}$ is a part of the predicate list at $\ottmv{L_{{\mathrm{1}}}}$.
The pass
copies $\ottnt{e}$ from $\ottmv{L_{{\mathrm{1}}}}$ to $\ottmv{L_{{\mathrm{2}}}}$, if all variables mentioned in
$\ottnt{e}$ are in scope at $\ottmv{L_{{\mathrm{2}}}}$.  If, after this step the $\ottnt{e}$ can be
constant folded to $ \textsfb{true} $ at $\ottmv{L_{{\mathrm{1}}}}$ by the optimization from
\autoref{sec:specl-constantprop}, then it is removed from $\ottmv{L_{{\mathrm{1}}}}$,
otherwise the whole version stays unchanged.

\begin{lemma}
  Let $\ottmv{V_{{\mathrm{2}}}}$ be the result of hoisting  $\ottnt{e}$ from $\ottmv{L_{{\mathrm{1}}}}$
  to $\ottmv{L_{{\mathrm{2}}}}$ in $\ottmv{V_{{\mathrm{1}}}}$. $\ottmv{V_{{\mathrm{1}}}}$ and $\ottmv{V_{{\mathrm{2}}}}$ are bisimilar.
\end{lemma}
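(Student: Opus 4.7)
The plan is to decompose the hoisting transformation into a sequence of smaller, already-justified steps and chain their equivalence proofs by transitivity of weak bisimulation. Concretely, I would factor the pass from $\ottmv{V_{{\mathrm{1}}}}$ to $\ottmv{V_{{\mathrm{2}}}}$ through two intermediate versions $V'$ and $V''$: first, $V'$ is obtained from $\ottmv{V_{{\mathrm{1}}}}$ by copying $\ottnt{e}$ into the predicate list of the \assume at $\ottmv{L_{{\mathrm{2}}}}$; second, $V''$ is obtained from $V'$ by applying the speculative constant propagation of Section~\ref{sec:specl-constantprop}, which under the hypothesis folds the occurrence of $\ottnt{e}$ in the \assume at $\ottmv{L_{{\mathrm{1}}}}$ to $\textsfb{true}$; third, $\ottmv{V_{{\mathrm{2}}}}$ is obtained from $V''$ by dropping the now-$\textsfb{true}$ predicate from $\ottmv{L_{{\mathrm{1}}}}$.

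Each of these three steps is covered by an already-proved result. Step one is a direct instance of the predicate-addition lemma from Section~\ref{subsec:formal-invariants} (adding a new predicate to an existing \assume yields an equivalent program); the scope precondition of the hoisting pass guarantees that $\ottnt{e}$ is well-formed at $\ottmv{L_{{\mathrm{2}}}}$. Step two is an instance of Lemma~\ref{lem:constfold}, whose precondition is exactly the hypothesis built into the applicability criterion of the hoisting pass, namely that speculative constant propagation deduces $\ottnt{e} \mapsto \textsfb{true}$ at $\ottmv{L_{{\mathrm{1}}}}$. Step three requires only a small auxiliary observation: removing a $\textsfb{true}$ predicate from an \assume's predicate list preserves behavior, because \textsc{AssumePass} and \textsc{AssumeDeopt} both hinge on whether \emph{every} predicate evaluates to $\textsfb{true}$, and inserting or removing a tautology does not affect this conjunction; the associated bisimulation is essentially the identity on configurations, modulo the syntactic change of predicate list. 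Composition by the transitivity lemma for weak bisimulation then yields $\ottmv{V_{{\mathrm{1}}}} \sim V' \sim V'' \sim \ottmv{V_{{\mathrm{2}}}}$.

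The hard part will be justifying step two rigorously. We must ensure that the speculative constant propagation of Section~\ref{sec:specl-constantprop}, when applied to $V'$, indeed has enough information to fold $\ottnt{e}$ at $\ottmv{L_{{\mathrm{1}}}}$. This requires that the static environment computed by the data-flow analysis be seeded at $\ottmv{L_{{\mathrm{2}}}}$ with the fact that $\ottnt{e}$ holds immediately after the \assume passes, and that this fact survive the join operation along every path from $\ottmv{L_{{\mathrm{2}}}}$ to $\ottmv{L_{{\mathrm{1}}}}$. These conditions are implicit in the hypothesis of the pass, but making them precise in the proof may require formalizing the analysis as recording symbolic boolean facts of the form $\ottnt{e} \mapsto \textsfb{true}$ after \assume instructions, in addition to the variable-to-literal bindings explicitly discussed in Section~\ref{sec:specl-constantprop}. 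Apart from this point, the proof should follow the same pattern as the preceding lemmas: no novel bisimulation needs to be invented, only the composition of existing ones.
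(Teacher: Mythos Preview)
Your proposal is correct and follows essentially the same decomposition as the paper: copying the predicate is justified by the assumption-transparency (predicate-addition) lemma, and the subsequent simplification at $\ottmv{L_{{\mathrm{1}}}}$ by Lemma~\ref{lem:constfold}. Your third step (removing a $\textsfb{true}$ conjunct) is a reasonable elaboration the paper leaves implicit, and your worry about step two is unnecessary since the pass is defined to apply only when constant folding already succeeds, so that hypothesis is part of the lemma's premise.
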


\begin{proof}
Copying is bisimilar due to the assumption transparency invariant and to the fact that
the constant-folded version is bisimilar due to Lemma \ref{lem:constfold}.
\end{proof}

\subsection{Assume Composition}

Let $\ottmv{V_{{\mathrm{1}}}},\ottmv{V_{{\mathrm{2}}}},\ottmv{V_{{\mathrm{3}}}}$ be three versions of a function $\ottmv{F}$ with
instruction streams $\ottnt{I_{{\mathrm{1}}}},\ottnt{I_{{\mathrm{2}}}},\ottnt{I_{{\mathrm{3}}}}$, and $\ottmv{L_{{\mathrm{1}}}},\ottmv{L_{{\mathrm{2}}}},\ottmv{L_{{\mathrm{3}}}}$
labels, such that $ \ottnt{I_{{\mathrm{1}}}} ( \ottmv{L_{{\mathrm{1}}}} )  =  \textsfb{assume} ~ \ottnt{e_{{\mathrm{1}}}} ~ \ottkw{else} ~   \ottmv{F} .\kern -0.5pt \ottmv{V_{{\mathrm{2}}}} .\kern -0.5pt \ottmv{L_{{\mathrm{2}}}}  ~ \mathit{VA}_{{\mathrm{1}}}  $ and $ \ottnt{I_{{\mathrm{2}}}} ( \ottmv{L_{{\mathrm{2}}}} )  =  \textsfb{assume} ~ \ottnt{e_{{\mathrm{2}}}} ~ \ottkw{else} ~   \ottmv{F} .\kern -0.5pt \ottmv{V_{{\mathrm{3}}}} .\kern -0.5pt \ottmv{L_{{\mathrm{3}}}}  ~ \mathit{VA}_{{\mathrm{2}}}  $.
The composition pass creates a new program $\ottnt{P_{{\mathrm{2}}}}$ from $\ottnt{P_{{\mathrm{1}}}}$
identical but the \assume $ \ottnt{P_{{\mathrm{2}}}} ( \ottmv{F} . \ottmv{V_{{\mathrm{1}}}} . \ottmv{L_{{\mathrm{1}}}} ) $
is replaced by $ \textsfb{assume} ~ \ottnt{e_{{\mathrm{1}}}}  \ottsym{,}  \ottnt{e_{{\mathrm{2}}}} ~ \ottkw{else} ~   \ottmv{F} .\kern -0.5pt \ottmv{V_{{\mathrm{3}}}} .\kern -0.5pt \ottmv{L_{{\mathrm{3}}}}  ~  \mathit{VA}_{{\mathrm{2}}}  \circ  \mathit{VA}_{{\mathrm{1}}}   $ where
$(  [{ \mathit{x}_{{\mathrm{1}}}  \ottsym{=}  \ottnt{e_{{\mathrm{1}}}}  \ottsym{,} \, .. \, \ottsym{,}  \mathit{x}_{\ottmv{n}}  \ottsym{=}  \ottnt{e_{\ottmv{n}}}  \kern 0.033em}]   \circ  \mathit{VA} )$ is defined as
$[x_1 = e_1\{\frac{\mathit{VA}(y)}{y} \forall y \in
  \mathit{VA}\}, .., x_n = e_n\{\frac{\mathit{VA}(y)}{y} \forall
  y \in \mathit{VA}\}]$.

\begin{lemma}
  Let $\ottnt{P_{{\mathrm{2}}}}$ be the result of composing \assume instructions at $\ottmv{L_{{\mathrm{1}}}}$
  and $\ottmv{L_{{\mathrm{2}}}}$. $\ottnt{P_{{\mathrm{1}}}}$ and $\ottnt{P_{{\mathrm{2}}}}$ are bisimilar.
\end{lemma}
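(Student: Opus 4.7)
The plan is to exhibit a weak bisimulation $R$ between $P_1$ and $P_2$ that acts as the identity-up-to-program-swap outside of $F.V_1.L_1$, and is enlarged by a few witness pairs to accommodate the behaviour of the composed \assume. Concretely, let $(C_1, C_2) \in R$ whenever $C_2 = C_1[P_1 \leftarrow P_2]$; we then check bisimilarity case by case and, where the two programs take different paths, appeal to the deoptimization invariants from \autoref{subsec:formal-invariants} to supply the missing match.

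Reductions over a configuration whose current instruction is not $P_1(F.V_1.L_1)$ are immediate: the instruction streams of $P_1$ and $P_2$ agree on everything else, so both sides take the same step and land in another identity pair. The interesting analysis is concentrated at a configuration $C$ over $F.V_1.L_1$ with environment $E$, split into three cases based on the truth of $e_1$ and $e_2$ in $E$. If $e_1$ holds and $e_2$ also holds in $E$, both programs silently advance to $(L_1{+}1)$, yielding another $R$-pair. If $e_1$ fails in $E$, both programs deoptimize: $P_1$ to $F.V_2.L_2$ with environment $VA_1(E)$, $P_2$ to $F.V_3.L_3$ with environment $(VA_2 \circ VA_1)(E)$. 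A substitution lemma for varmap composition, showing $(VA_2 \circ VA_1)(E) = VA_2(VA_1(E))$, together with assumption transparency applied to the second \assume at $F.V_2.L_2$ in $P_1$, witnesses that these two landing configurations are bisimilar and so belong in $R$.

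The subtle case, and the main obstacle, is when $e_1$ holds but $e_2$ does not hold in $E$: then $P_1$ simply advances in $V_1$ (it never consults $e_2$, which would normally be checked in $VA_1(E)$), while $P_2$ deoptimizes directly to $V_3.L_3$. To close the diagram I plan to use assumption transparency on $P_1$ twice, chained by transitivity of weak bisimulation: $C$ at $F.V_1.L_1$ is bisimilar in $P_1$ to the deoptimization target at $F.V_2.L_2$ with environment $VA_1(E)$, which is in turn bisimilar to the further deoptimization target at $F.V_3.L_3$ with environment $VA_2(VA_1(E))$. Combined with the substitution lemma, the latter configuration is exactly the one $P_2$ produces, so the empty reduction from $C_1$ matches $P_2$'s deoptimization. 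This is the step that fundamentally requires the transparency invariant; without it, the extra deoptimization introduced by composing would not be justifiable. The symmetric direction (matching $P_1$'s reductions from $P_2$) is obtained by the same reasoning.

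A small amount of technical care is needed for the predicate $e_2$ itself: in $P_1$ it is evaluated in $VA_1(E)$, but in $P_2$'s composed \assume it appears under the original environment $E$, so the composition must be understood to substitute $e_2$ through $VA_1$ in the same manner as the varmap is composed (this is the natural reading of $\circ$, since otherwise $e_2$ may mention variables not in scope at $V_1.L_1$). Once this substitution is made explicit, the evaluation of $e_2$ in $E$ agrees with its evaluation in $VA_1(E)$ by the same substitution lemma used for varmaps, and the three cases above carry through uniformly. With $R$ shown to be a bisimulation, $\mathsf{start}(P_1)$ and $\mathsf{start}(P_2)$ are related, which is the definition of equivalence.
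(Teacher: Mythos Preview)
Your approach is essentially the paper's: a case analysis on the truth of $e_1$ and $e_2$, the substitution lemma for $\mathit{VA}_2 \circ \mathit{VA}_1$, and appeals to assumption transparency where the two programs diverge. The paper splits into four cases (treating ``$e_1$ fails, $e_2$ holds'' and ``$e_1$ fails, $e_2$ fails'' separately, the latter matched by two silent steps in $P_1$ without invoking transparency), whereas you fold these into a single ``$e_1$ fails'' case handled uniformly by transparency at $V_2.L_2$; both are fine. Your explicit remark that $e_2$ must be substituted through $\mathit{VA}_1$ for the composed predicate to be well-scoped and to agree with its evaluation in $\mathit{VA}_1(E)$ is a point the paper leaves implicit, and it is worth keeping.
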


\begin{proof}
  For $ \ottnt{C_{{\mathrm{1}}}}  \nto[   \tau   ]  \ottnt{C'_{{\mathrm{1}}}} $, $ \ottnt{C_{{\mathrm{2}}}}  \nto[   \tau   ]  \ottnt{C'_{{\mathrm{2}}}} $ over $\ottmv{L_{{\mathrm{1}}}}$ in $\ottnt{P_{{\mathrm{1}}}}$, $\ottnt{P_{{\mathrm{2}}}}$, we distinguish four cases:
  \begin{enumerate}
    \item If $\ottnt{e_{{\mathrm{1}}}}$ and $\ottnt{e_{{\mathrm{2}}}}$ both hold, the \assume does not
      deoptimize in $\ottnt{P_{{\mathrm{1}}}}$ and $\ottnt{P_{{\mathrm{2}}}}$ and they behave identically.
    \item If $\ottnt{e_{{\mathrm{1}}}}$ and $\ottnt{e_{{\mathrm{2}}}}$ both fail, the original program
      deoptimizes twice; the modified $\ottnt{P_{{\mathrm{2}}}}$ only once. Assuming
      deoptimizing under the combined varmap $\ottnt{M} \, \ottnt{E} \,  \mathit{VA}_{{\mathrm{2}}}  \circ  \mathit{VA}_{{\mathrm{1}}}   \rightsquigarrow  \ottnt{E''}$
      produces an environment equivalent to $\ottnt{M} \, \ottnt{E} \, \mathit{VA}_{{\mathrm{1}}}  \rightsquigarrow  \ottnt{E'}$ and $\ottnt{M} \, \ottnt{E'} \, \mathit{VA}_{{\mathrm{2}}}  \rightsquigarrow  \ottnt{E''}$ the final
      configuration is identical.  Since the extra intermediate step is
      silent, both programs are bisimilar.
    \item If $\ottnt{e_{{\mathrm{1}}}}$ fails and $\ottnt{e_{{\mathrm{2}}}}$ holds, we deoptimize
      to $\ottmv{V_{{\mathrm{3}}}}$ in $\ottnt{P_{{\mathrm{2}}}}$, but to $\ottmv{V_{{\mathrm{2}}}}$ in $\ottnt{P_{{\mathrm{1}}}}$.
      As shown in case (2) the deoptimized configuration $\ottnt{C'_{{\mathrm{2}}}}$ over $\ottmv{L_{{\mathrm{3}}}}$ is equivalent to a post-deoptimization configuration of $\ottnt{C'_{{\mathrm{1}}}}$, which, due to assumption transparency is bisimilar to $\ottnt{C'_{{\mathrm{1}}}}$ itself.
    \item If $\ottnt{e_{{\mathrm{1}}}}$ holds and $\ottnt{e_{{\mathrm{2}}}}$ fails, deoptimize
      to $\ottmv{V_{{\mathrm{3}}}}$ in $\ottnt{P_{{\mathrm{2}}}}$ but not in $\ottnt{P_{{\mathrm{1}}}}$.
      Again $\ottnt{C'_{{\mathrm{2}}}}$ is equivalent to a post-deoptimization state, which is, transitively, bisimilar to $\ottnt{C'_{{\mathrm{1}}}}$. 
  \end{enumerate}
Since a well-formed \assume has only unique names in the
deoptimization metadata, it is simple to show the assumption in (2) with a
substitution lemma.
\end{proof}

\section{Discussion}\label{sec:discussion}
Our formalization raises new questions and makes apparent certain design
choices.  In this section, we present insights into the design space for JIT
implementations.

\begin{wrapfigure}{r}{6.8cm}
  \vskip 1mm
$  {\small \begin{array}{lll}
                                                      &   \mathsf{L\kern -0.25pt \scalebox{0.87}{$\mathsf{ loop }$} }   &   \textsfb{branch} ~   \mathsf{ z }    \neq   0  ~  \mathsf{L\kern -0.25pt \scalebox{0.87}{$\mathsf{ body }$} }  ~  \mathsf{L\kern -0.25pt \scalebox{0.87}{$\mathsf{ done }$} }    \\  \,  &   \mathsf{L\kern -0.25pt \scalebox{0.87}{$\mathsf{ body }$} }   &   \textsfb{call} ~  \mathsf{ x }  \nobreak\hspace{0pt}=\nobreak\hspace{0pt}  {  \scalebox{0.97}{$\mathsf{ dostuff }$}  }  ( \, )   \\  \,  &      &   \textsfb{var} ~  \mathsf{ y }  \nobreak\hspace{0pt}=\nobreak\hspace{0pt}   \mathsf{ x }    \ottsym{+}   13    \\  \,  &      &  \textsfb{assume} ~ \ottnt{e} ~ \ottkw{else} ~   \ottmv{F} .\kern -0.5pt \ottmv{V} .\kern -0.5pt \ottmv{L}  ~  [{  \mathsf{ x }   \ottsym{=}   \mathsf{ x }   \ottsym{,}   \mathsf{ y }   \ottsym{=}    \mathsf{ x }    \ottsym{+}   13   \kern 0.033em}]    \\  \,  &      &  \textsfb{drop} \,  \mathsf{ y }   \\  \,  &      &  \textsfb{goto} \,  \mathsf{L\kern -0.25pt \scalebox{0.87}{$\mathsf{ loop }$} }   \\  \,  &   \mathsf{L\kern -0.25pt \scalebox{0.87}{$\mathsf{ done }$} }   & \dots \\  
                                                \end{array} }  $
\caption{Deoptimization keeps variables alive.}\label{fig:hoist-ex} 
\vspace{-4mm}\end{wrapfigure}

\paragraph{The Cost of Assuming}
Assumptions restrict optimizations. Variables needed for deoptimization must
be kept alive.  Consider \autoref{fig:hoist-ex}, where an \assume is at the
end of a loop.  As $  \mathsf{ y }  $ is not modified, it can be removed.  There
is enough information to reconstruct it if needed.  On the other hand,
$  \mathsf{ x }  $ cannot be synthesized out of thin air because it is computed
in another function.  Additionally, \assume restricts code motion in two cases.  First,
side-effecting code cannot be moved over an \assume.  Second, \assume
instructions cannot be hoisted over instructions that interfere with variables
mentioned in metadata.  It is possible to move \assume forward, since data
dependencies can be resolved by taking a snapshot of the environment at the
original location.  For the reverse effect, we support hoisting the
predicate from one \assume to another (see \autoref{sec:hoist-assumpt}).
Moving \assume instructions up is tricky and also unnecessary, since in
combination those two primitives allow moving checks to any position.  In
the above example, if $e$ is invariant in the loop body and there is an
\assume before $  \mathsf{L\kern -0.25pt \scalebox{0.87}{$\mathsf{ loop }$} }  $, the predicate can be hoisted out of the
loop.  If the \assume is only relevant for a subset of the instructions
after the current location, it can be moved down as a whole.

\paragraph{Lazy Deoptimization}
The runtime cost of an \assume is the cost of monitoring the predicates.
Suppose we speculate that the contents of an array remain unchanged
throughout a loop.  An implementation would have to check every single
element of the array.  An eager strategy where predicates  are checked at every
iteration is wasteful.  It is more efficient to associate checks to
operations that may invalidate the predicates, such as array writes, to
invalidate the assumption, a strategy sometimes known as \emph{lazy
  deoptimization}.  We could implement dependencies by separating
\begin{wrapfigure}[31]{r}{6.5cm}
$  {\small
                                                          \begin{array}{l}
                                                              \scalebox{0.97}{$\mathsf{ stuck }$}   (  \,  \kern 0.04em ) \\
                                                                   \begin{array}{llll}
                                                                     \kern 1.5pt   \hspace{2.5mm}  \mathsf{V\kern -0.25pt \scalebox{0.87}{$\mathsf{ base }$} }   \\
                                                                             \begin{array}{l!{\,\color{gray}\vrule}lll}
                                                                               &      &   \textsfb{call} ~  \mathsf{ debug }  \nobreak\hspace{0pt}=\nobreak\hspace{0pt}  {  \scalebox{0.97}{$\mathsf{ debug }$}  }  ( \, )   \\  \,  &   \mathsf{L\kern -0.25pt \scalebox{0.87}{$\mathsf{ h }$} }   &   \textsfb{branch} ~   \mathsf{ x }    <   1000000  ~  \mathsf{L\kern -0.25pt \scalebox{0.87}{$\mathsf{ o }$} }  ~  \mathsf{L\kern -0.25pt \scalebox{0.87}{$\mathsf{ rt }$} }    \\  \,  &   \mathsf{L\kern -0.25pt \scalebox{0.87}{$\mathsf{ o }$} }   &   \textsfb{branch} ~  \mathsf{ debug }  ~  \mathsf{L\kern -0.25pt \scalebox{0.87}{$\mathsf{ slow }$} }  ~  \mathsf{L\kern -0.25pt \scalebox{0.87}{$\mathsf{ fast }$} }    \\  \,  &   \mathsf{L\kern -0.25pt \scalebox{0.87}{$\mathsf{ slow }$} }   & \dots \\  \,  &   \mathsf{L\kern -0.25pt \scalebox{0.87}{$\mathsf{ fast }$} }   & \dots \\  \,  &      &  \textsfb{goto} \,  \mathsf{L\kern -0.25pt \scalebox{0.87}{$\mathsf{ h }$} }   \\  \,  &   \mathsf{L\kern -0.25pt \scalebox{0.87}{$\mathsf{ rt }$} }   & \dots \\  
                                                                             \end{array} \vspace{0.25em} \\  
                                                                   \end{array} \\  
                                                          \end{array} }  $

  \caption{Long running execution.}\label{fig:osr-in-ex}
  \vskip 8mm
$  {\small
                                                          \begin{array}{l}
                                                              \scalebox{0.97}{$\mathsf{ cont }$}   (   \mathsf{ x }   \kern 0.04em ) \\
                                                                   \begin{array}{llll}
                                                                     \kern 1.5pt   \hspace{2.5mm}  \mathsf{V\kern -0.25pt \scalebox{0.87}{$\mathsf{ opt }$} }   \\
                                                                             \begin{array}{l!{\,\color{gray}\vrule}lll}
                                                                               &   \mathsf{L\kern -0.25pt \scalebox{0.87}{$\mathsf{ h }$} }   &   \textsfb{branch} ~   \mathsf{ x }    <   1000000  ~  \mathsf{L\kern -0.25pt \scalebox{0.87}{$\mathsf{ fast }$} }  ~  \mathsf{L\kern -0.25pt \scalebox{0.87}{$\mathsf{ rt }$} }    \\  \,  &   \mathsf{L\kern -0.25pt \scalebox{0.87}{$\mathsf{ fast }$} }   & \dots \\  \,  &      &  \textsfb{goto} \,  \mathsf{L\kern -0.25pt \scalebox{0.87}{$\mathsf{ h }$} }   \\  \,  &   \mathsf{L\kern -0.25pt \scalebox{0.87}{$\mathsf{ rt }$} }   & \dots \\  
                                                                             \end{array} \vspace{0.25em} \\  
                                                                   \end{array} \\  
                                                          \end{array} }  $

  \caption{Switching to optimized code.}\label{fig:osr-in-ex2}

\vskip 8mm
    $  {\small
                                                          \begin{array}{l}
                                                              \scalebox{0.97}{$\mathsf{ undo }$}   (  \,  \kern 0.04em ) \\
                                                                   \begin{array}{llll}
                                                                     \kern 1.5pt   \hspace{2.5mm}  \mathsf{V\kern -0.25pt \scalebox{0.87}{$\mathsf{ s123 }$} }   \\
                                                                             \begin{array}{l!{\,\color{gray}\vrule}lll}
                                                                               &   \mathsf{L\kern -0.25pt \scalebox{0.83}{$\mathsf{ 0 }$} }   &  \textsfb{assume} ~ \ottnt{e_{{\mathrm{1}}}}  \ottsym{,}  \ottnt{e_{{\mathrm{2}}}}  \ottsym{,}  \ottnt{e_{{\mathrm{3}}}} ~ \ottkw{else} ~    \scalebox{0.97}{$\mathsf{ undo }$}  .\kern -0.5pt  \mathsf{V\kern -0.25pt \scalebox{0.87}{$\mathsf{ s12 }$} }  .\kern -0.5pt  \mathsf{L\kern -0.25pt \scalebox{0.83}{$\mathsf{ 0 }$} }   ~  [{  \dots   \kern 0.033em}]    \\  
                                                                             \end{array} \vspace{0.25em} \\  \,  \hspace{2.5mm}  \mathsf{V\kern -0.25pt \scalebox{0.87}{$\mathsf{ s12 }$} }   \\
                                                                             \begin{array}{l!{\,\color{gray}\vrule}lll}
                                                                               &   \mathsf{L\kern -0.25pt \scalebox{0.83}{$\mathsf{ 0 }$} }   &  \textsfb{assume} ~ \ottnt{e_{{\mathrm{1}}}}  \ottsym{,}  \ottnt{e_{{\mathrm{2}}}} ~ \ottkw{else} ~    \scalebox{0.97}{$\mathsf{ undo }$}  .\kern -0.5pt  \mathsf{V\kern -0.25pt \scalebox{0.87}{$\mathsf{ s1 }$} }  .\kern -0.5pt  \mathsf{L\kern -0.25pt \scalebox{0.83}{$\mathsf{ 0 }$} }   ~  [{  \dots   \kern 0.033em}]    \\  
                                                                             \end{array} \vspace{0.25em} \\  \,  \hspace{2.5mm}  \mathsf{V\kern -0.25pt \scalebox{0.87}{$\mathsf{ s1 }$} }   \\
                                                                             \begin{array}{l!{\,\color{gray}\vrule}lll}
                                                                               &   \mathsf{L\kern -0.25pt \scalebox{0.83}{$\mathsf{ 0 }$} }   &  \textsfb{assume} ~ \ottnt{e_{{\mathrm{1}}}} ~ \ottkw{else} ~    \scalebox{0.97}{$\mathsf{ undo }$}  .\kern -0.5pt  \mathsf{V\kern -0.25pt \scalebox{0.87}{$\mathsf{ base }$} }  .\kern -0.5pt  \mathsf{L\kern -0.25pt \scalebox{0.83}{$\mathsf{ 0 }$} }   ~  [{  \dots   \kern 0.033em}]    \\  
                                                                             \end{array} \vspace{0.25em} \\  
                                                                   \end{array} \\  
                                                          \end{array} } $
\caption{Undoing an isolated predicate.}\label{fig:fine-grained-deopt}
\end{wrapfigure} 
assumptions from runtime checks.  Specifically, let $  \mathsf{ GUARDS }  [{ 13 }]  =
 \textsfb{true} $ be the runtime check, where the global array $  \mathsf{ GUARDS }  $ is
a collection of all remote assumptions that can be invalidated by an
operation, such as an array assignment.  In terms of correctness, both eager
and lazy deoptimization are similar; however, we would need to prove
correctness of the dependency mechanism that modifies the global array.

\paragraph{Jumping Into Optimized Code}
We have shown how to transfer control out of optimized code.  The inverse
transition, jumping into optimized code, is interesting as well.  Consider
executing the long running loop of \autoref{fig:osr-in-ex}.  The value of
$  \mathsf{ debug }  $ is constant in the loop, yet execution is stuck in the
long running function and must branch on each iteration.  A JIT can compile
an optimized version that speculates on $  \mathsf{ debug }  $, but it may only
use it on the next invocation.  Ideally, the JIT would jump into the newly
optimized code from the slow loop; this is known as \emph{hot loop
  transfer}.  Specifically, the next time $  \mathsf{L\kern -0.25pt \scalebox{0.87}{$\mathsf{ o }$} }  $ is reached,
control is transferred to an equivalent location in the optimized version.
To do so, continuation-passing style can be used to compile a staged
continuation function from the beginning of the loop where $  \mathsf{ debug }  $
is known to be false.  The optimized continuation might look like
$  \scalebox{0.97}{$\mathsf{ cont }$}  $ in \autoref{fig:osr-in-ex2}.  In some sense, this is
easier than deoptimization because it strengthens assumptions rather than
weakening them and all the values needed to construct the state at the
target version are readily available.  

\paragraph{Fine-Grained Deoptimization}
Instead of blindly removing all assumptions on deoptimization, it is
possible to undo only failing assumptions while preserving the rest.  As
shown in \autoref{fig:fine-grained-deopt}, if $\ottnt{e_{{\mathrm{2}}}}$ fails in
version~$  \mathsf{V\kern -0.25pt \scalebox{0.87}{$\mathsf{ s123 }$} }  $, one can jump to the last version that did not
rely on this predicate.
By deoptimizing to
version~$  \mathsf{V\kern -0.25pt \scalebox{0.87}{$\mathsf{ s1 }$} }  $, assumption~$\ottnt{e_{{\mathrm{3}}}}$ must be discarded.
However,
$\ottnt{e_{{\mathrm{1}}}}  \ottsym{,}  \ottnt{e_{{\mathrm{3}}}}$ still hold, so we would like to preserve optimizations based on
those assumptions.  Using the technique mentioned above, 
execution can be transferred to a version~$  \mathsf{L\kern -0.25pt \scalebox{0.87}{$\mathsf{ s13 }$} }  $ that reintroduces $\ottnt{e_{{\mathrm{3}}}}$.  The
overall effect is that we remove only the invalidated assumption and its
optimizations.  We are not aware of an existing implementation that explores
such a strategy.

\begin{wrapfigure}{r}{6cm}
\vskip -5mm
$ {\small \begin{array}{lll}
                                                      &      & \dots \\  \,  &   \mathsf{L\kern -0.25pt \scalebox{0.87}{$\mathsf{ loop }$} }   &   \textsfb{branch} ~ \ottnt{e} ~  \mathsf{L\kern -0.25pt \scalebox{0.87}{$\mathsf{ body }$} }  ~  \mathsf{L\kern -0.25pt \scalebox{0.87}{$\mathsf{ done }$} }    \\  \,  &   \mathsf{L\kern -0.25pt \scalebox{0.87}{$\mathsf{ body }$} }   &    \mathsf{ x }  \nobreak\hspace{0pt} \leftarrow \nobreak\hspace{0pt} 0   \\  \,  &      & \dots \\  \,  &      &  \textsfb{goto} \,  \mathsf{L\kern -0.25pt \scalebox{0.87}{$\mathsf{ loop }$} }   \\  \,  &   \mathsf{L\kern -0.25pt \scalebox{0.87}{$\mathsf{ done }$} }   & \dots \\  
                                                \end{array} } $
\vskip -2mm
\caption{Loop with a dead store.}\label{fig:trace-jit-ex} 
\end{wrapfigure} 

\paragraph{Simulating a Tracing JIT}\label{sec:tracing}
A tracing JIT~\citep{dynamo00,trace-franz09} records instructions that are
executed in a trace. Branches and redundant checks can be discarded from the
trace. Typically, a trace corresponds to a path through a hot loop.  On
subsequent runs the trace is executed directly. The JIT ensures that
execution follows the same path, otherwise it deoptimizes back to the
original program.  In this context \citet{guo11} develop a framework for
reasoning about optimizations applied to traces.  One of their
results is that dead store elimination is unsound, because the trace is only
a partial view of the entire program.  For example, a variable $  \mathsf{ x }  $
might be assigned to within a trace, but never used.  However, it is unsound
to remove the assignment, because $x$ might be used outside the trace.  We
can simulate their tracing formalism in \sourir.  Consider a variant of
their running example shown in \autoref{fig:trace-jit-ex}, a trace of the
loop $\textbf{while}\; e\; (x \leftarrow 0;\; \ldots)$ embedded in a larger
context.
Instead of a JIT that records instructions, assume only branch targets are
recorded.  For this example, suppose the two targets $  \mathsf{L\kern -0.25pt \scalebox{0.87}{$\mathsf{ body }$} }  $ and
$  \mathsf{L\kern -0.25pt \scalebox{0.87}{$\mathsf{ done }$} }  $ are recorded, which means the loop body executed once and
then exited.  In other words, the loop condition~$\ottnt{e}$ was $ \textsfb{true} $ the
first time and $ \textsfb{false} $ the second time.  The compiler could unroll the
loop twice and assert $\ottnt{e}$ for the first iteration and $  \neg    \ottnt{e} $ for the
second iteration (left).  Then  unreachable code elimination yields the
right hand side, resembling a trace.
\begin{table}[H]
  \vskip -4mm
  \begin{tabular}{@{}c@{}c}
$ {\small \begin{array}{lll}
                                                      &      & \dots \\  \,  &      &  \textsfb{assume} ~ \ottnt{e} ~ \ottkw{else} ~    \mathsf{F}  .\kern -0.5pt  \mathsf{V\kern -0.25pt \scalebox{0.87}{$\mathsf{ base }$} }  .\kern -0.5pt  \mathsf{L\kern -0.25pt \scalebox{0.87}{$\mathsf{ loop }$} }   ~  [{  \mathsf{ x }   \ottsym{=}   \mathsf{ x }   \ottsym{,}   \dots   \kern 0.033em}]    \\  \,  &      &   \textsfb{branch} ~ \ottnt{e} ~  \mathsf{L\kern -0.25pt \scalebox{0.87}{$\mathsf{ body_0 }$} }  ~  \mathsf{L\kern -0.25pt \scalebox{0.87}{$\mathsf{ done }$} }    \\  \,  &   \mathsf{L\kern -0.25pt \scalebox{0.87}{$\mathsf{ body_0 }$} }   &    \mathsf{ x }  \nobreak\hspace{0pt} \leftarrow \nobreak\hspace{0pt} 0   \\  \,  &      & \dots \\  \,  &      &  \textsfb{assume} ~   \neg    \ottnt{e}  ~ \ottkw{else} ~    \mathsf{F}  .\kern -0.5pt  \mathsf{V\kern -0.25pt \scalebox{0.87}{$\mathsf{ base }$} }  .\kern -0.5pt  \mathsf{L\kern -0.25pt \scalebox{0.87}{$\mathsf{ loop }$} }   ~  [{  \mathsf{ x }   \ottsym{=}   \mathsf{ x }   \ottsym{,}   \dots   \kern 0.033em}]    \\  \,  &      &   \textsfb{branch} ~ \ottnt{e} ~  \mathsf{L\kern -0.25pt \scalebox{0.87}{$\mathsf{ body_1 }$} }  ~  \mathsf{L\kern -0.25pt \scalebox{0.87}{$\mathsf{ done }$} }    \\  \,  &   \mathsf{L\kern -0.25pt \scalebox{0.87}{$\mathsf{ body_1 }$} }   &    \mathsf{ x }  \nobreak\hspace{0pt} \leftarrow \nobreak\hspace{0pt} 0   \\  \,  &      & \dots \\  \,  &      &  \textsfb{goto} \,  \mathsf{L\kern -0.25pt \scalebox{0.87}{$\mathsf{ loop }$} }   \\  \,  &   \mathsf{L\kern -0.25pt \scalebox{0.87}{$\mathsf{ done }$} }   & \dots \\  
                                                \end{array} } $
    &
$ {\small \begin{array}{lll}
                                                      &      & \dots \\  \,  &      &  \textsfb{assume} ~ \ottnt{e} ~ \ottkw{else} ~    \mathsf{F}  .\kern -0.5pt  \mathsf{V\kern -0.25pt \scalebox{0.87}{$\mathsf{ base }$} }  .\kern -0.5pt  \mathsf{L\kern -0.25pt \scalebox{0.87}{$\mathsf{ loop }$} }   ~  [{  \mathsf{ x }   \ottsym{=}   \mathsf{ x }   \ottsym{,}   \dots   \kern 0.033em}]    \\  \,  &      &    \mathsf{ x }  \nobreak\hspace{0pt} \leftarrow \nobreak\hspace{0pt} 0   \\  \,  &      & \dots \\  \,  &      &  \textsfb{assume} ~   \neg    \ottnt{e}  ~ \ottkw{else} ~    \mathsf{F}  .\kern -0.5pt  \mathsf{V\kern -0.25pt \scalebox{0.87}{$\mathsf{ base }$} }  .\kern -0.5pt  \mathsf{L\kern -0.25pt \scalebox{0.87}{$\mathsf{ loop }$} }   ~  [{  \mathsf{ x }   \ottsym{=}   \mathsf{ x }   \ottsym{,}   \dots   \kern 0.033em}]    \\  \,  &      & \dots \\  
                                                \end{array} } $
  \end{tabular}
  \vspace{-3mm}
\end{table} %
\noindent Say $  \mathsf{ x }  $ is not accessed after the store in this
optimized version.  In \sourir, it is obvious why dead store elimination of
$  \mathsf{ x }  $ would be unsound: the deoptimization metadata indicates that
$  \mathsf{ x }  $ is needed for deoptimization and the store operation can only
be removed it can be replayed.  In this specific example, a constant
propagation pass could update the metadata to materialize the write of $0$,
only when deoptimizing at the second \assume.  But, before the code can be
reduced, loop unrolling might result in intermediate versions that are much
larger than the original program.  In contrast, tracing JITs can handle this
case without the drastic expansion in code size~\citep{trace-franz09}, but
lose more information about instructions outside of the trace.

\section{Conclusions}
  \label{sec:conclusion}
Speculative optimizations are key to just-in-time optimization of dynamic
languages. As these optimizations depend on predicates about the program
state, the language implementation must monitor the validity of predicates
and be ready to deoptimize the program if a predicate is invalidated. While,
many modern compiler rely on this approach, the interplay between
optimization and deoptimization often remains opaque.

Our contribution is to show that when the predicates and the deoptimization
metadata are reified in the program representation, it becomes quite easy to
define correct program transformations that are deoptimization aware. In
this work we extend the intermediate representation with one
new instruction, \assume, which plays the double role of checking for the
validity of predicates and specifying the actions required to deoptimize the
program. Program transformations can inspect both the predicates that are
being monitored and the deoptimization metadata and transform them when
needed.  The formalization presented here is for one particular intermediate
language that we hope to be representative of a typical dynamic language.
We present a bisimulation proof between multiple versions of the same function, optimized under different assumptions.
We formalize deoptimization invariants between
versions and show that they enable very simple proofs
for standard compiler optimizations, constant folding, unreachable code
elimination, and function inlining. We also prove correct three
optimizations that are specifically dealing with deoptimizations, namely
unrestricted deoptimization, predicate hoisting, and assume composition.

There are multiple avenues of future investigation. The optimizations presented
here rely on intraprocedural analysis and the granularity of deoptimization
is a whole function.  If we were to extend this work to interprocedural
analysis, it would become much trickier to determine what functions are to
be invalidated as a speculation in one function may allow optimizations in
many other functions.  The current representation forces to check predicates
before each use, but some predicates are cheaper to check by monitoring
operations that could invalidate them. To do this would require changes to
our model as the \assume instruction would need to be split between a
monitor and a deoptimization point. Lastly, the expressive power of
predicates is an interesting question as there is a clear trade-off ---
richer predicates may allow more optimizations but are likely to be costlier
to monitor.

\begin{acks}
Jean-Marie Madiot provided guidance on the use and limitations of various
notions of bisimulation. In particular, he suggested adding a non-silent
\code{stop} transition to recover equi-termination from weak
bisimilarity. Francesco Zappa Nardelli helped with the motivation and
presentation of our work.  We thank \citet{sew07} for writing and maintaining
Ott.  Our work supported by the National Science Foundation under Grants
CCF--1544542, CCF--1318227, CCF--1618732, ONR award 503353, and the European
Research Council (ERC) under the European Union's Horizon 2020 research and
innovation program (grant agreement 695412). Any opinions, findings, and
conclusions expressed in this material may be those of the authors and
likely do not reflect the views of our funding agencies.
\end{acks}

\clearpage

\bibliography{main}

\end{document}